\newcommand\reallywidehat[1]{%
\savestack{\tmpbox}{\stretchto{%
  \scaleto{%
    \scalerel*[\widthof{\ensuremath{#1}}]{\kern-.6pt\bigwedge\kern-.6pt}%
    {\rule[-\textheight/2]{1ex}{\textheight}}%WIDTH-LIMITED BIG WEDGE
  }{\textheight}%
}{0.5ex}}%
\stackon[1pt]{#1}{\tmpbox}%
}
\numberwithin{equation}{section}
\newcommand{\supp}{\mbox{\rm supp}}
\newcommand{\dens}{\mbox{\rm dens}}
\newcommand{\R}{{\mathbb R}}
\newcommand{\C}{{\mathbb C}}
\newcommand{\mig}{{\mathcal M}^\infty}
\newcommand{\cL}{{\mathcal L}}
\newcommand{\hG}{\widehat{G}}
\newcommand{\oplam}{\mbox{\Large $\curlywedge$}}
 \newtheorem{theorem}{Theorem}[section]
 \newtheorem{lemma}[theorem]{Lemma}
 \newtheorem{prop}[theorem]{Proposition}
 \newtheorem{cor}[theorem]{Corollary}
 \newtheorem{defi}[theorem]{Definition}
\newtheoremstyle{rremark}%
       {1.8ex\@plus1ex}                % space above
       {2.1ex\@plus1ex\@minus.5ex}      % space belowi
       {\normalfont}        % body font
       {0pt}                   % indent amount
       {\bfseries}           % Theoremhead font
       {.}                  % Punctuation after theorem head
       {.5em}               % Space after theorem head
       {}                   % Theorem head spec (can be left empty: `normal')
\theoremstyle{rremark}
\newtheorem{remark}[theorem]{Remark}
\begin{document}
\title{Pure point diffraction and Poisson summation }

\author{Christoph Richard}
\address{Department f\"{u}r Mathematik, Friedrich-Alexander-Universit\"{a}t Erlangen-N\"{u}rnberg,
Cauerstrasse 11, 91058 Erlangen, Germany}
\email{christoph.richard@fau.de}

\author{Nicolae Strungaru}
\address{Department of Mathematical Sciences, MacEwan University \\
10700 “ 104 Avenue, Edmonton, AB, T5J 4S2\\
and \\
Department of Mathematics\\
Trent University \\
Peterborough, ON
and \\
Institute of Mathematics ``Simon Stoilow''\\
Bucharest, Romania}
\email{strungarun@macewan.ca}
\urladdr{http://academic.macewan.ca/strungarun/}

 \maketitle

\begin{abstract}
We prove that the diffraction formula for regular model sets is equivalent to the Poisson Summation Formula for the underlying lattice. This is achieved using Fourier analysis of unbounded measures on locally compact abelian groups, as developed by Argabright and de Lamadrid. We also discuss related diffraction results for certain classes of non-regular so-called weak model sets.
\end{abstract}

\renewcommand{\thefootnote}{\fnsymbol{footnote}}
\footnotetext{\emph{Key words:} (weak) model set, cut-and-project scheme, Poisson Summation Formula, diffraction}
\footnotetext{ \emph{MSC numbers: 52C23, 42.50, 78A45}, \emph{PACS numbers: 02.30.Nw, 02.30.Px, 42.25.Fx}}
\renewcommand{\thefootnote}{\arabic{footnote}}

\section{Introduction}

Research triggered by the experimental discovery of quasicrystals, see~\cite{BG2} for a recent mathematical monograph, provided examples of non-periodic structures with long-range order. Their diffraction spectrum consists of Bragg peaks only. A mathematical abstraction of these examples are so-called regular model sets, which are certain projections of a subset of a higher-dimensional lattice. In fact, model sets have been introduced and intensively studied before the discovery of quasicrystals by Meyer \cite{me70,me72}, and they have later been re-investigated and advocated by Moody, see e.g.~\cite{RVM3}.

A central result in mathematical diffraction theory states that regular model sets have a pure point diffraction spectrum. The particular form of their Bragg peaks has been proved in the Euclidean setting by Hof \cite{Hof1,Hof3}, using the underlying lattice Poisson Summation Formula (PSF). When it became clear that examples with a non-Euclidean embedding space exist, such as limit-periodic model sets \cite{BMS}, their diffraction has been studied in general $\sigma$-compact locally compact abelian (LCA) groups. Only then pure point diffractivity had been fully established. However the corresponding proofs did not rely on the PSF, but instead used dynamical systems, see e.g.~Schlottmann \cite{Martin2} for so-called repetitive regular model sets, and \cite{Nicu14, KR} for recent results in that direction.
On the other hand, it had been remarked by Lagarias that an alternative proof based on the PSF should be possible. In the Euclidean setting, such a proof has recently been given by Baake--Grimm \cite[Thm.~9.4]{BG2} for a subclass of regular model sets. Yet another approach to pure point diffractivity uses almost periodic measures, see e.g.~Solomyak \cite{So98}, Baake--Moody \cite{BM} and the recent work \cite{NS11}. These results have also been extended to weighted Dirac combs with weight functions on the embedding space of sufficiently fast decay, which are called admissible in \cite{CR,LR}.

In this article, we will reprove the diffraction formula for regular model sets on general $\sigma$-compact LCA groups using the PSF of the underlying lattice.
Working beyond Euclidean space, we cannot follow the tempered distribution approach as in \cite{Hof1, BG2}, but rely instead on Fourier theory of unbounded measures, as in Argabright and de Lamadrid~\cite{ARMA1, BF}. Whereas working with measures only seems a severe restriction, this is in fact natural, as diffraction may be described by a measure. Indeed, the diffraction measure of the infinite idealisation of a finite specimen decomposes into a pure point and a continuous part, which correspond to the Bragg peaks and to the continuous component in the diffraction picture.

We will extend elements of Hof's proof to the non-Euclidean setting. A crucial ingredient in Hof's approach (and in all later approaches to pure point diffractivity) is a certain uniform distribution result for regular model sets.  Proofs of uniform distribution in the general setting are based on geometric \cite{sch98} or dynamical systems arguments \cite{RVM1}. We will show that uniform distribution actually follows from the underlying lattice PSF. In the Euclidean setting, this has already been argued by Meyer \cite{me70}, see also \cite[Sec.~V.7.3]{me72}, \cite[Prop.~5.1]{mm10}.
This insight allows us to show that the lattice PSF and the diffraction formula for regular model sets can be derived from one another.

Thus our proofs shed also some new light on the Euclidean case. They justify an opinion sometimes expressed by experimentalists, that diffraction properties of quasicrystals should be deducible from those of the underlying lattice. Indeed, the uniform distribution result can be seen as a consequence of the lattice PSF. In fact, the diffraction formula of the underlying lattice is equivalent to the diffraction formula for regular model sets in this setting.

Our approach also indicates that problems equivalent to the PSF, such as function reconstruction via sampling \cite{But}, may also be successfully analysed for model sets in the abstract setting. It further suggests a method for extending cut-and-project schemes beyond the abelian case, which might provide an alternative to the recent approach in \cite{BHP16}. We remark that, for our main results, the essential property of the underlying lattice is Fourier transformability, not the lattice property itself. Indeed our approach can be extended to a large class of underlying Fourier transformable measures \cite{RicStr}.

Let us describe the structure of this article. After reviewing mathematical diffraction theory, we recall the Fourier analysis of unbounded measures in Section~\ref{prel}. Here we are particularly interested in generalised PSF for measures, which extend the lattice case, and in double transformability. In Section~\ref{sec:fapm} we consider weighted model sets and discuss their Fourier analysis. We prove that, for certain weight functions, their generalized PSF is equivalent to the PSF of the underlying lattice, and we show uniform distribution using the lattice PSF. In Section~\ref{sec:pcp}, we will use the so-called autocorrelation measure to study the diffraction of regular model sets. We show that the diffraction formula for regular model sets can be obtained by combining the generalised PSF and the density formula. Thus the diffraction formula can be obtained by a double application of the PSF for the underlying lattice. We also show that the lattice PSF follows from the diffraction formula for regular model sets. We close with some remarks on pure point diffraction for non-regular model sets and Meyer sets, and by proving double transformability of translation bounded transformable measures with Meyer set support.

\section{Elements of diffraction theory}\label{sec:ele}

We give a focussed and refined introduction to diffraction theory, see \cite{G63, C} for physical background and \cite[Section~9.1.2]{BG2} for mathematical background in the Euclidean case.
Diffraction of X-rays by matter results from scattering by the individual atoms and interference between the scattering waves.  For a finite sample with atom position set $\Lambda\subset \mathbb R^3$, one defines the structure factor $F(\boldsymbol s)=\sum_{\boldsymbol p\in\Lambda} f_{\boldsymbol p}e^{-2\pi \dot \imath\, \boldsymbol s\cdot \boldsymbol p}$, where the so-called scattering factor $f_{\boldsymbol p}$ is a complex weight associated with an atom at $\boldsymbol p$. Assume that the scattering is elastically, that the incident beam is a plane wave with wave vector $\boldsymbol k_0$, and assume that one measures diffraction at a distance $\boldsymbol r$ very large in comparison to the sample size. Then the observed intensity $I(\boldsymbol r)$ of diffraction at distance $\boldsymbol r$ is approximately given by
\begin{displaymath}
I(\boldsymbol r) = \frac{A}{|\boldsymbol r|^2} |F(\boldsymbol k-\boldsymbol k_0)|^2,
\end{displaymath}
where  $\boldsymbol k=|\boldsymbol k_0|\cdot\boldsymbol r/|\boldsymbol r|$, and $A$ is some positive normalisation constant.

For a measure-theoretic description, assume that $\Lambda\subset \mathbb R^d$ is uniformly discrete and
consider the Dirac comb $\omega=\delta_\Lambda=\sum_{p\in\Lambda} \delta_p$ of  $\Lambda$. We  assume for simplicity that all atoms have equal scattering factors $f_p=1$.
We want to infer the diffraction of $\omega$ from finite samples $\omega_n=\omega|_{B_n}$, where we restrict to centered balls $B_n$ of radius $n$.
Diffraction of the finite measure $\omega_n$ can be rephrased using a so-called Wiener diagram
\begin{displaymath}
\begin{CD}
\omega_n @>*>> \omega_n*\widetilde{\omega_n}\\
@V{\mathcal F}VV @VV{\mathcal F}V\\
{\widehat{\omega_n}} @>{|\cdot|^2}>> \widehat{\omega_n}\cdot \overline{\widehat{\omega_n}}
\end{CD}
\end{displaymath}
Here $\mathcal F$ denotes the Fourier transform, $*$ denotes convolution of measures, and  the reflected measure $\widetilde{\omega}$ is defined via $\widetilde{\omega} (f)=\overline{\omega(\widetilde f)}$ with $\widetilde f(x)=\overline{f(-x)}$. Due to the convolution theorem, the diffraction of $\omega_n$ may thus alternatively be computed as the Fourier transform of the positive definite so-called autocorrelation measure $\gamma_n=\omega_n*\widetilde{\omega_n}$. %Note that the diffraction measure allows reconstruction of the autocorrelation $\gamma_n$ but not of $\omega_n$.

For $\omega$ unbounded instead of $\omega_n$, the Wiener diagram has no direct measure-theoretic interpretation. Whereas convolution is defined only if one measure is bounded, an autocorrelation $\gamma$ of $\omega$ may however be interpreted
via the so-called Eberlein convolution
%as a vague limit of the finite autocorrelation measures, i.e.,
%
\begin{displaymath}
\gamma=\omega \circledast \widetilde \omega:=\lim_{n\to\infty} \frac{1}{\mathrm{vol}(B_n)} \omega_n*\widetilde{\omega_n},
\end{displaymath}
if this vague limit exists.
%Here $\theta$ denotes the Lebesgue measure.
%Such a limit exists at least on some subsequence of $(B_n)_{n\in\mathbb N}$ if the set $\Lambda-\Lambda$ of interatomic distances is uniformly discrete. The measure $\gamma$ is also called the Eberlein convolution of $\omega$, see e.g.~\cite{BG2}. It is a positive definite measure by construction.
Moreover, a Fourier transform of $\omega$ may make sense via tempered distributions but not as a measure if $\omega$ is non-periodic. Indeed, if both $\omega$ and $\widehat\omega$ are measures with uniformly discrete support, and if $\widehat\omega$ is positive, then $\omega$ must be supported within a finite union of lattice translates \cite{LO, Fa15}.
No difficulty arises for $\omega=\delta_\Lambda$ a lattice Dirac comb. Its Fourier transform is the measure $\widehat \omega=\mathrm{dens}(\Lambda)\cdot\delta_{\Lambda_0}$, where $\Lambda_0$ is the lattice dual to $\Lambda$. Note that this measure equation is a version of the classical PSF.  As one also has $\gamma=\mathrm{dens}(\Lambda)\cdot\omega$,  the modified Wiener diagram commutes in that situation.

More generally, if $\omega$ is not Fourier transformable as a measure, it is still  possible to analyse diffraction from a measure theoretic viewpoint if an autocorrelation $\gamma$ of $\omega$ exists. This will be true in all examples below.
In that situation, $\gamma$ will be transformable due to positive definiteness, and its Fourier transform $\widehat \gamma$ will be a positive measure. One can then infer diffraction properties of $\Lambda$ from the Lebesgue decomposition of $\widehat\gamma$. The pure point part describes the Bragg peaks, and the continuous part describes the diffuse background in the diffraction picture. Moreover, diffraction of $\Lambda$ can indeed be inferred from finite samples as the Fourier transform, when restricted to positive definite measures, is continuous, see \cite[Thm.~4.16]{BF} and \cite[Lemma 1.26]{MoSt}. From a physical viewpoint, this justifies why one may approximate a large finite sample by its infinite idealisation.

Our main result Theorem~\ref{theo:main} states that Dirac combs of regular model sets, which includes lattice Dirac combs, are pure point diffractive. In that case, their diffraction amplitudes are computed as in the finite measure case by ``squaring the Fourier-Bohr coefficients''. Moreover a modified Wiener diagram holds for a large class of weighted model sets including lattice Dirac combs, see Remark~\ref{rem:mwd}.

\section{Fourier transformability}\label{prel}

Let us fix our notation. $G$ stands for an arbitrary LCA group. For subsection \ref{FB coefficients} and all the results dependent on this subsection we also will need the assumption that $G$ is $\sigma$-compact, but for most results this is not necessary. In general, $G$ will not be assumed to be $\sigma$-compact,  whenever when we need this extra assumption we will clearly state this.
A Haar measure on $G$ will be denoted by $\theta_G$. We denote by $C_c(G)$ the space of continuous, compactly supported functions on $G$, and by $C_U(G)$ the space of uniformly continuous and bounded functions on $G$.  For $f\in L^1(G)$ we denote its Fourier transform by $\widehat f$ and its inverse Fourier transform by $\widecheck f$. Let $\hG$ denote the Pontryagin dual of $G$. Given any LCA group $G$ with Haar measure $\theta_G$, we always choose the Plancherel measure $\theta_{\widehat G}$ on the dual group $\widehat G$, i.e., the Haar measure such that the Plancherel theorem \cite[Thm.~3.4.8]{DE} holds.
Let $\mathcal M(G)$ denote the set of complex regular Radon measures on $G$.
 A measure $\mu\in\mathcal M(G)$ is \textit{translation bounded} if $\sup\{|\mu|(t+K)\,|\, t\in G\}<\infty$ for every compact $K\subset G$, where $|\mu|\in\mathcal M(G)$ is the variation measure of $\mu$. Let $\mig(G)\subset \mathcal M(G)$ denote the set of translation bounded complex regular Radon measures on $G$. If for some $1\le p\le\infty$ a function $f:G\to\mathbb C$ satisfies  $f\in L^p(|\mu|)$, we write $f\in L^p(\mu)$ for convenience.

\subsection{Poisson Summation Formula for a lattice}\label{sec:lat}

Let $L$ be  a lattice in $G$, i.e., a discrete, co-compact subgroup of $G$, and consider the lattice Dirac comb $\mu=\delta_L$.
The Dirac comb $\delta_L$ is a translation bounded  positive measure. Normalise the Haar measure on $G/L$ such that the Weil formula for the disintegration over normal subgroups \cite[Eqn.~(3.3.10)]{Rei2} holds.
Let us denote by $L_0\subset \widehat G$ the annihilator of $L$ in $\widehat G$, i.e.,
\begin{displaymath}
L_0 = \{ \chi \in \widehat{G} : \chi(x)= 1 \text{ for all } x \in L \}.
\end{displaymath}
By Pontryagin duality, the annihilator $L_0\cong \widehat{G/L}$ of $L$ is a lattice in $\widehat{G}$. It is called the lattice dual to $L$. Its Dirac comb $\delta_{L_0}$ is a translation bounded positive measure.
As Haar measure on $L$ we choose the counting measure $\theta_L=\delta_L$, and on $L_0$ we choose the Plancherel measure $\theta_{L_0}$ with respect to $G/L$. It is given by $\theta_{L_0}=\mathrm{dens}(L)\cdot\delta_{L_0}$.
Consider $KL(G):= \{ f \in C_c(G) : \widehat{f} \in L^1(\widehat{G}) \}$. The following result is well known, see e.g.~ \cite[Thm.~5.5.2]{Rei2} and \cite[Thm.~3.6.3]{DE}.

\begin{theorem}[lattice PSF]\label{PSFL-I} Let $L\subset G$ be a lattice in $G$, and let $L_0\subset \widehat G$ be its dual lattice. Then the Poisson Summation Formula
\begin{displaymath}
\langle \delta_L, f\rangle \,=\, \mathrm{dens}(L)\cdot\langle \delta_{L_0}, \widecheck{f} \rangle
\end{displaymath}
holds for all $f \in KL(G)$. \qed
\end{theorem}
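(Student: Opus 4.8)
The plan is to reduce the Poisson Summation Formula to ordinary Fourier analysis on the compact quotient $G/L$ by periodization. First I would fix $f\in KL(G)\subset C_c(G)$ and form its periodization $Pf$ on $G/L$, defined by $(Pf)(x+L)=\sum_{\ell\in L}f(x+\ell)$. Because $f$ has compact support and $L$ is discrete, for each $x$ only finitely many summands survive and the sum is locally finite, so $Pf\in C(G/L)\subset L^2(G/L,\theta_{G/L})$; moreover $(Pf)(0+L)=\sum_{\ell\in L}f(\ell)=\langle\delta_L,f\rangle$. The left-hand side of the PSF is thus a value of $Pf$, and it remains to recover that value from the Fourier side.

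Next I would identify the Fourier coefficients of $Pf$. By Pontryagin duality the dual of $G/L$ is canonically the annihilator $L_0$, a character $\chi\in L_0$ being read as a character of $G$ trivial on $L$. Unfolding the coefficient integral with the Weil formula \cite[Eqn.~(3.3.10)]{Rei2} gives, for $\chi\in L_0$,
\[
\int_{G/L}(Pf)(x+L)\,\overline{\chi(x)}\, d\theta_{G/L}(x+L)=\int_G f(x)\,\overline{\chi(x)}\, d\theta_G(x)=\widehat f(\chi).
\]
With the normalisations fixed in the text — $\theta_{G/L}$ determined by the Weil formula, whence $\theta_{G/L}(G/L)=1/\mathrm{dens}(L)$, and $\theta_{L_0}=\mathrm{dens}(L)\cdot\delta_{L_0}$ the associated Plancherel measure — the $\chi$-th Fourier coefficient of $Pf$ equals $\mathrm{dens}(L)\,\widehat f(\chi)$. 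Granting that the Fourier series of $Pf$ represents it pointwise (the crux, see below), evaluation at the identity coset gives $\langle\delta_L,f\rangle=\mathrm{dens}(L)\sum_{\chi\in L_0}\widehat f(\chi)=\mathrm{dens}(L)\,\langle\delta_{L_0},\widehat f\rangle$; since $L_0=-L_0$ and $\widehat f(\chi)=\widecheck f(-\chi)$ this equals $\mathrm{dens}(L)\,\langle\delta_{L_0},\widecheck f\rangle$, which is the assertion.

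The main obstacle is the pointwise Fourier inversion on $G/L$ invoked above, equivalently the summability $\widehat f|_{L_0}\in\ell^1(L_0)$ — Parseval on $G/L$ alone only yields membership in $\ell^2(L_0)$ — and this is precisely where the hypothesis $f\in KL(G)$, i.e.\ $\widehat f\in L^1(\widehat G)$, is essential. I would approach it symmetrically: periodize $\widehat f\in L^1(\widehat G)$ over $L_0$ to obtain a function in $L^1(\widehat G/L_0)$; the dual group of $\widehat G/L_0$ is $L$, and by another application of the Weil formula together with the inversion $f=\widecheck{\widehat f}$ (valid since $\widehat f\in L^1(\widehat G)$) the Fourier coefficients of this periodization form the sequence $(f(-\ell))_{\ell\in L}$, which is finitely supported because $f\in C_c(G)$ and $L$ is discrete. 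Hence the periodization of $\widehat f$ agrees almost everywhere with a trigonometric polynomial, in particular with a continuous function. The remaining delicate point — that the periodization series $\sum_{\chi\in L_0}\widehat f(\,\cdot\,\chi)$ actually converges pointwise and absolutely at the identity character, which is exactly the $\ell^1$-summability required above — is the analytic heart of the statement in a general LCA group; it can be pushed through by a summability-kernel argument on $\widehat G/L_0$, or one may simply invoke the inversion theory of \cite[Thm.~5.5.2]{Rei2} and \cite[Thm.~3.6.3]{DE}, where exactly this is carried out.
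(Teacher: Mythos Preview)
The paper does not prove Theorem~\ref{PSFL-I}; it states the result as well known and cites \cite[Thm.~5.5.2]{Rei2} and \cite[Thm.~3.6.3]{DE}. Your sketch is precisely the standard periodization argument carried out in those references: form $Pf\in C(G/L)$, identify its Fourier coefficients on $\widehat{G/L}\cong L_0$ via the Weil formula as $\widehat f|_{L_0}$, and invert at the identity coset. Your normalisation bookkeeping and the $\widehat f\leftrightarrow\widecheck f$ symmetry under $L_0=-L_0$ are correct (the factor $\mathrm{dens}(L)$ is the Plancherel weight on $L_0$, not part of the coefficient itself, but this is only a matter of wording).

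You are also right that the one genuine analytical point is the absolute summability $\widehat f|_{L_0}\in\ell^1(L_0)$; your dual-periodization observation that $P\widehat f$ has finitely supported Fourier transform on $\widehat G/L_0$ only identifies $P\widehat f$ as a trigonometric polynomial \emph{almost everywhere}, which by itself does not force pointwise absolute convergence of $\sum_{\chi\in L_0}\widehat f(\chi)$ at the identity. A summability-kernel (Fej\'er-type) argument closes this, as you say, and this is exactly what the cited sources supply. Since you end by invoking the same references the paper cites, your treatment and the paper's coincide.
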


\begin{remark}
 In fact the lattice PSF also holds for sufficiently decaying $f$ of unbounded support, but we will not consider such functions in this manuscript. Compare however \cite[Sec.~9.2]{BG2} for a corresponding result in the Euclidean setting.
One may regard the measure $\mathrm{dens}(L)\cdot \delta_{L_0}$ as the Fourier transform of the measure $\delta_L$. Note that as a consequence of the lattice PSF, $f\in KL(G)$ has a lattice integrable Fourier transform, i.e., $\widehat f|_{L_0}\in L^1(L_0)$. These two observations motivate the definition of the Fourier transform of a measure in the following section, see also Proposition~\ref{charFT} (ii).
\end{remark}

\subsection{Fourier transforms as measures}

As usual, compare e.g. \cite[Sec.~1.1.6]{RUD}, convolution is for $f,g\in L^1(G)$ defined by  $f*g(x)=\int_G f(y)g(x-y) {\rm d}\theta_G(y)$. We have $f*g=g*f$. We also use $\widetilde{f}(x)=\overline{f(-x)}$, which defines a unitary representation of $G$ on the Hilbert space $L^2(G, \theta_G)$, see \cite{HN98} for background.
Hence $f\in L^2(G, \theta_G)$ satisfies $f*\widetilde f\in P(G)$, where $P(G)$ denotes the set of continuous positive definite functions on $G$.
This implies that $f*\widetilde f$ is Fourier transformable by Bochner's theorem \cite[Thm.~4.4.19]{Rei2}, and we have $\widehat{f*\widetilde f}=|\widehat f|^2$.
We recall the definition of transformability of a measure \cite[Sec.~2]{ARMA1}.

\begin{defi}[Fourier transform] A measure $\mu\in \mathcal M(G)$ is \emph{transformable} if there exists a measure $\widehat{\mu}\in\mathcal M(\hG)$ such that for all $f \in C_c(G)$ we have $\widecheck{f} \in L^2(\widehat{\mu})$ and
\begin{displaymath}
\langle \mu, f*\widetilde{f}\rangle \, =\, \langle\widehat{\mu}, |\widecheck{f}|^2 \rangle \,.
\end{displaymath}
In this case, $\widehat{\mu}$ is called the \emph{Fourier transform of $\mu$}.

\end{defi}

\begin{remark}\label{rem:FTprop}
In the Euclidean setting, the Fourier transform is often considered as an appropriate tempered distribution. Here the Fourier transform $\widehat \mu$ is even required to be a measure. Such $\widehat\mu$ is uniquely determined if it exists \cite[Thm.~2.1]{ARMA1}. Moreover $\widehat\mu$ is then translation bounded \cite[Thm.~2.5]{ARMA1}.
\end{remark}

\begin{remark}\label{rem:ftex}
The above definition generalises the Fourier transform of functions. Indeed, examples of transformable measures are given by $\mu=f\cdot\theta_G$, where $f\in P(G)$ or $f\in L^p(G)$ for $1\le p\le 2$, see \cite[Thm.~2.2]{ARMA1}.
Every finite measure is transformable.
A measure $\mu\in \mathcal M(G)$ is called positive definite if $\int_G f*\widetilde f (x) \, {\rm d}\mu(x)\ge0$ for every $f\in C_c(G)$.  As a consequence of Bochner's theorem,
every positive definite measure is transformable \cite[Thm.~4.1]{ARMA1}.
Any Haar measure on a closed subgroup of $G$ is, as a measure on $G$, positive definite and hence transformable \cite[Prop.~6.2, Cor.~6.2]{ARMA1}. For the Haar measure on a closed subgroup of $G$, the definition of Fourier transform reduces to a version of the classical PSF, compare Section~\ref{sec:lat} and Proposition~\ref{charFT}.  Thus, transformability expresses that the measure satisfies some generalised PSF.
\end{remark}

\subsection{Spaces of test functions}\label{sec:tf}

We discuss transformability in terms of test functions in particular subclasses of $K(G):=C_c(G)$.
The above definition uses the function space
\begin{displaymath}
K_2(G)=\mbox{span} \{ f*\widetilde{f} : f \in C_c(G) \}.
\end{displaymath}
Note that $f*g\in K_2(G)$ for $f,g\in C_c(G)$, which follows from polarisation, see  \cite[Prop.~1.9.4]{MoSt} or \cite[Rem.~3.1.2]{P89}. The space $K_2(G)$ is dense in $C_c(G)$, which may be seen by approximating $f\in C_c(G)$ by convolution with a Dirac net and by polarisation, see p.~9 and Eq.~(4.6) in \cite{ARMA1} for the argument.

We are interested in $L^1$-characterisations of transformability. Note that if $\mu\in\mathcal M(G)$ is transformable, then for $f\in C_c(G)$ such that $\widehat f\in L^1(\widehat G)$ we even have $\widehat f\in L^1(\widehat\mu)$, see \cite[Prop.~3.1]{ARMA1}. Thus the space of functions
\begin{displaymath}
KL(G)= \{ f \in C_c(G) : \widehat{f} \in L^1(\widehat{G}) \}
\end{displaymath}
is important. As $C_c(G)\subset L^1(G)$, every $f\in KL(G)$ satisfies the inversion formula $f=\widecheck{\widehat f}$, see \cite[Thm.~3.5.8]{DE}.
Often, positive definiteness will be important for our arguments. For this reason, we will sometimes deal with the function space
\begin{displaymath}
%PK(G):=\mbox{span}\{P(G)\cap KL(G)\}.
PK(G):=\mbox{span}\{P(G)\cap C_c(G)\}.
\end{displaymath}
In fact elements of $PK(G)$ have an integrable Fourier transform.
\begin{lemma}\label{lm0}  $f \in P(G)\cap C_c(G)$ implies  $\widehat{f} \in L^1(\widehat{G})$. Hence $PK(G)=\mbox{span}\{P(G)\cap KL(G)\}$.
\end{lemma}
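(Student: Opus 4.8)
The statement is the classical inversion theorem for positive definite functions, specialised to compactly supported $f$; here is how the argument runs within our framework. Since $f\in C_c(G)\subset L^1(G)\cap L^2(G)$, Riemann--Lebesgue and the Plancherel theorem \cite[Thm.~3.4.8]{DE} give $\widehat f\in C_0(\widehat G)\cap L^\infty(\widehat G)\cap L^2(\widehat G)$, with $\|\widehat f\|_\infty\le\|f\|_1$. The plan is to prove, in this order, that $\widehat f\ge 0$ $\theta_{\widehat G}$-a.e., and then that $\int_{\widehat G}\widehat f\,{\rm d}\theta_{\widehat G}<\infty$.

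For positivity, observe that for every $g\in C_c(G)$ we have $g*\widetilde g\in C_c(G)$ with $\widehat{g*\widetilde g}=|\widehat g|^2$, so Plancherel yields
\[
\int_{\widehat G}\widehat f\,|\widehat g|^2\,{\rm d}\theta_{\widehat G}=\langle f,\,g*\widetilde g\rangle_{L^2(G)}=\int_G\!\int_G f(t-s)\,h(t)\,\overline{h(s)}\,{\rm d}\theta_G(s)\,{\rm d}\theta_G(t)\ \ge\ 0,
\]
where $h=\overline g\in C_c(G)$ and the inequality is the defining positive-definiteness of $f\in P(G)$. Since $\{\widehat g:g\in C_c(G)\}$ is dense in $L^2(\widehat G)$ (because $C_c(G)$ is dense in $L^2(G)$ and the Fourier transform is unitary), for any Borel set $E$ with $\theta_{\widehat G}(E)<\infty$ pick $g_n\in C_c(G)$ with $\widehat{g_n}\to\mathbf 1_E$ in $L^2(\widehat G)$; then $|\widehat{g_n}|^2\to\mathbf 1_E$ in $L^1(\widehat G)$, and using $\widehat f\in L^\infty\cap L^2$ together with the boundedness of $\|\widehat{g_n}\|_2$ one passes to the limit to get $\int_E\widehat f\,{\rm d}\theta_{\widehat G}=\lim_n\int_{\widehat G}\widehat f\,|\widehat{g_n}|^2\,{\rm d}\theta_{\widehat G}\ge 0$. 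As $E$ was arbitrary of finite measure, $\widehat f\ge 0$ a.e. (Alternatively: $f\,\theta_G$ is a finite, positive definite measure, hence transformable with positive Fourier transform by \cite[Thm.~4.1]{ARMA1}, while $\widehat{f\theta_G}=\widehat f\cdot\theta_{\widehat G}$ since $f\in L^1(G)$; now use uniqueness \cite[Thm.~2.1]{ARMA1}.)

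For integrability, apply Bochner's theorem \cite[Thm.~4.4.19]{Rei2} to the continuous positive definite function $f$: there is a finite positive measure $\mu_B$ on $\widehat G$ with $f(x)=\int_{\widehat G}\chi(x)\,{\rm d}\mu_B(\chi)$, in particular $\mu_B(\widehat G)=f(0)$. Inserting this representation into the computation above and using Fubini ($\mu_B$ finite, $g\in L^1(G)$) gives, for every $g\in C_c(G)$,
\[
\int_{\widehat G}\widehat f\,|\widehat g|^2\,{\rm d}\theta_{\widehat G}=\int_{\widehat G}|\widehat g|^2\,{\rm d}\mu_B .
\]
Now take a net $(g_\alpha)$ in $C_c(G)$ with $g_\alpha\ge 0$, $\int_G g_\alpha\,{\rm d}\theta_G=1$ and supports shrinking to $\{0\}$, so that $0\le|\widehat{g_\alpha}|^2\le 1$ and $|\widehat{g_\alpha}|^2\to 1$ uniformly on compacta. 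As $\mu_B$ is finite, the right-hand side tends to $\mu_B(\widehat G)=f(0)$. On the left-hand side, $\widehat f\ge 0$ gives, for every compact $K\subset\widehat G$, $\int_{\widehat G}\widehat f\,|\widehat{g_\alpha}|^2\,{\rm d}\theta_{\widehat G}\ge\int_K\widehat f\,|\widehat{g_\alpha}|^2\,{\rm d}\theta_{\widehat G}\to\int_K\widehat f\,{\rm d}\theta_{\widehat G}$; taking the supremum over $K$ and using $\widehat f\ge 0$ again yields $\liminf_\alpha\int_{\widehat G}\widehat f\,|\widehat{g_\alpha}|^2\,{\rm d}\theta_{\widehat G}\ge\int_{\widehat G}\widehat f\,{\rm d}\theta_{\widehat G}$. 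Comparison gives $\int_{\widehat G}\widehat f\,{\rm d}\theta_{\widehat G}\le f(0)<\infty$, i.e.\ $\widehat f\in L^1(\widehat G)$. This proves the first assertion and in particular shows $P(G)\cap C_c(G)\subset KL(G)$; since $KL(G)\subset C_c(G)$ trivially, $P(G)\cap C_c(G)=P(G)\cap KL(G)$, so the linear spans agree and $PK(G)=\mbox{span}\{P(G)\cap KL(G)\}$.

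The main obstacle is the passage from the finite Bochner measure $\mu_B$ to integrability of $\widehat f$: one cannot identify $\mu_B$ with $\widehat f\cdot\theta_{\widehat G}$ by a density argument in $C_0(\widehat G)$, because $\widehat f\cdot\theta_{\widehat G}$ is not a priori a finite measure. This is precisely what the one-sided squeeze against an approximate identity overcomes, and it is why establishing $\widehat f\ge 0$ first is essential.
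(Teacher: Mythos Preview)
Your proof is correct, but it takes a considerably longer route than the paper's. The paper argues in three lines using the measure-theoretic Fourier transform: by Bochner's theorem $f(x)=\int_{\widehat G}\chi(x)\,{\rm d}\sigma(\chi)$ for a finite positive measure $\sigma$, so \cite[Thm.~2.2]{ARMA1} says the measure $f\cdot\theta_G$ is transformable with $\widehat{f\cdot\theta_G}=\sigma$; on the other hand $f\in L^1(G)$ gives $\widehat{f\cdot\theta_G}=\widehat f\cdot\theta_{\widehat G}$ by the same theorem; uniqueness \cite[Thm.~2.1]{ARMA1} then forces $\sigma=\widehat f\cdot\theta_{\widehat G}$, and finiteness of $\sigma$ is exactly $\widehat f\in L^1(\widehat G)$.

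What is striking is that you essentially wrote this argument down yourself in the parenthetical alternative for positivity, but used it only to conclude $\widehat f\ge 0$ rather than pushing through to finiteness. Your last paragraph even asserts that ``one cannot identify $\mu_B$ with $\widehat f\cdot\theta_{\widehat G}$'' without extra work; in fact one can, precisely via the uniqueness theorem for measure Fourier transforms, which does not require $\widehat f\cdot\theta_{\widehat G}$ to be known finite in advance --- it only requires both candidates to be Radon measures, which they are. So the ``obstacle'' you identify is not present in the Argabright--de~Lamadrid framework, and your approximate-identity squeeze, while correct and pleasantly self-contained, is unnecessary here. The trade-off: your argument would survive outside this framework (it uses only Plancherel, Bochner, and elementary analysis), whereas the paper's proof exploits the ambient theory to collapse the work into a single uniqueness appeal.
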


\begin{proof}
Let $f \in C_c(G)$ be positive definite.
Since $f$ is continuous and positive definite, by Bochner's theorem, see e.g.~\cite[Sec.~1.4.3]{RUD}, there exists a finite measure $\sigma\in \mathcal M(\widehat G)$ such that
\begin{displaymath}
f(x)= \int_{\widehat{G}} \chi(x) \, {\rm d} \sigma(\chi) \,.
\end{displaymath}
Then, by \cite[Thm.~2.2]{ARMA1}, \cite[Lemma~1.17]{MoSt}, the measure $f \cdot \theta_G$ is transformable and its Fourier transform is $\sigma$.
Again by  \cite[Thm.~2.2]{ARMA1} or \cite[Lemma~1.16]{MoSt}, as $f \in L^1(G)$, the measure $f \cdot \theta_G$ is transformable and its Fourier transform is $\widehat{f} \cdot \theta_{\widehat{G}}$.
Therefore, the uniqueness of the Fourier transform \cite[Thm.~2.1]{ARMA1} or \cite[Thm.~1.13]{MoSt} yields $\sigma= \widehat{f} \cdot \theta_{\widehat{G}}$.
As $\sigma$ is a  finite measure, it follows that $\widehat{f}\cdot \theta_{\widehat{G}}$ is a finite measure, and hence $\widehat{f} \in L^1(\widehat{G})$.
\end{proof}

\bigskip

We thus have the following relationship among our spaces:
\begin{displaymath}
K_2(G) \subset PK(G) \subset KL(G) \subset C_c(G) \,.
\end{displaymath}
As $K_2(G)$ is dense in $C_c(G)$, it follows that $PK(G)$ and $KL(G)$ are also dense in $C_c(G)$.
Many examples of functions in $PK(G)$ are provided by the following lemma.

\begin{lemma}\label{lm1} If $f ,g \in L^2(G)$ have compact support, then $f*g \in PK(G)$.
\end{lemma}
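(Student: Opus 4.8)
The plan is to deduce the lemma from the observation recorded just before the definition of the Fourier transform, namely that $h*\widetilde h\in P(G)$ whenever $h\in L^2(G,\theta_G)$, by the very same polarisation argument that gives $f*g\in K_2(G)$ for $f,g\in C_c(G)$, but now carried out in $L^2(G)$ instead of in $C_c(G)$. First I would recast $f*g$ in a form suited to polarisation. Since $\widetilde{\widetilde g}=g$, putting $h:=\widetilde g$ gives $g=\widetilde h$, so $f*g=f*\widetilde h$, where $h\in L^2(G)$ has compact support $\supp h\subseteq-\supp g$. The map $(a,b)\mapsto a*\widetilde b$ is linear in $a$ and conjugate-linear in $b$ (because $\widetilde{\lambda b}=\overline\lambda\,\widetilde b$), so the standard polarisation identity yields
\begin{displaymath}
f*g\,=\,f*\widetilde h\,=\,\frac14\sum_{k=0}^{3}i^{k}\,(f+i^{k}h)*\widetilde{(f+i^{k}h)}\,,
\end{displaymath}
and each of the four functions $u_k:=f+i^{k}h$ lies in $L^2(G)$ and is supported in the fixed compact set $\supp f\cup\supp h$.

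Next I would check that $u*\widetilde u\in P(G)\cap C_c(G)$ for every $u\in L^2(G)$ with compact support; granting this for each $u_k$, the displayed identity exhibits $f*g$ as an element of $\mbox{span}\{P(G)\cap C_c(G)\}=PK(G)$, as required. Continuity and positive definiteness of $u*\widetilde u$ are exactly the content of the cited fact $u*\widetilde u\in P(G)$ for $u\in L^2(G,\theta_G)$, where $P(G)$ consists of \emph{continuous} positive definite functions. For the support: if $u=0$ almost everywhere off a compact set $K$, then $\widetilde u=0$ a.e.\ off $-K$, so the integrand of $(u*\widetilde u)(x)=\int_G u(y)\widetilde u(x-y)\,{\rm d}\theta_G(y)$ vanishes a.e.\ in $y$ whenever $x\notin K-K$; since $K-K$ is compact and $u*\widetilde u$ is continuous, this forces $\supp(u*\widetilde u)\subseteq K-K$, hence $u*\widetilde u\in C_c(G)$.

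The argument is essentially routine; the only point worth flagging is that the cited fact $h*\widetilde h\in P(G)$ is available for all $h\in L^2(G)$, not merely for $h\in C_c(G)$. It is precisely this that makes compact support of $f$ and $g$ (in place of continuity) enough, and it is what will later let us feed functions of the form $1_A*1_B$ with $A,B$ relatively compact into the $PK(G)$-based transformability criteria. I do not anticipate any deeper obstacle.
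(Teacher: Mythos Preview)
Your proposal is correct and follows essentially the same route as the paper: reduce by polarisation to the diagonal case $u*\widetilde u$ with $u\in L^2(G)$ compactly supported, then observe that $u*\widetilde u$ is continuous, positive definite, and compactly supported. The paper phrases the continuity step via $u*\widetilde u\in C_0(G)$ (citing Rudin/Deitmar--Echterhoff) rather than via the inclusion $u*\widetilde u\in P(G)$ you invoke, but this is a cosmetic difference.
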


\begin{proof}
By polarisation, it suffices to prove the result in the case $g=\widetilde{f}$.
By \cite[Thm. on page 4, (d)]{RUD} or \cite[Lemma 3.4.1]{DE} we have $f * \widetilde{f} \in C_0(G)$. Moreover, as $f$ has compact support, so has $f* \widetilde{f}$.
This shows that $f *\widetilde{f} \in C_c(G)$.
By construction $f *\widetilde{f}$ is positive definite, which completes the claim.
\end{proof}

\begin{remark}\label{WKL}
For later use we note $1_W*\widetilde{1_W} \in PK(G) \subset KL(G)$ for relatively compact measurable $W\subset G$.
\end{remark}

We have the following characterisation of transformability by generalised PSF. For the Haar measure on a closed subgroup of $G$, part (ii) of the theorem is the classical PSF as in \cite[Thm.~5.5.2]{Rei2}.

\begin{prop}\label{charFT} For $\mu \in \mathcal M(G)$ and $\nu \in \mathcal M(\widehat{G})$ the following are equivalent:
\begin{itemize}
\item[(i)] $\mu$ is transformable and $\widehat{\mu}= \nu$.
\item[(ii)] For every $f\in KL(G)$  we have $\widecheck{f} \in L^1(\nu)$ and
$\langle \mu, f\rangle \,=\, \langle \nu, \widecheck{f} \rangle$.
\item[(iii)] For every $f\in PK(G)$  we have $\widecheck{f} \in L^1(\nu)$ and
$\langle \mu, f\rangle \,=\, \langle \nu, \widecheck{f} \rangle$.
\item[(iv)] For every $f\in K_2(G)$ we have $\widecheck{f} \in L^1(\nu)$ and
$\langle \mu, f\rangle \,=\, \langle \nu, \widecheck{f} \rangle$.
\end{itemize}
\end{prop}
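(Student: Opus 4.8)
The plan is to prove the chain of implications $(i)\Rightarrow(ii)\Rightarrow(iii)\Rightarrow(iv)\Rightarrow(i)$, exploiting the inclusions $K_2(G)\subset PK(G)\subset KL(G)\subset C_c(G)$ established above. The implications $(ii)\Rightarrow(iii)\Rightarrow(iv)$ are immediate restrictions of the test-function class, so the only real content is $(i)\Rightarrow(ii)$ and $(iv)\Rightarrow(i)$.

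For $(i)\Rightarrow(ii)$: assume $\mu$ is transformable with $\widehat\mu=\nu$. By the cited \cite[Prop.~3.1]{ARMA1}, for every $f\in KL(G)$ we already have $\widehat f\in L^1(\nu)$; applying this to the reflection $\widetilde f$ (which also lies in $KL(G)$, since $\widehat{\widetilde f}=\overline{\widehat f}$) gives $\widecheck f\in L^1(\nu)$. It then remains to upgrade the defining identity $\langle\mu,f*\widetilde f\rangle=\langle\nu,|\widecheck f|^2\rangle$ to the linear identity $\langle\mu,f\rangle=\langle\nu,\widecheck f\rangle$. First I would obtain it on $K_2(G)$ by polarisation: writing a general element of $K_2(G)$ as a linear combination of terms $f*\widetilde f$, both sides are sesquilinear, so the measure identity on products $f*\widetilde f$ propagates to all of $K_2(G)$, using $\widecheck{f*\widetilde f}=|\widecheck f|^2$ and the convolution/inversion rules for $C_c(G)\subset L^1(G)$. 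Then I would extend from $K_2(G)$ to $KL(G)$ by a density-plus-continuity argument: $K_2(G)$ is dense in $C_c(G)$, hence a suitable net in $K_2(G)$ approximates a given $f\in KL(G)$; on the left side one uses that $\mu$ is a Radon measure (and translation boundedness, Remark~\ref{rem:FTprop}) to pass to the limit, while on the right side one needs $\widecheck{f_n}\to\widecheck f$ in $L^1(\nu)$, which is where $\nu$ being translation bounded and the control $\widehat{f_n}\to\widehat f$ in $L^1(\widehat G)$ enter. The precise approximation should be modelled on the Dirac-net argument referenced from \cite[p.~9, Eq.~(4.6)]{ARMA1}.

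For $(iv)\Rightarrow(i)$: assume the PSF identity holds for all $f\in K_2(G)$ with $\widecheck f\in L^1(\nu)$. Given $f\in C_c(G)$, apply the hypothesis to $f*\widetilde f\in K_2(G)$. Since $\widecheck{f*\widetilde f}=|\widecheck f|^2\ge 0$ and it lies in $L^1(\nu)$, we get $\widecheck f\in L^2(\nu)$ and $\langle\mu,f*\widetilde f\rangle=\langle\nu,|\widecheck f|^2\rangle$, which is exactly the definition of $\mu$ being transformable with $\widehat\mu=\nu$. This direction is essentially a one-line unwinding of definitions; the only subtlety is checking that $\widecheck{f*\widetilde f}=|\widecheck f|^2$ as measurable functions on $\widehat G$, which follows from $f\in L^1(G)\cap L^2(G)$ and standard Fourier inversion for $C_c$ functions.

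The main obstacle is the density step in $(i)\Rightarrow(ii)$, i.e.\ pushing the identity from $K_2(G)$ to all of $KL(G)$: one must produce approximants $f_n\in K_2(G)$ to a given $f\in KL(G)$ that converge simultaneously in a topology controlling $\langle\mu,\cdot\rangle$ on the measure side and in $L^1(\widehat G)$ (hence $L^1(\nu)$) on the transform side. I expect this to be handled exactly as in Argabright--de Lamadrid by convolving $f$ with an approximate identity supported in small neighbourhoods of $0$ and then polarising, using that convolution is continuous and behaves well under Fourier transform; translation boundedness of both $\mu$ and $\nu$ (Remark~\ref{rem:FTprop}) provides the uniform bounds needed to interchange limit and pairing. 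Everything else is bookkeeping with the inclusions of test-function spaces.
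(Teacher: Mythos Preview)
Your chain of implications matches the paper's exactly, and your treatment of $(ii)\Rightarrow(iii)\Rightarrow(iv)$ and $(iv)\Rightarrow(i)$ is correct and essentially identical to what the paper does (the paper just writes ``trivially'' for $(iv)\Rightarrow(i)$; your unwinding via $f*\widetilde f$ is precisely the intended content).

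The only divergence is in $(i)\Rightarrow(ii)$. The paper disposes of this in one line by citing \cite[Prop.~3.1]{ARMA1}. That proposition gives \emph{both} conclusions at once: for transformable $\mu$ and $f\in KL(G)$, one has $\widecheck f\in L^1(\widehat\mu)$ \emph{and} $\langle\mu,f\rangle=\langle\widehat\mu,\widecheck f\rangle$. You read it as giving only the integrability statement (understandably, since the paper's earlier in-text reference to Prop.~3.1 mentions only that part), and then you built a density argument---polarise to get the identity on $K_2(G)$, then approximate $f\in KL(G)$ by $f*\phi_n$ with $\phi_n$ an approximate identity of the form $\psi_n*\widetilde{\psi_n}$, and pass to the limit. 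That argument is sound (on the $\nu$-side dominated convergence with dominant $|\widecheck f|\in L^1(|\nu|)$ does the job; translation boundedness is not actually needed here), but it amounts to reproving a piece of \cite[Prop.~3.1]{ARMA1} rather than invoking it. So your proof is correct but longer than necessary; the paper's version is a one-line citation.
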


\begin{proof}
(i) $\Rightarrow$ (ii) follows with \cite[Prop.~3.1]{ARMA1}.  (ii) $\Rightarrow$ (iii) $\Rightarrow$ (iv) holds since $K_2(G)\subset PK(G) \subset KL(G)$.  (iv) $\Rightarrow$ (i) holds trivially.
\end{proof}

\subsection{Double transformability}

For $f\in L^1(G)$ define $f^\dagger\in L^1(G)$ by $f^\dagger(x)=f(-x)$. Similarly, for $\mu\in\mathcal M(G)$ define $\mu^\dagger\in\mathcal M(G)$ via $\langle \mu^\dagger, f \rangle \,=\, \langle \mu, f^\dagger \rangle$ for all $f\in C_c(G)$.
If $\mu\in\mathcal M(G)$ and $\widehat\mu\in\mathcal M(\widehat G)$ are both transformable, then the inversion theorem $\widehat{\widehat\mu}=\mu^\dagger$  holds, and the measures $\mu,\mu^\dagger$ and $\widehat \mu$ are all translation bounded \cite[Thm.~3.4]{ARMA1}. This generalises the inversion theorem for integrable functions, see e.g.~\cite[Thm.~3.5.8]{DE}.
The mapping $\mu \mapsto \mu^\dagger$ can be seen as a natural extension  of the mapping $\mu \mapsto \widehat{\widehat\mu}$ from the subspace of twice Fourier transformable measures to the space of all measures. When considering this as a generalised double Fourier transform, one has to be careful as the extension doesn't preserve many properties which $\widehat{\widehat\mu}$ always has. For example, for a twice Fourier transformable measure $\widehat{\widehat{\mu}}$ is always translation bounded and weakly almost periodic \cite{ARMA,MoSt}, while for an arbitrary measure $\mu$, the measure $\mu^\dagger$ might not have these properties.

We now give a necessary and sufficient condition for a transformable measure to be twice transformable.

\begin{theorem}\label{double} Let $\mu\in\mathcal M(G)$ be transformable. Then the following are equivalent.

\begin{itemize}
\item[(i)] $\widehat\mu\in\mathcal M(\widehat G)$ is transformable.
\item[(ii)] For every $g \in K_2(\widehat{G})$ we have $\widecheck{g} \in L^1(\mu^\dagger)$.
\end{itemize}
If any of the above conditions holds, then $\widehat{\widehat\mu}=\mu^\dagger$,  and the measures $\mu, \mu^\dagger$ and $\widehat \mu$ are translation bounded.
\end{theorem}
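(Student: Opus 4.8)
Throughout, the candidate for $\widehat{\widehat\mu}$ is $\mu^\dagger$, and the plan is to characterise transformability of $\widehat\mu$ by means of Proposition~\ref{charFT} applied with the roles of $G$ and $\widehat G$ interchanged, the dual of $\widehat G$ being identified with $G$. The implication (i)~$\Rightarrow$~(ii), and the concluding assertions, are then immediate: if $\mu$ and $\widehat\mu$ are both transformable, the inversion theorem \cite[Thm.~3.4]{ARMA1} gives $\widehat{\widehat\mu}=\mu^\dagger$ and the translation boundedness of $\mu,\mu^\dagger,\widehat\mu$, after which Proposition~\ref{charFT}, (i)~$\Rightarrow$~(iv), applied to the transformable measure $\widehat\mu$ (whose Fourier transform is $\mu^\dagger$) yields $\widecheck g\in L^1(\mu^\dagger)$ for every $g\in K_2(\widehat G)$.

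For (ii)~$\Rightarrow$~(i), Proposition~\ref{charFT}, (iv)~$\Rightarrow$~(i), applied on $\widehat G$ with $\nu=\mu^\dagger$, reduces everything to the generalised Poisson summation formula $\langle\widehat\mu,g\rangle=\langle\mu^\dagger,\widecheck g\rangle$ for all $g\in K_2(\widehat G)$, because the integrability clause $\widecheck g\in L^1(\mu^\dagger)$ is exactly hypothesis (ii). Since $(\widecheck g)^\dagger=\widehat g$ and hence $\langle\mu^\dagger,\widecheck g\rangle=\langle\mu,\widehat g\rangle$, and since both sides depend linearly on $g$, it is enough to treat $g=h*\widetilde h$ with $h\in C_c(\widehat G)$; then $\widehat g=|\widehat h|^2$ is a nonnegative function in $L^1(G)\cap C_0(G)$ and lies in $L^1(\mu)$ by (ii). So the task is the single identity $\langle\widehat\mu,h*\widetilde h\rangle=\langle\mu,|\widehat h|^2\rangle$. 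The obstruction is that $\widehat g=|\widehat h|^2$ is precisely the function one would like to substitute into the defining transformability relation of $\mu$, yet it is not compactly supported --- and no nonzero function is compactly supported together with its Fourier transform --- so an approximation is forced.

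The device is a Fej\'er kernel on $G$. Fix a van~Hove net $(W_\alpha)$ of relatively compact sets in $G$ and set $u_\alpha=\theta_G(W_\alpha)^{-1}\bigl(1_{W_\alpha}*\widetilde{1_{W_\alpha}}\bigr)$; by Lemma~\ref{lm1} these lie in $PK(G)$, and one checks $0\le u_\alpha\le1$ with $u_\alpha\to1$ uniformly on compact sets, while $\widecheck{u_\alpha}=\theta_G(W_\alpha)^{-1}|\widecheck{1_{W_\alpha}}|^2\ge0$ has total $\theta_{\widehat G}$-mass $1$ by Plancherel and converges to $\delta_0$ in $\mathcal M(\widehat G)$ --- vaguely, by dominated convergence (first for test functions in $KL(\widehat G)$, whose inverse transforms lie in $L^1(G)$, then for all of $C_c(\widehat G)$ by density), and then weakly, the total masses converging as well. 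Put $\psi_\alpha=u_\alpha\cdot\widehat g\in C_c(G)$. A routine computation with the convolution and inversion formulas gives $\widehat{\psi_\alpha}=\widehat{u_\alpha}*\widehat{\widehat g}\in L^1(\widehat G)$, so $\psi_\alpha\in KL(G)$, together with $\widecheck{\psi_\alpha}=\widecheck{u_\alpha}*g$. Proposition~\ref{charFT}, (i)~$\Rightarrow$~(ii), for $\mu$ then yields $\langle\mu,\psi_\alpha\rangle=\langle\widehat\mu,\widecheck{\psi_\alpha}\rangle$. On the left, $\langle\mu,\psi_\alpha\rangle=\int_G u_\alpha\,\widehat g\,{\rm d}\mu\to\langle\mu,\widehat g\rangle$, since the finite measure $|\widehat g|\,{\rm d}|\mu|$ is tight and $u_\alpha\to1$ uniformly on compacta. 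On the right, $\widehat\mu$ being translation bounded (\cite[Thm.~2.5]{ARMA1}) and $g$ compactly supported, Fubini turns $\langle\widehat\mu,\widecheck{u_\alpha}*g\rangle$ into $\int_{\widehat G}\widecheck{u_\alpha}(\eta)\,\Phi(\eta)\,{\rm d}\theta_{\widehat G}(\eta)$ with $\Phi(\eta)=\langle\widehat\mu,\tau_\eta g\rangle$ bounded and continuous, and this tends to $\Phi(0)=\langle\widehat\mu,g\rangle$ by the weak convergence $\widecheck{u_\alpha}\,\theta_{\widehat G}\to\delta_0$. Comparing the two limits gives $\langle\widehat\mu,g\rangle=\langle\mu,\widehat g\rangle=\langle\mu^\dagger,\widecheck g\rangle$, whence $\widehat\mu$ is transformable with $\widehat{\widehat\mu}=\mu^\dagger$, and the translation boundedness of $\mu,\mu^\dagger,\widehat\mu$ follows once more from \cite[Thm.~3.4]{ARMA1}.

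The one genuinely substantial step is this approximation. The cut-offs $\psi_\alpha$ have to be legitimate test functions --- compactly supported with integrable Fourier transform --- while their inverse Fourier transforms must approach $g$ strongly enough to pass under the integral against the possibly unbounded $\widehat\mu$; these requirements pull in opposite directions, and it is exactly the dual pair of properties of the Fej\'er kernel, namely $u_\alpha\to1$ (governing the $\mu$-side limit) and $\widecheck{u_\alpha}\to\delta_0$ (governing the $\widehat\mu$-side limit), that makes them compatible.
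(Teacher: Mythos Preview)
Your argument is correct. The reduction to the identity $\langle\widehat\mu,g\rangle=\langle\mu,\widehat g\rangle$ for $g\in K_2(\widehat G)$ is the same as in the paper, and your Fej\'er--type cut-off $\psi_\alpha=u_\alpha\cdot\widehat g\in KL(G)$ together with the two limits is a legitimate way to obtain it. Two small technical remarks: since $G$ is not assumed $\sigma$-compact here, you should speak of a F{\o}lner net rather than a van Hove net (only the F{\o}lner property is used, and such nets exist in every LCA group); and the ``dominated convergence'' you invoke is really a tightness-plus-uniform-convergence argument, which you in fact spell out correctly.

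The paper establishes the same identity by a different and somewhat slicker device that avoids any limiting procedure. Instead of approximating $1$ by $u_\alpha$, it multiplies $\widehat g$ by an \emph{arbitrary} $f\in K_2(G)$, so that $f\cdot\widehat g\in KL(G)$ with $\widecheck{f\cdot\widehat g}=\widecheck f*g$; transformability of $\mu$ then gives $\langle\mu,f\cdot\widehat g\rangle=\langle\widehat\mu,\widecheck f*g\rangle=\langle(g^\dagger*\widehat\mu)\cdot\theta_{\widehat G},\widecheck f\rangle$. On the other hand, $\widehat g\cdot\mu$ is a finite measure by hypothesis~(ii), and its Fourier transform is the continuous function $I(\chi)=\int_G\overline\chi\,\widehat g\,{\rm d}\mu$, so the same quantity equals $\langle I\cdot\theta_{\widehat G},\widecheck f\rangle$. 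Since $f\in K_2(G)$ is arbitrary, the two continuous densities $I$ and $g^\dagger*\widehat\mu$ coincide, and evaluating at $0$ yields $\langle\mu,\widehat g\rangle=\langle\widehat\mu,g\rangle$ directly. Thus the paper trades your limit $\alpha\to\infty$ for an evaluation at a point, which sidesteps the F{\o}lner/Fej\'er machinery and the attendant net-convergence issues; your approach, in exchange, is more elementary in spirit and closer to classical approximate-identity arguments.
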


\begin{proof}
``(i) $\Rightarrow$ (ii)'' Since both $\mu\in \mathcal M(G)$ and $\widehat \mu\in\mathcal M(\widehat G)$ are transformable, we have $\widehat{\widehat\mu}=\mu^\dagger\in\mathcal M(G)$. Now the claim follows from Proposition~\ref{charFT}.

\noindent ``(ii) $\Rightarrow$ (i)'' Consider any $g \in K_2(\widehat{G})$. Since by assumption $\widehat{g}\cdot \mu$ is a finite measure, its Fourier transform as a measure is the absolutely continuous measure with density function $I:\widehat G\to \mathbb C$ given by
\[
I(\chi):= \int_{G} \overline{\chi}(s) \widehat{g}(s) {\rm d} \mu (s)\,.
\]
Thus, for all $f \in K_2(G)$ we have $\langle \widehat{g}\cdot \mu , f \rangle = \langle   I \cdot \theta_{\widehat{G}}, \widecheck{f} \rangle$.
We also have $f\cdot \widehat{g} \in C_c(G)$ and $\widecheck{f \cdot\widehat{g}} = \widecheck{f}*g \in L^1(\widehat{G})$. Therefore, as $\mu$ is transformable we get
\[
\langle  \mu ,  f \cdot\widehat{g} \rangle = \langle \widehat{\mu},  \widecheck{f}*g \rangle = \langle (g^\dagger*\widehat{\mu})\cdot \theta_{\widehat{G}},  \widecheck{f} \rangle.
\]
Combining these we obtain
\begin{displaymath}
\langle I \cdot \theta_{\widehat{G}},  \widecheck{f} \rangle \, = \, \langle \widehat{g}\cdot \mu , f \rangle \,=\, \langle  (g^\dagger*\widehat{\mu})\cdot \theta_{\widehat{G}}, \widecheck{f} \rangle.
\end{displaymath}
This proves the measure equality $I\cdot \theta_{\widehat{G}} = (g^\dagger*\widehat{\mu}) \cdot \theta_{\widehat{G}}$. As $I$ and $g^\dagger*\widehat{\mu}$ are both continuous functions, they must be equal, and by evaluating at $0$ we get
$\langle  \mu , \widehat{g} \rangle = \langle  \widehat{\mu}, g \rangle$.
Therefore, for all $g \in K_2(\widehat{G})$ we have $\widecheck{g} \in L^1(\mu^\dagger)$ and $\langle  \widehat{\mu}, g \rangle = \langle \mu^\dagger, \widecheck{g} \rangle$. As $\mu^\dagger\in\mathcal M(G)$, the claim now follows from Proposition~\ref{charFT}.
\end{proof}

\begin{cor} Let $\mu \in \mathcal M(G)$ be transformable. If there exists some transformable $\nu \in \mathcal M(\widehat{G})$ such that $|\mu| \leq | \widehat{\nu} |$, then $\mu$ is twice transformable and $\mu\in\mathcal M^\infty(G)$.
\end{cor}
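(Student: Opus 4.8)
The plan is to derive both assertions from Theorem~\ref{double} together with one elementary observation. Since $\mu$ is transformable by hypothesis, it suffices to check condition (ii) of Theorem~\ref{double}, namely that $\widecheck g\in L^1(\mu^\dagger)$ for every $g\in K_2(\widehat G)$; then $\widehat{\widehat\mu}=\mu^\dagger$ holds and $\mu,\mu^\dagger,\widehat\mu$ are translation bounded, so in particular $\mu\in\mathcal M^\infty(G)$. (Translation boundedness of $\mu$ is in fact already immediate without Theorem~\ref{double}: the Fourier transform $\widehat\nu$ of the measure $\nu$ is translation bounded by Remark~\ref{rem:FTprop}, hence so is $|\widehat\nu|$, and therefore so is $|\mu|\le|\widehat\nu|$.)

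To verify (ii) I would fix $g\in K_2(\widehat G)$ and proceed in three small steps. First, unfold the integral against $\mu^\dagger$: since $|\mu^\dagger|$ is the image of $|\mu|$ under $x\mapsto-x$, and since $(\widecheck g)^\dagger=\widecheck{g^\dagger}$ for the reflection $g^\dagger(\chi)=g(-\chi)$ (a short computation using invariance of $\theta_{\widehat G}$ under inversion), one gets
\[
\int_G|\widecheck g|\,{\rm d}|\mu^\dagger|=\int_G|\widecheck g(-x)|\,{\rm d}|\mu|(x)=\int_G|\widecheck{g^\dagger}(x)|\,{\rm d}|\mu|(x)\le\int_G|\widecheck{g^\dagger}|\,{\rm d}|\widehat\nu|,
\]
using the hypothesis $|\mu|\le|\widehat\nu|$ in the last step. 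Second, observe that $K_2(\widehat G)$ is stable under $\chi\mapsto-\chi$, because $(f*\widetilde f)^\dagger=f^\dagger*\widetilde{f^\dagger}$; hence $g^\dagger\in K_2(\widehat G)$. Third, apply Proposition~\ref{charFT} over the group $\widehat G$ (whose dual we identify with $G$) to the transformable measure $\nu$ with Fourier transform $\widehat\nu$: this yields $\widecheck h\in L^1(\widehat\nu)$ for every $h\in K_2(\widehat G)$, and taking $h=g^\dagger$ shows that the right-hand side above is finite. That is exactly condition (ii), so Theorem~\ref{double} applies and the corollary follows.

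I do not expect a genuine obstacle here; the argument is essentially a dualisation of results already in hand. The one point requiring care is the bookkeeping around the involution $\mu\mapsto\mu^\dagger$ and the canonical identification $\widehat{\widehat G}\cong G$: one must confirm that the function $\widecheck g$ appearing in Theorem~\ref{double}(ii) is the same object on $G$ as the one produced by the $\widehat G$-instance of Proposition~\ref{charFT}, and that the reflection lands on the correct side, so that it is the integrability of $\widecheck{g^\dagger}$ against $\widehat\nu$, rather than of $\widecheck g$ itself, that one needs. Once the conventions for $\theta_{\widehat G}$, the inverse transform, and ${}^\dagger$ are fixed, these are routine verifications.
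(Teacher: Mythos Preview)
Your proposal is correct and follows essentially the same route as the paper: verify condition~(ii) of Theorem~\ref{double} by using transformability of $\nu$ (via Proposition~\ref{charFT} on $\widehat G$) to get integrability against $|\widehat\nu|$, then dominate using $|\mu|\le|\widehat\nu|$. The only cosmetic difference is that the paper absorbs the reflection by working with $\widehat g=(\widecheck g)^\dagger$ directly, obtaining $\widehat g\in L^1(\widehat\nu)$ and hence $\widecheck g\in L^1(\mu^\dagger)$, whereas you reflect $g$ to $g^\dagger\in K_2(\widehat G)$ first; since $\widecheck{g^\dagger}=\widehat g$, the two computations are identical.
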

\begin{proof}
Let $g \in K_2(\widehat{G})$. As $\nu$ is Fourier transformable, we have by definition $\widehat{g} \in L^1( \widehat{\nu})$, and hence
$\left| \widehat{g} \right| \in L^1( | \widehat{\nu} |)$.
Since $\widehat{g} \in C_0(G)$ and $|\mu| \leq | \widehat{\nu} |$ we have
\[
\int_G \left| \widehat{g} \right| {\rm d} |\mu| \leq  \int_G \left| \widehat{g} \right| {\rm d} | \widehat{\nu} | < \infty \,.
\]
This shows that $\left| \widehat{g} \right| \in L^1( |\mu|)$ and hence $\widecheck{g} \in L^1( \mu^\dagger )$.
The claim follows now from Theorem \ref{double}, ``(ii) $\Rightarrow$ (i)''.
\end{proof}

In the case $G=\R^d$, we obtain a simpler version of Theorem~\ref{double}.

\begin{theorem}\label{double2} Let $\mu\in\mathcal M(\R^d)$ be transformable. Then $\widehat\mu\in\mathcal M(\widehat{\R^d})$ is transformable if and only if $\mu$ is translation bounded.
If any of the above conditions holds, then $\widehat{\widehat\mu}=\mu^\dagger$.
\end{theorem}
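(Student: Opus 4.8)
The plan is to derive both implications from Theorem~\ref{double}, the only genuinely Euclidean ingredient being a simple integrability fact for band-limited functions. One direction is free: if $\mu\in\mathcal M(\R^d)$ is transformable and $\widehat\mu$ is transformable, then the last assertion of Theorem~\ref{double} says that $\mu$, $\mu^\dagger$ and $\widehat\mu$ are translation bounded and that $\widehat{\widehat\mu}=\mu^\dagger$. This yields ``$\widehat\mu$ transformable $\Rightarrow$ $\mu$ translation bounded'', together with the concluding equality of the statement.

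For the converse, assume $\mu\in\mathcal M(\R^d)$ is transformable and translation bounded; then $|\mu^\dagger|=|\mu|^\dagger$ is translation bounded as well. By Theorem~\ref{double} it suffices to verify condition~(ii) there, namely $\widecheck g\in L^1(\mu^\dagger)$ for every $g\in K_2(\widehat{\R^d})$. Writing $g=\sum_{j=1}^{k}\alpha_j\,f_j*\widetilde{f_j}$ with $f_j\in C_c(\widehat{\R^d})$ and $\alpha_j\in\C$, we get $\widecheck g=\sum_j\alpha_j\,|\widecheck{f_j}|^2$, so it is enough to show $|\widecheck{f_j}|^2\in L^1(|\mu^\dagger|)$ for each $j$. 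Now $\phi:=|\widecheck{f_j}|^2=\widecheck{(f_j*\widetilde{f_j})}$ is a nonnegative function in $L^1(\R^d)$, because $\widecheck{f_j}\in L^2(\R^d)$, and $\widehat\phi=f_j*\widetilde{f_j}\in C_c(\widehat{\R^d})$, so $\phi$ is band-limited. Thus the whole problem reduces to the claim that every band-limited $\phi\in L^1(\R^d)$ lies in $L^1(\nu)$ for an arbitrary translation bounded positive measure $\nu$ on $\R^d$.

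To prove this claim, fix $\chi\in C_c^\infty(\R^d)$ with $\chi\equiv1$ on the compact set $\supp\widehat\phi$; then $\chi\widehat\phi=\widehat\phi$, and applying the inverse Fourier transform together with the convolution theorem gives $\phi=\psi*\phi$ with $\psi:=\widecheck\chi\in\mathcal S(\R^d)$. Hence $|\phi|\le|\psi|*|\phi|$ pointwise, and Tonelli's theorem gives
\[
\int_{\R^d}|\phi|\,{\rm d}\nu\ \le\ \int_{\R^d}|\phi(y)|\Big(\int_{\R^d}|\psi(x-y)|\,{\rm d}\nu(x)\Big)\,{\rm d}y\,.
\]
Since $\psi$ is a Schwartz function, $|\psi(z)|\le C_N(1+|z|)^{-N}$ for every $N$, and decomposing $\R^d$ into unit cubes and using translation boundedness of $\nu$ bounds the inner integral by a constant independent of $y$ as soon as $N>d$. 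Therefore $\int|\phi|\,{\rm d}\nu\le C\,\|\phi\|_{L^1(\R^d)}<\infty$, which completes the argument.

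The substantive point, and the only place where $\R^d$ is really used, is this last claim: the existence of Schwartz functions allows a band-limited function to be written as the convolution of itself with a rapidly decreasing function, which is precisely what forces the integral against a translation bounded measure to converge. No comparable mechanism is available in a general $\sigma$-compact LCA group, which is why Theorem~\ref{double} does not simplify in this way outside the Euclidean setting. The remaining steps — the reduction to condition~(ii) of Theorem~\ref{double} and the elementary manipulations with $K_2(\widehat{\R^d})$, band-limitedness and $L^1$-membership — are routine.
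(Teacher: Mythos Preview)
Your proof is correct. Both your argument and the paper's reduce the forward implication to Theorem~\ref{double} and reduce the converse to checking condition~(ii) there, i.e.\ $\widecheck g\in L^1(\mu^\dagger)$ for $g\in K_2(\widehat{\R^d})$. The difference lies in how this integrability is established. The paper invokes Lin's theorem \cite{lin}, which gives $\widecheck f\in L^2(|\mu^\dagger|)$ for every $f\in C_c(\widehat{\R^d})$ when $\mu^\dagger$ is translation bounded, and then uses Cauchy--Schwarz to get $\widecheck f\cdot\widecheck g\in L^1(|\mu^\dagger|)$. You instead prove the $L^1$-bound directly: writing the band-limited function $\phi=|\widecheck f|^2$ as $\psi*\phi$ with $\psi$ Schwartz (via a smooth cutoff on $\supp\widehat\phi$) and then controlling $\int|\phi|\,{\rm d}\nu$ by Tonelli and a cube decomposition. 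Your route is more self-contained, avoiding the external reference to Lin's equivalence-of-norms result, at the modest price of writing out a standard Schwartz-kernel estimate; the paper's route is terser but imports a nontrivial theorem. Both arguments use the Euclidean structure in essentially the same place, namely the existence of compactly supported smooth functions (or, equivalently, the Paley--Wiener-type behaviour of band-limited functions), which is exactly the point you make in your final paragraph.
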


\begin{proof}
The implication ``$\Rightarrow$'' follows from Theorem~\ref{double}. For the implication ``$\Rightarrow$'',  let $g \in C_c(\widehat{\R^d})$. Then $g \in L^2(\widehat{\R^d})$ has compact support. Since $\mu^\dagger$ is translation bounded, we have $\widecheck{g} \in L^2(|\mu^\dagger|)$ by Theorem 1 in \cite{lin}.
It follows immediately that for all $f,g \in C_c(\widehat{\R^d})$ we have $\widecheck{f}\cdot \widecheck{g} \in L^1(|\mu^\dagger|)$. By taking linear combinations, we obtain that condition (ii) of Theorem~\ref{double} is satisfied. This proves the implication and the remaining claim.
\end{proof}

As a consequence we get
\begin{cor} Let $\mathcal M_{T}^\infty(\R^d)$ denote the space of Fourier transformable translation bounded measures on $\R^d$. Then, the Fourier transform is a bijection from $\mathcal M_{T}^\infty(\R^d)$ into itself. \qed
\end{cor}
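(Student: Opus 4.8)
The plan is to read off the corollary directly from Theorem~\ref{double2}, the inversion theorem $\widehat{\widehat\mu}=\mu^\dagger$ (valid for twice transformable $\mu$), and the trivial fact that $\mu\mapsto\mu^\dagger$ is an involution on $\mathcal M(\R^d)$. Throughout one identifies $\widehat{\R^d}$ with $\R^d$ in the standard way, so that the Fourier transform becomes a self-map of $\mathcal M(\R^d)$.

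First I would check that the Fourier transform sends $\mathcal M_T^\infty(\R^d)$ into itself. Given $\mu\in\mathcal M_T^\infty(\R^d)$, the measure $\mu$ is transformable and translation bounded, so Theorem~\ref{double2} applies and shows that $\widehat\mu$ is again transformable; that $\widehat\mu$ is moreover translation bounded is Remark~\ref{rem:FTprop} (equivalently, the closing statement of Theorem~\ref{double}). Hence $\widehat\mu\in\mathcal M_T^\infty(\R^d)$, and, $\mu$ being now twice transformable, the inversion theorem gives $\widehat{\widehat\mu}=\mu^\dagger$.

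Injectivity is then immediate: if $\widehat\mu=\widehat\nu$ with $\mu,\nu\in\mathcal M_T^\infty(\R^d)$, applying the Fourier transform once more (legitimate, since $\widehat\mu$ and $\widehat\nu$ are transformable) and using the inversion theorem yields $\mu^\dagger=\widehat{\widehat\mu}=\widehat{\widehat\nu}=\nu^\dagger$, so $\mu=\nu$. For surjectivity, given $\nu\in\mathcal M_T^\infty(\R^d)$, I would apply the first step twice to obtain $\widehat{\widehat\nu}\in\mathcal M_T^\infty(\R^d)$, observe that $\widehat{\widehat\nu}=\nu^\dagger$ by the inversion theorem, so that $\nu^\dagger\in\mathcal M_T^\infty(\R^d)$, and then set $\mu:=\widehat{\nu^\dagger}$. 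The first step gives $\mu\in\mathcal M_T^\infty(\R^d)$, and since $\nu^\dagger$ is twice transformable, the inversion theorem gives $\widehat\mu=\widehat{\widehat{\nu^\dagger}}=(\nu^\dagger)^\dagger=\nu$, so the map is onto.

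I do not expect a genuine obstacle, as all the analytic content is already packaged into Theorem~\ref{double2}; the only point that needs a little care is that $\mathcal M_T^\infty(\R^d)$ is stable under $\mu\mapsto\mu^\dagger$, which is why in the surjectivity step I route through $\nu^\dagger=\widehat{\widehat\nu}$ rather than appealing to a commutation rule between $\dagger$ and the Fourier transform.
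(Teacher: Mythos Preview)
Your argument is correct and is exactly the kind of verification the paper leaves implicit: the corollary is stated with a bare \qed\ immediately after Theorem~\ref{double2}, so the paper's ``proof'' is simply that the claim follows at once from that theorem together with the inversion formula $\widehat{\widehat\mu}=\mu^\dagger$. Your careful routing of the surjectivity step through $\nu^\dagger=\widehat{\widehat\nu}$ (rather than invoking an unproved commutation $\widehat{\mu^\dagger}=(\widehat\mu)^\dagger$) is a nice touch, but otherwise there is nothing to add.
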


\subsection{Fourier-Bohr coefficients}\label{FB coefficients}

 For this subsection $G$ will be a $\sigma$-compact LCA group. Then, for a transformable translation bounded measure $\mu$, the pure point part of $\widehat\mu$ can be computed by a certain averaging procedure.  To define suitable averaging sequences, consider for $U,W\subset G$ the \textit{(generalised) van Hove boundary}
\begin{displaymath}
\partial^UW=((U+\mathrm{cl}(W))\cap \mathrm{cl}(W^c))\cup ((U+\mathrm{cl}(W^c))\cap \mathrm{cl}(W)),
\end{displaymath}
which was introduced in \cite[Eqn.~(1.1)]{Martin2}, see also \cite[Sec.~2.2]{mr13} for a discussion. As $\partial W=\partial^{\{e\}}W\subset \partial^U W$ for $U$ any unit neighbourhood, the van Hove boundary may be considered as a thickened topological boundary in that case. A \textit{(generalised) van Hove sequence} is a sequence  $(A_n)_{n\in\mathbb N}$  of compact sets  in $G$ of positive (and finite) Haar measure, $0<\theta_G(A_n)<\infty$, such that for all compact $K\subset G$ we have
\begin{equation}\label{eq:vh}
\lim_{n\to\infty} \frac{\theta_G(\partial^K A_n)}{\theta_G(A_n)}=0.
\end{equation}
Existence of van Hove sequences in $G$ is discussed in \cite{Martin2}. In Euclidean space, any sequence of non-empty closed rectangular boxes of diverging inradius is a van Hove sequence. Also any sequence of non-empty closed balls of diverging radius is a van Hove sequence.

\begin{prop}[Fourier-Bohr coefficients]\label{prop:Hofgen}
For a $\sigma$-compact LCA group $G$, let $\mu\in\mathcal M^\infty(G)$ be transformable and consider $\chi\in\widehat G$. Let $(A_n)_{n\in\mathbb N}$ be any van Hove sequence in $G$. Then for every $t\in G$ we have
\begin{displaymath}
\widehat \mu(\{\chi\})=\lim_{n\to\infty} \frac{1}{\theta_G(A_n)} \int_{t+A_n} \overline{\chi}(x) \, {\rm d} \mu(x).
\end{displaymath}
The convergence is uniform in $t\in G$.
\end{prop}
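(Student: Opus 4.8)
The plan is to reduce the statement to the known fact that the Fourier transform of a translation bounded transformable measure recovers the ``Bragg peak at $\chi$'' via averaging against $\overline{\chi}$ over a van Hove sequence, the key new features here being (a) that we allow an arbitrary translate $t$ and (b) that convergence is uniform in $t$. First I would reduce to the case $\chi = e$ (the trivial character): replacing $\mu$ by the measure $\overline{\chi}\cdot\mu$, which is again translation bounded and transformable with $\widehat{\overline{\chi}\cdot\mu}$ equal to $\widehat\mu$ translated by $\chi$, turns $\widehat\mu(\{\chi\})$ into $(\widehat{\overline{\chi}\cdot\mu})(\{e\})$ and the integrand $\overline{\chi}(x)\,{\rm d}\mu(x)$ into ${\rm d}(\overline{\chi}\cdot\mu)(x)$. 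So it suffices to show $\widehat\mu(\{e\}) = \lim_n \theta_G(A_n)^{-1}\mu(t+A_n)$ uniformly in $t$.

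The core analytic step is to test $\mu$ against a bump function and control the error. Fix a nonzero $h\in PK(G)$ with $\widehat h\ge 0$ (e.g. $h = 1_W*\widetilde{1_W}$ for a small symmetric unit neighbourhood $W$, which lies in $PK(G)\subset KL(G)$ by Remark~\ref{WKL}); after normalising we may assume $\widehat h(e)=1$. For $t\in G$ consider the translate $h_t(x)=h(x-t)$; then $\widehat{h_t}=\overline{(\cdot)(t)}\,\widehat h$, and since $\mu$ is transformable with $\widehat h_t\in L^1(\widehat\mu)$, Proposition~\ref{charFT} gives
\begin{displaymath}
\langle \mu, h_t\rangle = \langle \widehat\mu, \widehat{h_t}\rangle = \widehat\mu(\{e\}) + \int_{\widehat G\setminus\{e\}} \overline{\chi(t)}\,\widehat h(\chi)\,{\rm d}\widehat\mu(\chi).
\end{displaymath}
Here $\widehat\mu$ is translation bounded (Remark~\ref{rem:FTprop}), so $\widehat h\in L^1(\widehat\mu)$, and one can arrange $h$ so that $\widehat h$ is small (in $L^1(\widehat\mu)$) away from a small neighbourhood of $e$; combined with the fact that $\widehat\mu(\{e\})$ is the only atom contributing as $W$ shrinks, a two-parameter estimate (first choose $h$, then $n$ large) shows $\langle\mu,h_t\rangle$ is close to $\widehat\mu(\{e\})$ uniformly in $t$. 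Next, one compares the averaged measure $\theta_G(A_n)^{-1}\mu(t+A_n)$ with the averaged smoothed quantity $\theta_G(A_n)^{-1}\langle\mu, h*1_{t+A_n}\rangle$: the difference is supported on the van Hove boundary $\partial^K A_n$ with $K=\supp(h)$, and translation boundedness of $\mu$ bounds it by $C\,\theta_G(\partial^K A_n)/\theta_G(A_n)\to 0$, uniformly in $t$. Finally $\theta_G(A_n)^{-1}\langle\mu,h*1_{t+A_n}\rangle = \theta_G(A_n)^{-1}\int_{t+A_n}\langle\mu,h_{s}\rangle\,{\rm d}\theta_G(s)$ is an average of the quantities $\langle\mu,h_s\rangle$, each within $\varepsilon$ of $\widehat\mu(\{e\})$ uniformly, hence so is the average.

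The main obstacle is the uniformity in $t$ in the spectral estimate, i.e. showing that $\int_{\widehat G\setminus\{e\}}\overline{\chi(t)}\,\widehat h(\chi)\,{\rm d}\widehat\mu(\chi)$ can be made uniformly small. The pointwise-in-$t$ smallness is not automatic because $\widehat\mu$ need not be a finite measure on the complement of a neighbourhood of $e$; the resolution is to exploit that $\widehat h = \widehat{1_W}\cdot\overline{\widehat{1_W}} = |\widehat{1_W}|^2$ decays and that, by translation boundedness of $\widehat\mu$ together with $\widehat h\in L^1(\widehat\mu)$, the mass $\int_{\widehat G\setminus V}\widehat h\,{\rm d}\widehat\mu$ over the complement of any fixed neighbourhood $V$ of $e$ tends to $0$ as $W$ shrinks — and this bound is independent of $t$ since $|\overline{\chi(t)}|=1$. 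Then the only remaining $t$-dependence is in the finitely-massive piece near $e$, where $\widehat h(\chi)\to 1$ and $\overline{\chi(t)}\to 1$ as $W\to\{e\}$ uniformly on the (compact, after intersecting with where $\widehat\mu$ has bounded mass) relevant set; this forces the near-$e$ contribution to $\widehat\mu(\{e\})$ uniformly. Care must also be taken that $A_n$ are only assumed compact of positive finite measure (not van Hove neighbourhoods of each other), but the boundary estimate only uses \eqref{eq:vh}, so this causes no trouble.
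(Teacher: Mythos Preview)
Your overall architecture (reduce to $\chi=e$, smooth $1_{t+A_n}$ by a bump $h$, control the difference on the van Hove boundary uniformly in $t$, then analyse the spectral side via the PSF) is the same as the paper's, and your boundary estimate and reduction are fine. The gap is in the spectral estimate, where you try to make each individual $\langle \mu, h_t\rangle$ close to $\widehat\mu(\{e\})$ by ``shrinking $W$''.

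First, the direction is inverted: if $h=\theta_G(W)^{-2}\,1_W*\widetilde{1_W}$, then the normalised $\widecheck h=|\widecheck{1_W}|^2/\theta_G(W)^2$ satisfies $0\le\widecheck h\le 1$ and $\widecheck h\to 1$ \emph{pointwise} as $W\to\{e\}$ (for $G=\R$ this is $\mathrm{sinc}^2(2r\xi)\to 1$ as $r\to 0$). Hence $\int_{\widehat G\setminus V}\widecheck h\,{\rm d}|\widehat\mu|$ does not tend to $0$ when $W$ shrinks; if anything it grows. Second, even after reversing to ``$W$ large'', the claim that this tail integral tends to $0$ needs a dominated--convergence argument, and in a general LCA group there is no evident majorant in $L^1(|\widehat\mu|)$ for the whole family $\{\widecheck h_W\}$. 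In particular, your conclusion that a single $h$ can be chosen with $\sup_t\bigl|\langle\mu,h_t\rangle-\widehat\mu(\{e\})\bigr|<\varepsilon$ is unproved; since $\widecheck h$ is continuous with $\widecheck h(e)=1$, any $|\widehat\mu|$-mass sitting arbitrarily close to $e$ (pure point or continuous) obstructs a naive bound.

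The paper avoids this by \emph{not} trying to control $\langle\mu,h_s\rangle$ pointwise in $s$. It keeps $\varphi$ fixed, applies the PSF directly to $(f_n^t)_\varphi:=\varphi*f_n^t$ with $f_n^t=\theta_G(A_n)^{-1}1_{t+A_n}$, and on the spectral side observes that $\bigl|\widecheck{f_n^t}(\chi)\bigr|=\bigl|\theta_G(A_n)^{-1}\!\int_{A_n}\chi\,\bigr|$ is \emph{independent of $t$} and tends to $\delta_{\chi,e}$ by the elementary fact that non-trivial characters average to $0$ over a van Hove sequence. One then uses dominated convergence with the single fixed majorant $|\widecheck\varphi|\in L^1(|\widehat\mu|)$, obtaining the limit $\widehat\mu(\{e\})$ and its $t$-uniformity simultaneously. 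In fact, if in your step~6 you swap the order of integration instead of bounding termwise, you land exactly on $\langle\widehat\mu,\widecheck{(f_n^t)_h}\rangle$ and the paper's argument; bounding each $\langle\mu,h_s\rangle$ uniformly is the step that fails.
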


\begin{remark}
The above average is sometimes called the \textit{Fourier-Bohr coefficient} of $\mu$ at $\chi$, compare \cite[Def.~2.1]{BD} or \cite[Eq. 8.14]{ARMA}.
The proposition extends Hof's result \cite[Thm.~3.2]{Hof1} to the non-Euclidean setting. The proposition also generalises a result of Lenz \cite[Cor.~5]{L09}, which has been derived for positive definite measures using dynamical systems. The proposition can partly be deduced from \cite[Thm.~11.3]{ARMA} when transformability of $\widehat \mu$ is granted. A statement about uniform convergence appears, in the Euclidean setting, in \cite[Lemma~2.4]{BD}. A short proof based on almost periodicity may be given as in  \cite[Thm.~1.19]{MoSt}.
\end{remark}

Our proof is an adaption of Hof's arguments based on the PSF. Hence the statement is a rather direct consequence of transformability. Let us first recall a fundamental property of characters.

\begin{lemma}\label{lem:fl}
Consider any $\sigma$-compact LCA group $G$ and let $\chi\in \widehat G$.
Then for every van Hove sequence $(A_n)_{n\in\mathbb N}$ in $G$ we have
\begin{displaymath}
\lim_{n\to\infty} \frac{1}{\theta_G(A_n)} \int_{A_n} \chi(x) \,{\rm d}\theta_G(x) =\delta_{\chi,e}.
\end{displaymath}
\end{lemma}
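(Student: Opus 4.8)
The plan is to reduce to the case $t=0$ (this is harmless, as replacing $A_n$ by $t+A_n$ does not affect the van Hove property, and one can check uniformity afterwards), and then to distinguish the two cases $\chi = e$ and $\chi \neq e$. When $\chi = e$, the integrand is identically $1$, and
\begin{displaymath}
\frac{1}{\theta_G(A_n)}\int_{A_n} 1 \, {\rm d}\theta_G(x) = 1 = \delta_{e,e},
\end{displaymath}
so there is nothing to prove. The substance is the case $\chi \neq e$, where we must show the averages tend to $0$.

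For $\chi \neq e$, the key idea is that $\chi$ is a nontrivial character, so there exists $x_0 \in G$ with $\chi(x_0) \neq 1$. First I would write, using translation invariance of $\theta_G$ and the substitution $x \mapsto x + x_0$,
\begin{displaymath}
\int_{A_n} \chi(x)\,{\rm d}\theta_G(x) = \int_{x_0 + A_n} \chi(x - x_0)\,{\rm d}\theta_G(x) = \overline{\chi(x_0)}\int_{x_0+A_n}\chi(x)\,{\rm d}\theta_G(x).
\end{displaymath}
Hence
\begin{displaymath}
(1 - \overline{\chi(x_0)})\int_{A_n}\chi(x)\,{\rm d}\theta_G(x) = \int_{A_n}\chi\,{\rm d}\theta_G - \int_{x_0+A_n}\chi\,{\rm d}\theta_G = \int_{A_n \setminus (x_0+A_n)}\chi\,{\rm d}\theta_G - \int_{(x_0+A_n)\setminus A_n}\chi\,{\rm d}\theta_G.
\end{displaymath}
Since $|\chi| = 1$ pointwise, the right-hand side is bounded in absolute value by $\theta_G(A_n \triangle (x_0+A_n))$. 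The symmetric difference $A_n \triangle (x_0+A_n)$ is contained in $\partial^K A_n$ for $K = \{-x_0, e\}$ (or any compact unit neighbourhood containing $x_0$ and $-x_0$); indeed a point of $A_n$ not in $x_0+A_n$, together with a point of $x_0+A_n$ not in $A_n$, lies in the thickened boundary because translating by $\pm x_0$ moves it across $\mathrm{cl}(A_n^c)$. Therefore
\begin{displaymath}
\left|\frac{1}{\theta_G(A_n)}\int_{A_n}\chi\,{\rm d}\theta_G\right| \leq \frac{1}{|1-\overline{\chi(x_0)}|}\cdot\frac{\theta_G(\partial^K A_n)}{\theta_G(A_n)} \longrightarrow 0
\end{displaymath}
by the van Hove property \eqref{eq:vh}, since $|1 - \overline{\chi(x_0)}| > 0$ is a fixed positive constant. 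The argument for general $t$ is identical with $A_n$ replaced by $t+A_n$, and the bound obtained is independent of $t$ because $\theta_G(\partial^K(t+A_n)) = \theta_G(\partial^K A_n)$ and $\theta_G(t+A_n) = \theta_G(A_n)$; this gives the uniform convergence.

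The main obstacle I anticipate is the bookkeeping around the van Hove boundary: one must verify carefully that $A_n \triangle (x_0 + A_n) \subseteq \partial^K A_n$ for an appropriate compact $K$, working from the definition of $\partial^U W$ given in the excerpt and keeping track of closures (the sets $A_n$ are compact, hence closed, so $\mathrm{cl}(A_n) = A_n$, which simplifies matters, but $\mathrm{cl}(A_n^c)$ still needs attention). Everything else — the character manipulation, the triangle inequality, and invoking \eqref{eq:vh} — is routine. One should also double-check that a single fixed $x_0$ with $\chi(x_0)\neq 1$ suffices for all $n$ simultaneously, which it does since $x_0$ depends only on $\chi$.
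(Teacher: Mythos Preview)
Your proof is correct and follows essentially the same route as the paper: both handle $\chi=e$ trivially, then for $\chi\neq e$ pick a point $x_0$ with $\chi(x_0)\neq 1$, use translation invariance of Haar measure to compare $\int_{A_n}\chi$ with the integral over a translate, bound the difference by $\theta_G(A_n\triangle(x_0+A_n))\le\theta_G(\partial^{\{x_0\}}A_n)$, and invoke the van Hove property. Two cosmetic remarks: the lemma as stated has no parameter $t$ (that enters only in the subsequent proposition), and in your displayed identity the factor should be $\chi(x_0)$ rather than $\overline{\chi(x_0)}$ (from $\int_{A_n}\chi=\overline{\chi(x_0)}\int_{x_0+A_n}\chi$ one gets $\chi(x_0)\int_{A_n}\chi=\int_{x_0+A_n}\chi$), though this is harmless since $|1-\chi(x_0)|=|1-\overline{\chi(x_0)}|$.
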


\begin{proof}
The conclusion of the lemma clearly holds for $\chi=e$. Consider any character $\chi\ne e$ and fix any $y\in G$ such that $\chi(y)\ne 1$. By translation invariance of the Haar measure on $G$ and $\chi(y+x)=\chi(y)\chi(x)$ we have
\begin{equation*}\label{eq:unif}
\int_{A_n} \chi(x) \,{\rm d}\theta_G(x)=
\int_G 1_{A_n}(y+x) \chi(y+x) \,{\rm d}\theta_G(x)=\chi(y) \int_{-y+A_n} \chi(x) \,{\rm d}\theta_G(x).
\end{equation*}
Due to the van Hove property of $(A_n)_{n\in\mathbb N}$, we have
\begin{displaymath}
\left|\int_{-y+A_n} \chi(x) \,{\rm d}\theta_G(x)- \int_{A_n} \chi(x) \,{\rm d}\theta_G(x)\right|
\le \theta_G((-y+A_n) \Delta A_n)\le\theta_G(\partial^{\{-y\}}A_n),
\end{displaymath}
which is $o(\theta_G(A_n))$ as $n\to\infty$ since $(A_n)_n$ is a van Hove sequence. Combining the above properties yields
\begin{displaymath}
|1-\chi(y)|\cdot\left|\frac{1}{\theta_G(A_n)} \int_{A_n} \chi(x) \,{\rm d}\theta_G(x)\right|
=o(1)
\end{displaymath}
as $n\to\infty$. Since $\chi(y)\ne1$, the statement of the lemma follows.
\end{proof}

\begin{proof}[Proof of Proposition~\ref{prop:Hofgen}]
We prove the proposition for $\chi=e$. The general case then follows from $\widehat \mu(\{\chi\})=(\delta_{\chi^{-1}}*\widehat \mu)(\{e\})$ and $(\delta_{\chi^{-1}} *\widehat\mu)=\widehat {\overline \chi \mu}$. We give an approximation argument using sufficiently smooth compactly supported functions.

Define $\varphi=\psi * \widetilde{\psi}\in C_c(G)$ for some $\psi\in C_c(G)$ such that $\int \psi\,{\rm d}\theta =1$. Then also $\int \varphi\,{\rm d}\theta=1$. Fix any van Hove sequence $(A_n)_{n\in\mathbb N}$ in $G$ and define
\[
f_n^t(x):=\frac{1}{\theta_G(A_n)} 1_{t+A_n}(x) \,.\]
We will also denote $(f_n^t)_\varphi:= \varphi * f_n^t$.
Then $(f_n^t)_\varphi\in KL(G)$ by Lemma~\ref{lm1}, and Proposition~\ref{charFT} (ii) yields $\widecheck{(f_n^t)_\varphi}\in L^1(\widehat\mu)$ and
\begin{equation}\label{eqn:PSFFB}
\langle \mu, (f_n^t)_\varphi \rangle \,=\, \langle \widehat{\mu}, \widecheck{(f_n^t)_\varphi} \rangle \,.
\end{equation}
We will identify the limit in the statement of the proposition from the rhs. The lhs will yield uniform convergence.
Let us start with the rhs. We first note $\widecheck{(f_n^t)_\varphi}(\chi)\to\delta_{\chi,e}$.
Indeed, $\widecheck{(f_n^t)_\varphi}(e)=\widecheck{ \varphi}(e) \cdot \widecheck{f_n^t}(e) =1$ since $\varphi$ is normalised. Furthermore for $\chi\ne e$ we have

\begin{displaymath}
\left|\widecheck{(f_n^t)_\varphi}(\chi)\right|= \left|\widecheck{\varphi}(\chi)\right|\cdot\left|\widecheck{f_n^t}(\chi)\right|
\le  || \varphi||_1\cdot \left|\widecheck{ f_n^t}(\chi)\right|  \to 0
\end{displaymath}
by Lemma~\ref{lem:fl}, as $(t+A_n)_{n\in\mathbb N}$ is a van Hove sequence.  Since $\varphi\in KL(G)$ by Lemma~\ref{lm1}, we have $\widecheck{\varphi}\in L^1(\widehat \mu)$ by \cite[Prop.~3.1]{ARMA1}.
In fact $\widecheck{\varphi}$ is an integrable majorant of $\widecheck{(f_n^t)_\varphi}$ as
\begin{displaymath}
\left|\widecheck{(f_n^t)_\varphi}\right|= \left|\widecheck{\varphi}\right|  \cdot \left|\widecheck{f_n^t}\right|
\le |\widecheck{\varphi}|\cdot ||f_n^t||_1 =|\widecheck{\varphi}|.
\end{displaymath}
We can thus apply Lebesgue's dominated convergence theorem for complex measures  \cite[Rem.~14.23]{HeRo} to obtain
\begin{displaymath}
\lim_{n\to\infty}
\langle \widehat{\mu},  \widecheck{(f_n^t)_\varphi} \rangle \,=\, \langle \widehat{\mu}, \delta_{\chi,e} \rangle =\widehat \mu(\{e\}).
\end{displaymath}
Now consider the limit $n\to\infty$ of the lhs of \eqref{eqn:PSFFB}. Since $f_n^t$ is proportional to the characteristic function of $t+A_n$, by a standard tedious computation which we omit, we have that  $f_n^t(x)\ne (f_n^t)_\varphi(x)$ implies $x\in t+\partial^K A_n$, where $K=\mathrm{supp}(\varphi)$. Hence we have for all $n$ the estimate
\begin{displaymath}
|\langle \mu, f_n^t \rangle - \langle \mu, (f_n^t)_\varphi \rangle|\le ||1-\varphi||_\infty\cdot \frac{|\mu|(t+\partial^K A_n)}{\theta_G(A_n)} \,.
\end{displaymath}
The rhs vanishes as $n\to\infty$ uniformly in $t\in G$, by translation boundedness of $\mu$ and by the van Hove property of $(A_n)_{n\in\mathbb N}$. This may be seen by inspecting the proof of \cite[Lemma~9.2 (b)]{LR}. Hence existence and uniformity of the limit follow.
\end{proof}

\section{Fourier analysis of weighted model sets}\label{sec:fapm}

Let $G,H$ be LCA groups and assume that $\cL$ is a lattice in $G\times H$. We will consider measures on $G$ supported on certain projected lattice subsets. The goal of this section is to prove that their Fourier transform formula follows from the PSF of the underlying lattice and vice versa.
In the case of a $\sigma$-compact LCA group $G$, a consequence is a certain averaging property of such measures which is also known as the density formula.

\subsection{Cut-and-project schemes and weighted model sets} We recall the definitions of a cut-and-project scheme and of a model set, compare \cite{BG2}.

\begin{defi}[Cut-and-project scheme]\label{def:cp}    Let $G,H$ be LCA groups, and let $\cL$ be a lattice in $G\times H$, i.e., a discrete co-compact subgroup of $G\times H$. We call $(G, H, \cL)$ a \emph{cut-and-project scheme} if, with canonical projections $\pi^G: G\times H\to G$, $\pi^H: G\times H \to H$,
\begin{itemize}

\item[(i)] the restriction $\pi^G|_{\cL}$ of $\pi^G$ to $\cL$ is one-to-one,

\item[(ii)] $\pi^H(\cL)$  is dense in H.
\end{itemize}
\end{defi}

\begin{remark}
Note that condition (ii) may be assumed to hold without loss of generality by passing from $H$ to $\overline{\pi^H(\cL)}$. Since $\pi^G|_{\cL}$ is one-to-one,  $\pi^G|_{\cL}$ is a group isomorphism between $\cL$ and $L=\pi^G(\cL)$, and hence invertible in that case. We thus get a map $\star: L \to H$ via the composition
$L \overset{(\pi^G|_{\cL})^{-1}}{\longrightarrow}  \cL \overset{\pi^H}{\longrightarrow} L^\star$, which is also called the star map.
It is readily checked  that $\cL= \{ (x,x^\star) : x \in L \}$. We will not use the star map in the sequel, since a number of our results hold without assumption (i).
\end{remark}

\begin{remark}
Assume that $(G, H, \cL)$ is a cut-and-project scheme. Then the annihilator $\cL_0 \subset \widehat{G} \times \widehat{H}$ of $\cL$ is a lattice, the lattice dual to $\cL$. Let us write $\pi^{\widehat G}: \widehat G\times \widehat H\to \widehat G$, $\pi^{\widehat H}: \widehat G\times \widehat H \to \widehat H$ for the canonical projections and define $L_0=\pi^{\widehat G}(\cL_0)$.
Pontryagin duality can be used to show that $\pi^{\widehat G}|_{\cL_0}$ is one-to-one/dense if and only if $\pi^H|_\cL$ is dense/one-to-one and $\pi^{\widehat H}|_{\cL_0}$ is one-to-one/dense if and only if $\pi^G|_{\cL}$ is dense/one-to-one, see e.g.~\cite[Sec.~5]{RVM3}.  Hence
$(\widehat{G}, \widehat{H},  \cL_0)$ is also a cut-and-project scheme, and we have a star map $\star: L_0\to\widehat H$ as in $(G,H,\cL)$. This dual cut-and-project scheme describes the diffraction of a model set from $(G,H,\cL)$.
\end{remark}

\begin{defi}[Model set] \rm  Let a cut-and-project scheme $(G, H, \cL)$ and a \textit{window} $W\subset H$ be given, where the latter is assumed to be relatively compact and measurable. Then $\oplam(W)=\pi^G(\cL\cap (G\times W))$ is called a \textit{weak model set}, compare \cite{RVM1, HR15}. If $W$ has non-empty interior, then $\oplam(W)$ is called a \textit{model set}.
We say that $\oplam(W)$ is a \emph{regular model set} if $W$ relatively compact, measurable, has non-empty interior and $\theta_H(\partial W)=0$.
\end{defi}

\begin{remark}[Delone sets]\label{rem:udrd}
Recall that $D\subset G$ is uniformly discrete if there exists a non-empty open set $U\subset G$ such that $x+U$ contains at most one point of $D$ for every $x\in G$. The set $D\subset G$ is called relatively dense in $G$ if there is a compact set $K\subset G$ such that $D+K=G$. If $D\subset G$ is both uniformly discrete and relatively dense, then $D$ is called a \textit{Delone set}.
If $W\subset H$ is relatively compact, then $\oplam(W)$ is uniformly discrete. This is a simple consequence of uniform discreteness of $\cL$.
If $W\subset H$ has non-empty interior, then $\oplam(W)$ is relatively dense. For details of the arguments see e.g.~\cite[Prop.~2.6~(i)]{RVM3}.
\end{remark}

\begin{defi}[Weighted model set] Let $(G,H,\cL)$ be a cut-and-project scheme. Any function $h : H \to \C$ is called a \emph{weight function} on $H$. Assume that $h$ is a weight function such that
\begin{displaymath}
\omega_h:= \sum_{(x,y)\in \cL} h(y) \delta_x
\end{displaymath}
is a measure on $G$. Then $\omega_h\in\mathcal M(G)$ is called a \emph{weighted model set from $(G,H,\cL)$}.
\end{defi}

\begin{remark} We will be interested in weight functions $h$ such that $\omega_h$ is a translation bounded or transformable measure. Any weak model set $\oplam(W)$ leads to the weighted model set
$\omega_h=\delta_{\oplam(W)}$ with $h=1_W$. In fact $\omega_h$ is then a translation bounded measure since $\oplam(W)$ is uniformly discrete, but a non-periodic $\omega_h$ may not be transformable, see the example \cite[p.~37]{Hof1}.
\end{remark}

An important class of weighted model sets arises from Riemann integrable weight functions. More generally, the following lemma holds.

\begin{lemma}\label{lem:tb} Let $h : H \to \C$ be a bounded and compactly supported weight function. Then $\omega_h$ is a weighted model set. In fact $\omega_h$ is a uniformly translation bounded measure on $G$.
\end{lemma}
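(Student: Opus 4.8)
The plan is to show first that $\omega_h$ is a well-defined measure, and then to establish the claimed uniform translation bound. Since $h$ is bounded and compactly supported, write $K=\mathrm{supp}(h)\subset H$, a compact set, and $M=\|h\|_\infty<\infty$. The key geometric fact is that $\cL\cap(G\times K)$ projects under $\pi^G$ to a uniformly discrete set in $G$: this is exactly Remark~\ref{rem:udrd} applied to the relatively compact window $K$, and it follows from uniform discreteness of the lattice $\cL$. In particular, for every compact $C\subset G$ the set $\cL\cap(C\times K)$ is finite, so $\omega_h$ restricted to $C$ is a finite sum of point masses, each of absolute value at most $M$. Hence $\omega_h=\sum_{(x,y)\in\cL}h(y)\delta_x$ is a locally finite complex measure on $G$, i.e.\ $\omega_h\in\mathcal M(G)$, and it is a weighted model set by definition.

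Next I would produce the uniform translation bound. Fix any compact $C\subset G$; I need $\sup_{t\in G}|\omega_h|(t+C)<\infty$. We have
\begin{displaymath}
|\omega_h|(t+C)\;=\;\sum_{(x,y)\in\cL,\ x\in t+C}|h(y)|\;\le\;M\cdot\#\bigl(\cL\cap((t+C)\times K)\bigr).
\end{displaymath}
So it suffices to bound $\#(\cL\cap((t+C)\times K))$ uniformly in $t$. This is a standard lattice-counting estimate: $(t+C)\times K$ is a translate (by $(t,e)$) of the fixed relatively compact set $C\times K\subset G\times H$, and a lattice meets every translate of a fixed relatively compact set in a number of points bounded independently of the translate. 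Concretely, cover $C\times K$ by finitely many translates of a relatively compact fundamental-domain-type set (or a unit neighbourhood) $V$ for $\cL$ with $\#(\cL\cap(s+V))\le 1$ for all $s$; if $N$ translates suffice to cover $C\times K$, then $\#(\cL\cap((t,e)+(C\times K)))\le N$ for every $t$. Therefore $|\omega_h|(t+C)\le MN$ for all $t\in G$, which is the required uniform bound; "uniformly translation bounded" is just the assertion that the bounding constant for each compact $C$ does not depend on $t$, which is what we obtained.

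I do not expect any genuine obstacle here — the statement is a packaging of uniform discreteness of $\oplam(K)$ plus a trivial counting argument — but the one point that deserves a careful sentence is why $\#(\cL\cap((t+C)\times K))$ is bounded \emph{uniformly} in $t$ rather than merely finite for each $t$. The clean way is the covering argument above; alternatively one can invoke the relative compactness of $C\times K$ together with discreteness of $\cL$ directly via the fact that $\cL$ as a translation bounded measure (its Dirac comb $\delta_\cL$ is translation bounded) has $\sup_{s\in G\times H}\delta_\cL(s+(C\times K))<\infty$. Since the bound produced depends only on $C$ (through the number $N$ of covering translates) and on $M$, and not on $t$, uniform translation boundedness follows, completing the proof.
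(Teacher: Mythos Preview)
Your argument correctly establishes that $\omega_h$ is a measure and that it is translation bounded, and the covering argument you give is essentially the same as the paper's appeal to uniform discreteness of $\cL$. However, you have misread the definition of ``uniformly translation bounded''. You write that it ``is just the assertion that the bounding constant for each compact $C$ does not depend on $t$''; but that is already the content of ordinary translation boundedness. The paper's Remark following the lemma defines $\omega_h$ to be \emph{uniformly translation bounded} if the whole family $\{\omega_{\delta_t*h}:t\in H\}$ of weighted model sets obtained by shifting the weight function in $H$ is translation bounded with a common constant. Thus you must show
\[
\sup_{t\in H}\ \sup_{s\in G}\ |\omega_{\delta_t*h}|(s+C)<\infty
\]
for every compact $C\subset G$, not merely the inner supremum for $t=e$.

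The fix is immediate with your own method: the set $(s+C)\times(t+K)$ is the translate of $C\times K$ by $(s,t)\in G\times H$, and your covering argument (or the translation boundedness of $\delta_{\cL}$) bounds $\#(\cL\cap((s,t)+(C\times K)))$ uniformly in $(s,t)$, not just in $(s,e)$. This is exactly what the paper does, taking the supremum over both $s\in G$ and $t\in H$ simultaneously. So the approach is right, but as written you have proved only translation boundedness and not the stronger uniform statement.
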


\begin{remark}
We call  $\omega_h\in\mathcal M(G)$  \textit{uniformly translation bounded} if  the collection $\{\omega_{\delta_t*h}: t\in H\}$ is uniformly translation bounded,  where $(\delta_t*h)=h(\cdot-t)$. Keep in mind that this definition depends on the particular choice of $h$.
\end{remark}

\begin{proof} Since $h$ is bounded and of compact support, we find $0\le c<\infty$ and compact $W\subset H$ such that $|h|\le c\cdot1_W$. Fix arbitrary compact $K\subset G$ and $s\in G$, $t\in H$. We then have
\begin{displaymath}
\begin{split}
\left| \omega_{\delta_t*h}(s+K) \right| &=\left|\sum_{(x,y)\in \cL} (\delta_t*h)(y)1_{s+K}(x)\right| \leq\sum_{(x,y)\in \cL} c\cdot1_{t+W}(y)1_{s+K}(x)\\
&\le  c\cdot \sup_{(s,t)\in G\times H} \sharp (\cL\cap ((s+K)\times (t+W)))<\infty \,.
\end{split}
\end{displaymath}
By uniform discreteness of the lattice $\cL$, the term on the rhs is a finite constant which does not depend on $s\in G$ or $t\in H$. This shows that $\omega_h$ is a uniformly translation bounded measure and, in particular, a weighted model set.
\end{proof}

\subsection{Generalised PSF for weighted model sets}

The theorems of this subsection form the heart of this paper. We remark that conditions (i), (ii) of Definition~\ref{def:cp} are not used in the proofs. We first consider weighted model sets with positive definite weight functions. Note that part (i)$\Rightarrow$(ii) in the following theorem extends \cite[Lemma~9.3]{BG2} to the non-Euclidean setting.

\begin{theorem}[PSF for weighted model sets]\label{equiv PSF -diff-dens}

Let $(G,H,\cL)$ be a cut-and-project scheme with dual cut-and-project scheme $(\widehat G, \widehat H, \cL_0)$.
 Then the following are equivalent.

\begin{itemize}
\item[(i)] The lattice Dirac comb $\delta_{\cL}\in \mathcal M^\infty(G\times H)$ is transformable and satisfies the PSF
\begin{displaymath}
\widehat{\delta_{\cL}}=\mathrm{dens}(\cL) \cdot \delta_{\cL_0}.
\end{displaymath}
\item[(ii)]
For every $h\in KL(H)$, the weighted model set $\omega_h\in \mathcal M^\infty(G)$ is uniformly translation bounded, transformable and satisfies, with $\omega_{\widecheck h}\in \mathcal M^\infty(\widehat G)$,  the generalised PSF
\begin{displaymath}
\widehat{\omega_h}= \mathrm{dens}(\cL)\cdot \omega_{\widecheck{h}}\,.
\end{displaymath}

\item[(iii)]
For every $h\in K_2(H)$, the weighted model set $\omega_h\in \mathcal M^\infty(G)$ is uniformly translation bounded, transformable and satisfies, with $\omega_{\widecheck h}\in \mathcal M^\infty(\widehat G)$,  the generalised PSF
\begin{displaymath}
\widehat{\omega_h}= \mathrm{dens}(\cL)\cdot \omega_{\widecheck{h}}\,.
\end{displaymath}

\end{itemize}
\end{theorem}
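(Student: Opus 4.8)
The plan is to establish the cycle $(i)\Rightarrow(ii)\Rightarrow(iii)\Rightarrow(i)$, exactly as the proof of Proposition~\ref{charFT} reduces transformability from $KL$ to $K_2$. The implication $(ii)\Rightarrow(iii)$ is immediate since $K_2(H)\subset KL(H)$ (recall the chain $K_2(H)\subset PK(H)\subset KL(H)$ established earlier), so the real content lies in the other two implications, and the heart is $(i)\Rightarrow(ii)$.

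For $(i)\Rightarrow(ii)$, fix $h\in KL(H)$. First I would dispose of the structural claims: $\omega_h$ is a weighted model set and is uniformly translation bounded because $h$ is bounded (being in $KL(H)\subset C_c(H)$ it is in particular bounded and compactly supported) — this is exactly Lemma~\ref{lem:tb}. To obtain transformability together with the formula $\widehat{\omega_h}=\mathrm{dens}(\cL)\cdot\omega_{\widecheck h}$, I would verify criterion (ii) of Proposition~\ref{charFT} for the pair $\mu=\omega_h\in\mathcal M(G)$ and $\nu=\mathrm{dens}(\cL)\cdot\omega_{\widecheck h}\in\mathcal M(\widehat G)$. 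So take $f\in KL(G)$ and compute $\langle\omega_h,f\rangle=\sum_{(x,y)\in\cL}h(y)f(x)=\langle\delta_{\cL},f\otimes h\rangle$, where $(f\otimes h)(x,y)=f(x)h(y)$. The key point is that $f\otimes h\in KL(G\times H)$: both $f$ and $h$ are continuous and compactly supported with $\widehat f\in L^1(\widehat G)$ and $\widehat h\in L^1(\widehat H)$, and since $\widehat{f\otimes h}=\widehat f\otimes\widehat h$ (up to the identification $\widehat{G\times H}\cong\widehat G\times\widehat H$), Fubini gives $\widehat{f\otimes h}\in L^1(\widehat G\times\widehat H)$. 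Now apply the lattice PSF in the form of Proposition~\ref{charFT}~(ii) to $\delta_{\cL}$ with test function $f\otimes h$: using hypothesis $(i)$ this yields $\langle\delta_{\cL},f\otimes h\rangle=\langle\mathrm{dens}(\cL)\cdot\delta_{\cL_0},\widecheck f\otimes\widecheck h\rangle=\mathrm{dens}(\cL)\sum_{(\chi,\eta)\in\cL_0}\widecheck f(\chi)\widecheck h(\eta)=\mathrm{dens}(\cL)\langle\omega_{\widecheck h},\widecheck f\rangle=\langle\nu,\widecheck f\rangle$. One also needs $\widecheck f\in L^1(\nu)$, which follows from the absolute convergence of the same double sum (guaranteed because $\widecheck f\otimes\widecheck h\in L^1(\cL_0)$ by the remark after Theorem~\ref{PSFL-I}, applied in $G\times H$). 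This is precisely condition (ii) of Proposition~\ref{charFT}, so $\omega_h$ is transformable with the asserted Fourier transform; translation boundedness of $\omega_{\widecheck h}$ then follows from Remark~\ref{rem:FTprop}, or directly from Lemma~\ref{lem:tb} applied in the dual scheme since $\widecheck h\in KL(\widehat H)$.

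For $(iii)\Rightarrow(i)$, I would run the previous computation in reverse. Since $K_2(G\times H)$ is spanned by products $(f_1\otimes h_1)*(f_2\otimes h_2)=(f_1*f_2)\otimes(h_1*h_2)$ with $f_i\otimes h_i\in C_c(G\times H)$ — and one checks $f_1*f_2\in K_2(G)$, $h_1*h_2\in K_2(H)$ by polarisation — it suffices by Proposition~\ref{charFT}~(iv) to verify the PSF identity $\langle\delta_{\cL},F\rangle=\langle\mathrm{dens}(\cL)\cdot\delta_{\cL_0},\widecheck F\rangle$ for such elementary tensors $F=\tilde f\otimes\tilde h$ with $\tilde f\in K_2(G)$, $\tilde h\in K_2(H)$ (and hence for their span, which is $K_2(G\times H)$). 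But $\langle\delta_{\cL},\tilde f\otimes\tilde h\rangle=\langle\omega_{\tilde h},\tilde f\rangle$, and by hypothesis $(iii)$ the latter equals $\langle\widehat{\omega_{\tilde h}},\widecheck{\tilde f}\rangle=\mathrm{dens}(\cL)\langle\omega_{\widecheck{\tilde h}},\widecheck{\tilde f}\rangle=\mathrm{dens}(\cL)\langle\delta_{\cL_0},\widecheck{\tilde f}\otimes\widecheck{\tilde h}\rangle=\langle\mathrm{dens}(\cL)\cdot\delta_{\cL_0},\widecheck F\rangle$; the integrability $\widecheck F\in L^1(\mathrm{dens}(\cL)\cdot\delta_{\cL_0})$ comes along from the same chain. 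Applying Proposition~\ref{charFT}~(iv) in $G\times H$ then gives transformability of $\delta_{\cL}$ with $\widehat{\delta_{\cL}}=\mathrm{dens}(\cL)\cdot\delta_{\cL_0}$, which is $(i)$.

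**Main obstacle.** The only genuinely non-routine point is justifying the passage between test functions on $G\times H$ and tensor products $f\otimes h$: one must confirm that $\widehat{f\otimes h}=\widehat f\otimes\widehat h$ under $\widehat{G\times H}\cong\widehat G\times\widehat H$ with a compatible normalisation of Haar measures (so that the Plancherel measure on $\widehat{G\times H}$ factors as the product, making $\mathrm{dens}(\cL)$ the same constant on both sides), and that tensor products of functions in $K_2$ (resp. $KL$) land in $K_2(G\times H)$ (resp. $KL(G\times H)$) and moreover span a subspace dense/large enough for Proposition~\ref{charFT}. The density-of-tensors point is what makes the $K_2$-version $(iii)$ strong enough to recover the full PSF; it rests on $K_2(G\times H)=\mathrm{span}\{(f_1*f_2)\otimes(h_1*h_2)\}$, which in turn follows from $C_c(G)\otimes C_c(H)$ being dense in $C_c(G\times H)$ together with the polarisation identity used to define $K_2$. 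I expect all of this to be standard but would state it carefully, perhaps as a small lemma on tensor products of transformable measures, before entering the main argument.
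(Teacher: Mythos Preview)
Your argument for $(i)\Rightarrow(ii)$ is essentially the paper's, but with one omission: before you may invoke Proposition~\ref{charFT}~(ii) you must already know that $\nu=\mathrm{dens}(\cL)\cdot\omega_{\widecheck h}$ is a \emph{measure} on $\widehat G$. The series $\sum_{(\chi,\eta)\in\cL_0}\widecheck f(\chi)\widecheck h(\eta)$ being absolutely convergent for $f\in KL(G)$ does not by itself give this, since the test functions in the Riesz representation theorem run over all of $C_c(\widehat G)$. The paper closes this by dominating an arbitrary $f\in C_c(\widehat G)$ with support in a compact $K$ by some $\widecheck g$ with $g\in K_2(G)$ and $\widecheck g\ge 1_K$. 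Your fallback ``$\widecheck h\in KL(\widehat H)$, so apply Lemma~\ref{lem:tb} in the dual scheme'' does not work: $\widecheck h$ lies in $C_0(\widehat H)\cap L^1(\widehat H)$ but is not compactly supported, so it is neither in $KL(\widehat H)$ nor covered by Lemma~\ref{lem:tb}.

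The serious gap is in $(iii)\Rightarrow(i)$. Your plan hinges on the assertion
\[
K_2(G\times H)=\mathrm{span}\{\,g\odot h : g\in K_2(G),\ h\in K_2(H)\,\},
\]
which is false. Density of $C_c(G)\otimes C_c(H)$ in $C_c(G\times H)$ gives only that finite sums of elementary tensors are dense, not that every $F*\widetilde F$ with $F\in C_c(G\times H)$ is a \emph{finite} linear combination of products $(f*\widetilde f)\odot(h*\widetilde h)$. For instance, in $G=H=\R$ a radial $F\in C_c(\R^2)$ produces a radial $|\widecheck F|^2$, which is not a finite sum of functions of the form $u(\xi)v(\eta)$. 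Since Proposition~\ref{charFT}~(iv) demands the identity and the $L^1$-condition for \emph{every} element of $K_2(G\times H)$, and one cannot pass to limits without a priori control of $\langle\delta_{\cL_0},\widecheck F\rangle$, the reduction to elementary tensors fails.

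The paper circumvents this by a multiplier trick that you should adopt: given a positive definite $f\in K_2(G\times H)$, choose $g\in K_2(G)$ and $h\in K_2(H)$ with $g\odot h\equiv 1$ on $\mathrm{supp}(f)$, so that $f\cdot(g\odot h)=f$ and $\widecheck f*\widecheck{(g\odot h)}=\widecheck f$. One then shows, using $(iii)$ and the closure of $K_2$ under multiplication by characters, that $\mathrm{dens}(\cL)\cdot\widehat{(g\odot h)}*\delta_{\cL_0}$ is the Fourier transform of the finite measure $(g\odot h)\cdot\delta_{\cL}$; a Tonelli argument (requiring positive definiteness, hence depolarisation for general $g\odot h$) then converts $\langle\widehat{(g\odot h)}*\delta_{\cL_0},\widecheck f\rangle$ into $\langle\delta_{\cL_0},\widecheck f*\widecheck{(g\odot h)}\rangle=\langle\delta_{\cL_0},\widecheck f\rangle$, yielding both the integrability $\widecheck f\in L^1(\delta_{\cL_0})$ and the PSF identity for $f$. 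This four-step argument is what replaces your span claim.
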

The following remark serves as a preparation for the proof of Theorem~\ref{equiv PSF -diff-dens}.

\begin{remark}[test functions in product spaces]\label{rem:closure}
For given functions $g : G \to \C$ and $h : H \to \C$, we will denote by $g \odot h$ the function $G \times H \to \C$ given by $(g \odot h) (s,t)= g(s)\cdot h(t)$.
It is easy to see that $g \in C_c(G)$ and $h \in C_c(H)$ imply $g \odot h \in C_c(G \times H)$, and that $g \in K_2(G)$ and $h \in K_2(H)$ imply $g \odot h \in K_2(G \times H)$. Similarly, $g \in KL(G)$ and $h \in KL(H)$ imply $g \odot h \in KL(G \times H)$, and $g \in PK(G)$ and $h \in PK(H)$ imply $g \odot h \in PK(G \times H)$ by \cite[Lemma~3.2]{BF}.
\end{remark}

\begin{proof}[Proof of Theorem~\ref{equiv PSF -diff-dens} ``(i) $\Rightarrow$ (ii)'']

Fix any $h \in KL(H)$.  Then $\omega_h$ is a uniformly translation bounded measure since $\delta_{\cL}$ is translation bounded, see Lemma~\ref{lem:tb}.
%The claim $\omega_h\in\mathcal{M}^\infty(G)$ for $h\in KL(H)$ is Lemma~\ref{lem:tb}. Part (i) is proved by rethinking the arguments in the proof of Theorem \ref{equiv PSF -diff-dens}. Here we use that $g\cdot h\in KL(G\times H)$ by Remark~\ref{rem:closure},  and hence we can apply the lattice PSF Theorem~\ref{PSFL-I}.

We show that the linear functional $\omega_{\widecheck{h}}: f\mapsto \omega_{\widecheck{h}}(f)$ is a measure. Let $K \subset \widehat{G}$ be any compact set. By \cite[Lemma~3.4.5]{DE} %\cite{BF,MoSt},
we can find some $g'\in C_c(G)$ such that  $g= g'*\widetilde{ g'}$ satisfies $\widecheck{g}=|\widecheck{g'}|^2 \geq 1_K$.
%By eventually replacing $g$ by $g*\widetilde{g}$ we can assume without loss of generality that $g \in KL(G)$.
In particular we have $g\in KL(G)$, which implies $g \odot h \in KL(G \times H)$ by Remark~\ref{rem:closure}. As a consequence we have  $\widecheck{g}\odot \widecheck{h}\in L^1(\delta_{\cL_0})$ by assumption and Theorem~\ref{PSFL-I}. In particular we have
%{\color{red} Since $g \in KL(G)$ and $h \in KL(H)$, we have $g \odot h \in KL(G \times H)$. By (i) we have $\delta_{\cL}$ is Fourier transformable and $\widehat{\delta_{\cL}}=\mathrm{dens}(\cL) \cdot \delta_{\cL_0}$. Therefore, by Prop.~\ref{charFT} (ii) we have $\widecheck{ g \odot h} \in L^1(\mathrm{dens}(\cL) \cdot \delta_{\cL_0})$ meaning $\left| \widecheck{g} \odot \widecheck{h} \right| \in L^1(\delta_{\cL_0})$. Therefore}
%
 \begin{displaymath}
 \sum_{(\chi,\eta) \in \cL_0} |\widecheck{g}(\chi) \widecheck{h}(\eta)| =: C < \infty \,.
 \end{displaymath}
Let now $f \in C_c(\widehat{G})$ be such that $\supp(f) \subset K$. Then, as $|f| \leq \| f\|_\infty \cdot \widecheck{g}$ we have
 \begin{displaymath}
 \sum_{(\chi,\eta) \in \cL_0} \left|f(\chi)\right| |\widecheck{h}(\eta)|  \leq  \sum_{(\chi,\eta) \in \cL_0} \| f \|_\infty |\widecheck{g}(\chi)| |\widecheck{h}(\eta)|=C \| f \|_\infty  \,.
 \end{displaymath}
This shows that
 \begin{displaymath}
 \omega_{\widecheck{h}}(f):= \sum_{(\chi,\eta) \in \cL_0} f(\chi) \widecheck{h}(\eta)
 \end{displaymath}
is absolutely convergent, and therefore $\omega_{\widecheck{h}}(f)$ is well defined. Moreover, we have
 \begin{displaymath}
 \left| \omega_{\widecheck{h}}(f) \right| \leq \sum_{(\chi,\eta) \in \cL_0} \left|f(\chi)\right| |\widecheck h(\eta)|\leq C \| f \|_\infty  \,.
 \end{displaymath}
Therefore, for all $f \in C_c(\widehat{G})$ with $\supp(f) \subset K$ we have
 \begin{displaymath}
 \left| \omega_{\widecheck{h}}(f) \right| \leq C \| f \|_\infty  \,.
 \end{displaymath}
Since the constant $C$ depends only on $K$, it follows from the Riesz representation theorem that $\omega_{\widecheck{h}}$ is a measure.

Now consider arbitrary $g\in KL(G)$. Then $g \odot h \in KL(G \times H)$ by Remark~\ref{rem:closure}. By assumption and Theorem~\ref{PSFL-I} we have $\widecheck{g}\odot \widecheck{h}\in L^1(\delta_{\cL_0})$ and hence
\begin{displaymath}
\langle \delta_{\cL}, g\odot h \rangle=\mathrm{dens}(\cL)\cdot \langle \delta_{\cL_0}, \widecheck{g}\odot \widecheck{h} \rangle \in \mathbb C\,.
\end{displaymath}
By definition of $\omega_h$ we have $\langle \omega_h, g \rangle=\langle \delta_{\cL}, g\odot h \rangle$.  For the rhs of the above equation we note that by definition of the linear functional $\omega_{\widecheck{h}}$ we have
\begin{displaymath}
\langle \delta_{\cL_0}, \widecheck{g}\odot \widecheck{h} \rangle = \sum_{(\chi,\psi) \in \cL_0} \widecheck{h}(\psi)\cdot \widecheck{g}(\chi)=\langle\omega_{\widecheck{h}}, \widecheck{g}\rangle \,.
\end{displaymath}
Therefore we have for arbitrary $g \in KL(G)$ the equality
\begin{displaymath}
\langle \omega_h, g\rangle = \mathrm{dens}(\cL)\cdot\langle\omega_{\widecheck{h}}, \widecheck{g}\rangle \, .
\end{displaymath}
%
%Next, consider for arbitrary $g\in K_2(G)$ the equation
%\begin{displaymath}
%\langle \omega_h ,g\rangle = \langle \delta_{\cL}, g\odot h \rangle
%=\mathrm{dens}(\cL)\cdot\langle \delta_{\cL_0}, \widecheck{g}\odot \widecheck{h} \rangle
%=\mathrm{dens}(\cL)\cdot\langle \omega_{\widecheck h}, \widecheck g\rangle \, ,
%\end{displaymath}
%
%compare Equation~\eqref{PSF-rms}.
As $\cL$ is transformable, we have in particular $\widecheck g\in L^1(\omega_{\widecheck h})$ since $\widecheck g\odot \widecheck h\in L^1(\delta_{\cL_0})$. Hence $\omega_h$ is transformable by Proposition~\ref{charFT} (ii), with $\widehat{\omega_h}=\mathrm{dens}(\cL)\cdot \omega_{\widecheck{h}}$.
Translation boundedness of $\omega_{\widecheck{h}}$ follows from Remark~\ref{rem:FTprop}.

\end{proof}

As the statement  ``(ii) $\Rightarrow$ (iii)'' is trivial, it remains to show  ``(iii) $\Rightarrow$ (i)''.

\begin{proof}[Proof of Theorem~\ref{equiv PSF -diff-dens} ``(iii) $\Rightarrow$ (i)'']
The claim $\delta_\mathcal{L}\in\mathcal M^\infty(G\times H)$ follows from uniform translation boundedness of the measures $\omega_h$ for $h\in K_2(H)$ by elementary estimates.
Let $g\in K_2(G)$ and $h\in K_2(H)$. As before, from (iii) it is immediate that
\begin{equation}\label{help:eq}
\langle \delta_{\cL}, g\odot h \rangle=\mathrm{dens}(\cL)\cdot\langle \delta_{\cL_0}, \widecheck{g}\odot \widecheck{h} \rangle \,.
\end{equation}
By (iii) we have $\widecheck g \in L^1( \omega_{\widecheck h} )$, which means $\widecheck{g} \odot \widecheck{h} \in L^1(\delta_{\cL_0} )$.
Hence the PSF holds for all functions in $K_2(G)\odot K_2(H):=\{ g \odot h \,|\, g \in K_2(G), h \in K_2(H) \}$.
We split the proof for general functions from $K_2(G\times H)$ in four steps.

\emph{Step 1:} We show for any $g \in K_2(G)$ and $h \in K_2(H)$ that $\widehat{(g \odot h)}$ is convolvable with $\delta_{\cL_0}$ and that the function $\mathrm{dens}(\cL) \cdot\widehat{(g \odot h)} * \delta_{\cL_0}$ is the Fourier transform of the finite measure $(g \odot h)\cdot\delta_{\cL}$. In particular, this implies that $\widehat{(g \odot h)} * \delta_{\cL_0} \in C_U(\widehat{G}\times \widehat{H})$.

Let $(\chi, \psi) \in \widehat{G} \times \widehat{H}$. As $K_2(G)$ and $K_2(H)$ are closed under multiplication by characters, $(\overline{\chi} g) \odot (\overline{\psi} h) \in K_2(G)\odot K_2(H)$. Therefore, by the above we have $\widecheck{\overline{\chi} g} \odot \widecheck{\overline{\psi} h} \in L^1(\delta_{\cL_0} )$ and
$\langle \delta_{\cL}, (\overline{\chi} g) \odot (\overline{\psi} h)  \rangle=\mathrm{dens}(\cL)\cdot\langle \delta_{\cL_0}, \widecheck{\overline{\chi} g} \odot \widecheck{\overline{\psi} h} \rangle$. Therefore, we have
\begin{eqnarray*}
\begin{split}
\int_{G \times H} \overline{\psi(s) \chi(t)} \, & {\rm d} [(g \odot h)\cdot \delta_{\cL}](s,t) = \langle \delta_{\cL}, (\overline{\chi} g) \odot (\overline{\psi} h)  \rangle \\
&=\mathrm{dens}(\cL)\cdot\langle \delta_{\cL_0}, \widecheck{\overline{\chi} g} \odot \widecheck{\overline{\psi} h} \rangle =\mathrm{dens}(\cL)\cdot\langle \delta_{\cL_0}, T_{\chi, \psi} ( \widecheck{ g} \odot \widecheck{ h}) \rangle \\
&=\mathrm{dens}(\cL) \cdot (\widehat{g}\odot \widehat{h}) *  \delta_{\cL_0}(\chi, \psi),
\end{split}
\end{eqnarray*}
where the convolution makes sense as $T_{\chi, \psi} ( \widecheck{ g} \odot \widecheck{ h}) =\widecheck{\overline{\chi} g} \odot \widecheck{\overline{\psi} h} \in L^1(\delta_{\cL_0} )$.
This shows that the function $\mathrm{dens}(\cL) \cdot\widehat{(g \odot h)} * \delta_{\cL_0}$ is the Fourier transform of the finite measure $(g \odot h)\cdot \delta_{\cL}$.

\medskip

\emph{Step 2:} For any $g \in K_2(G)$, $h \in K_2(H)$ and  $f \in K_2(G \times H)$ we have $\widecheck{f} \in L^1( \widehat{(g \odot h)} * \delta_{\cL_0})$ and
\[
\langle \delta_{\cL} , f\cdot (g \odot h) \rangle =\mathrm{dens}(\cL) \cdot \langle  \widehat{(g \odot h)} * \delta_{\cL_0} \cdot \theta_{\widehat G\times \widehat H}, \widecheck{f}  \rangle.
\]
This follows immediately from $\langle  \delta_{\cL} , f\cdot(g \odot h)\rangle = \langle (g \odot h)\cdot\delta_{\cL}, f \rangle$ and from the fact that the Fourier transforms of a finite measure as a measure and as a finite measure coincide \cite[Thm.~2.2]{ARMA1}.

\medskip

\emph{Step 3:} We show that for any $g \in K_2(G)$, $h \in K_2(H)$ and $f \in K_2(G \times H)$, all positive definite, we have $\widecheck{(g \odot h)} * \widecheck{f} \in L^1(  \delta_{\cL_0})$ and
\[
\langle \widehat{(g \odot h)} * \delta_{\cL_0} \cdot \theta_{\widehat G\times \widehat H}, \widecheck{f}  \rangle = \langle  \delta_{\cL_0}, \widecheck{f} * \widecheck{(g \odot h)}  \rangle.
\]

By Step 2 we know that $\widecheck{f} \in L^1( \widehat{(g \odot h)} * \delta_{\cL_0})$, while by Step 1 we know that $\widehat{(g \odot h)} * \delta_{\cL_0}$ is given by the continuous function
\[
\widehat{(g \odot h)} * \delta_{\cL_0} (x) =\int_{\widehat{G} \times \widehat{H}} \widehat{(g \odot h)} (x-y) \, {\rm d} \delta_{\cL_0} (y).
\]
Therefore we have by positive definiteness
\[
0\le \langle \widehat{(g \odot h)} * \delta_{\cL_0} \cdot \theta_{\widehat G\times \widehat H}, \widecheck{f}  \rangle = \int_{\widehat{G} \times \widehat{H}}  \widecheck{f} (x) \int_{\widehat{G} \times \widehat{H}} \widehat{(g \odot h)} (x-y) {\rm d} \delta_{\cL_0} (y) \, {\rm d}\theta_{\widehat G\times \widehat H}(x) < \infty.
\]
By positivity, we can use Tonelli's theorem to exchange the order of integration. This results in
\begin{eqnarray*}
\begin{split}
\langle  \delta_{\cL_0} , \widecheck{f} * \widecheck{(g \odot h)}  \rangle&=\int_{\widehat{G} \times \widehat{H}} \widecheck{f} (x) * \widecheck{(g \odot h)} (x) \, {\rm d} \delta_{\cL_0} (y)  \\
&=\int_{\widehat{G} \times \widehat{H}}\int_{\widehat{G} \times \widehat{H}}  \widecheck{f} (x)  \widehat{(g \odot h)} (x-y) \, {\rm d}\theta_{\widehat G\times \widehat H}(x)  \, {\rm d} \delta_{\cL_0} (y)  \\
&= \int_{\widehat{G} \times \widehat{H}}  \widecheck{f} (x) \int_{\widehat{G} \times \widehat{H}} \widehat{(g \odot h)} (x-y) \, {\rm d} \delta_{\cL_0} (y) \,{\rm d}\theta_{\widehat G\times \widehat H}(x) < \infty,
\end{split}
\end{eqnarray*}
which proves Step 3.

\medskip

\emph{Step 4:} We prove the PSF for general functions in $K_2(G\times H)$. %{\color{blue} By Proposition~\ref{charFT}, this yields the PSF for functions in $KL(G\times H)$.}
Consider without loss of generality that $f \in K_2(G \times H)$ is positive definite. As $f$ has compact support, we can find compact sets $G_0 \subset G, H_0 \subset H$ such that $\supp(f) \subset G_0 \times H_0$.
Next, pick two functions $g \in K_2(G)$ and $h \in K_2(H)$ such that $g \equiv1$ on $G_0$ and $h\equiv1$ on $H_0$. For example, one may pick a continuous function $g_1 \in C_c(G)$ with $\int_G g_1(t) {\rm d}\theta_G(t)=1$ and then some $g_2\in C_c(G)$ which is 1 on $G_0 -\supp(g_1)$. Then $g=g_1*g_2\in K_2(G)$ and $g\equiv1$ on $G_0$. We choose $h\in K_2(H)$ in the same way. Then by construction $g \odot h \equiv 1$ on $\supp(f)$, and we therefore have $f \cdot (g \odot h) =f$.
As both $\widecheck{f}$ and $\widecheck{g \odot h }$ are elements of $L^1(\widehat{G} \times \widehat{H}) \cap L^2(\widehat{G} \times \widehat{H})$, by taking the inverse Fourier transform we also get
\[
\widecheck{f} * \widecheck{(g \odot h)} =\widecheck{f}.
\]
In order to apply Step 3, we use depolarisation to find finitely many $a_i \in \C$, and finitely many positive definite $g_i \in  K_2(G), h_i \in K_2(H)$, such that $g \odot h = \sum_i a_i \cdot (g_i \odot h_i)$.
By Step 3 we have $\widecheck{(g_i \odot h_i)} * \widecheck{f} \in L^1(  \delta_{\cL_0})$  and
\[
\langle  \widehat{(g_i \odot h_i)} * \delta_{\cL_0} \cdot \theta_{\widehat G\times \widehat H}, \widecheck{f}  \rangle = \langle  \delta_{\cL_0}, \widecheck{f} * \widecheck{(g_i \odot h_i)}   \rangle.
\]
Therefore we have
\[
\widecheck{f}= \widecheck{f} * \widecheck{(g \odot h)} = \sum_{i} a_i\cdot \widecheck{f} * \widecheck{(g_i \odot h_i)} \in L^1(  \delta_{\cL_0})
\]
and
\begin{displaymath}
\begin{split}
\langle \delta_{\cL}, f \rangle &=\langle  \delta_{\cL}, f\cdot (g \odot h) \rangle =\mathrm{dens}(\cL) \cdot \langle  \widehat{(g \odot h)} * \delta_{\cL_0} \cdot \theta_{\widehat G\times \widehat H}, \widecheck{f}  \rangle\\
& =\mathrm{dens}(\cL) \cdot \sum_{i} a_i \cdot \langle  \widehat{(g_i \odot h_i)} * \delta_{\cL_0}\cdot \theta_{\widehat G\times \widehat H}, \widecheck{f}  \rangle\\ &=\mathrm{dens}(\cL) \cdot \sum_{i} a_i \cdot \langle   \delta_{\cL_0} , \widecheck{f}*\widecheck{(g_i \odot h_i)}   \rangle\\
& =\mathrm{dens}(\cL)\cdot \langle   \delta_{\cL_0} , \widecheck{f}*\widecheck{(g \odot h)}   \rangle =\mathrm{dens}(\cL)\cdot \langle   \delta_{\cL_0} , \widecheck{f}   \rangle.
\end{split}
\end{displaymath}
This completes the proof.
\end{proof}

The weighted Dirac combs $\omega_h$ in the previous theorem are in fact twice Fourier transformable.

\begin{theorem}[Double transformability for weighted model sets]\label{double-trans}
Let $(G,H,\cL)$ be a cut-and-project scheme with dual cut-and-project scheme $(\widehat G, \widehat H, \cL_0)$. Then for every $h\in KL(H)$, the weighted model set $\omega_h\in \mathcal M^\infty(G)$ is twice transformable, and $\omega_{\widecheck h}\in \mathcal M^\infty(\widehat G)$ satisfies the generalised PSF
\begin{displaymath}
\reallywidehat{\omega_{\widecheck{h}}}= \mathrm{dens}(\cL_0)\cdot \omega_{h^\dagger}.
\end{displaymath}
\end{theorem}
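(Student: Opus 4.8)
The plan is to deduce the statement by applying Theorem~\ref{equiv PSF -diff-dens} both to $(G,H,\cL)$ and to its dual $(\widehat G,\widehat H,\cL_0)$, and then combining this with the Corollary to Theorem~\ref{double} and with the inversion theorem of Theorem~\ref{double}. Throughout we identify $\widehat{\widehat G}\cong G$ and $\widehat{\widehat H}\cong H$, so that $(\cL_0)_0=\cL$ and the dual scheme of $(\widehat G,\widehat H,\cL_0)$ is again $(G,H,\cL)$. First I record the inputs. By Theorem~\ref{PSFL-I} and Proposition~\ref{charFT}, $\delta_\cL$ is transformable with $\widehat{\delta_\cL}=\mathrm{dens}(\cL)\delta_{\cL_0}$, so condition~(i) of Theorem~\ref{equiv PSF -diff-dens} holds; hence by~(ii), $\omega_h$ is transformable with $\widehat{\omega_h}=\mathrm{dens}(\cL)\cdot\omega_{\widecheck h}$ and $\omega_{\widecheck h}\in\mathcal M^\infty(\widehat G)$. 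Applying the same two results to the lattice $\cL_0\subset\widehat G\times\widehat H$, whose dual lattice is $\cL$, shows that $\delta_{\cL_0}$ is transformable with $\widehat{\delta_{\cL_0}}=\mathrm{dens}(\cL_0)\delta_\cL$, so condition~(i) of Theorem~\ref{equiv PSF -diff-dens} holds for the dual scheme; hence by~(ii), for every $k\in KL(\widehat H)$ the weighted model set $\omega_k=\sum_{(\chi,\eta)\in\cL_0}k(\eta)\delta_\chi$ of the dual scheme is transformable on $\widehat G$ with $\widehat{\omega_k}=\mathrm{dens}(\cL_0)\cdot\sum_{(x,y)\in\cL}\widecheck k(y)\delta_x$. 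Applying the Fourier transform once more to $\widehat{\delta_\cL}=\mathrm{dens}(\cL)\delta_{\cL_0}$ and comparing with $\widehat{\widehat{\delta_\cL}}=\delta_\cL^\dagger=\delta_\cL$ (inversion theorem, using $\cL=-\cL$) yields the normalisation identity $\mathrm{dens}(\cL)\cdot\mathrm{dens}(\cL_0)=1$.

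The heart of the argument is to show that $\omega_h$ is twice transformable. Since $h\in C_c(H)$, fix $c\ge 0$ and a compact $W\subset H$ with $|h|\le c\cdot 1_W$, so that $|\omega_h|\le c\cdot\sum_{(x,y)\in\cL,\,y\in W}\delta_x$. Using \cite[Lemma~3.4.5]{DE} in the group $\widehat H$, choose $g'\in C_c(\widehat H)$ with $|\widecheck{g'}|^2\ge 1_W$ on $H$, and set $k:=g'*\widetilde{g'}$; then $k\in P(\widehat H)\cap C_c(\widehat H)$, hence $k\in KL(\widehat H)$ by Lemma~\ref{lm0}, and $\widecheck k=|\widecheck{g'}|^2\ge 1_W\ge 0$. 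Put $\nu:=\tfrac{c}{\mathrm{dens}(\cL_0)}\,\omega_k$, a transformable measure on $\widehat G$ by the previous paragraph, whose Fourier transform $\widehat\nu=c\cdot\sum_{(x,y)\in\cL}\widecheck k(y)\delta_x$ is a positive measure with $|\widehat\nu|=\widehat\nu\ge c\cdot\sum_{(x,y)\in\cL,\,y\in W}\delta_x\ge|\omega_h|$. The Corollary to Theorem~\ref{double} now applies and gives that $\omega_h$ is twice transformable (and $\omega_h\in\mathcal M^\infty(G)$).

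It remains to identify $\widehat{\omega_{\widecheck h}}$. By the inversion theorem in Theorem~\ref{double}, $\widehat{\widehat{\omega_h}}=\omega_h^\dagger$, and a direct computation using $(x,y)\in\cL\iff(-x,-y)\in\cL$ shows $\omega_h^\dagger=\sum_{(x,y)\in\cL}h^\dagger(y)\delta_x$, which under $\widehat{\widehat G}\cong G$ is precisely the weighted model set $\omega_{h^\dagger}$ of the dual-dual scheme $(G,H,\cL)$. On the other hand $\widehat{\omega_h}=\mathrm{dens}(\cL)\cdot\omega_{\widecheck h}$ is transformable, being the Fourier transform of the twice transformable measure $\omega_h$, so $\omega_{\widecheck h}$ is transformable, and by linearity of the Fourier transform $\mathrm{dens}(\cL)\cdot\widehat{\omega_{\widecheck h}}=\widehat{\widehat{\omega_h}}=\omega_{h^\dagger}$. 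Dividing and inserting $\mathrm{dens}(\cL)\cdot\mathrm{dens}(\cL_0)=1$ from the first paragraph gives $\widehat{\omega_{\widecheck h}}=\mathrm{dens}(\cL_0)\cdot\omega_{h^\dagger}$, which is the claim.

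The one point requiring care is that one cannot directly invoke Theorem~\ref{equiv PSF -diff-dens}(ii) for the dual scheme with the weight $\widecheck h$ in place of $h$, since $\widecheck h$ is not compactly supported and hence $\widecheck h\notin KL(\widehat H)$. The purpose of the second paragraph is to bypass this: rather than transforming $\omega_{\widecheck h}$ directly, we dominate $|\omega_h|$ by the Fourier transform of an honest dual-scheme weighted model set $\omega_k$ whose weight $k$ is compactly supported and positive definite and whose inverse transform $\widecheck k$ is a band-limited function dominating the window indicator $1_W$, the existence of such $k$ being exactly what \cite[Lemma~3.4.5]{DE} provides. The only other bookkeeping is keeping the Haar-measure normalisations consistent, which is encoded in $\mathrm{dens}(\cL)\cdot\mathrm{dens}(\cL_0)=1$.
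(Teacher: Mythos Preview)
Your proof is correct and follows essentially the same approach as the paper. Both arguments reduce to verifying the integrability condition in Theorem~\ref{double}: the paper checks $\widecheck g\in L^1(\omega_{h^\dagger})$ for $g\in K_2(\widehat G)$ directly by applying the PSF for $\cL_0$ to $g\odot(f*\widetilde f)$ with $f\in C_c(\widehat H)$ chosen so that $|\widecheck f|^2\ge |h^\dagger|$, whereas you package the same computation through the Corollary to Theorem~\ref{double} by building the dominating measure $\nu=\tfrac{c}{\mathrm{dens}(\cL_0)}\omega_k$ with $k=g'*\widetilde{g'}\in KL(\widehat H)$ via Theorem~\ref{equiv PSF -diff-dens} applied to the dual scheme; unwinding the Corollary and Theorem~\ref{equiv PSF -diff-dens} gives exactly the paper's calculation.
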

\begin{proof}
Recalling Theorem~\ref{equiv PSF -diff-dens}, it remains to be shown that $\omega_{\widecheck{h}}$ is transformable with transform $ \mathrm{dens}(\cL_0)\cdot\omega_{h^\dagger}$. Noting  $(\omega_h)^\dagger=\omega_{h^\dagger}$ and $\mathrm{dens}(\cL)\cdot\mathrm{dens}(\cL_0)=1$, by Theorem~\ref{double} it suffices to show that $\widecheck{g} \in L^1(\omega_{h^\dagger})$ for all $g \in K_2(\widehat{G})$.
As $h$ is compactly supported, there exists some $f \in C_c(\widehat H)$ such that $\left|\widecheck{f}\right|^2 \geq |h^\dagger|$. Let $ g \in K_2(\widehat{G})$. Then, as $g\odot( f*\widetilde{f}) \in K_2( \widehat{G} \times \widehat{H})$, by the PSF for $\cL_0$ we get $\widecheck{g} \odot \left|\widecheck{f}\right|^2 \in L^1(  \delta_{\cL})$ and
\[
0\le \mbox{dens}(\cL_0)\cdot \langle \delta_{\cL} , \widecheck{g}\odot \left|\widecheck{f}\right|^2 \rangle = \langle \delta_{\cL_0}, g\odot (f*\widetilde{f}) \rangle <\infty \,.
\]
Using $\left|\widecheck{f}\right|^2 \geq |h^\dagger|$ we thus get $\sum_{(x,y) \in \cL} \left| \widecheck{g}(x) \right||h^\dagger(y)| < \infty$, which means $\widecheck{g} \in L^1(\omega_{h^\dagger})$.
\end{proof}

\subsection{Density formula for weighted model sets}

A consequence of the lattice PSF is a certain averaging property which is known as the density formula for regular model sets. See~\cite[Sec.~3]{HR15} for a discussion of its history.  Note that condition (i) of Definition~\ref{def:cp} is not used in the following proofs.

\begin{prop}[Density formula for weight functions in $PK(H)$]\label{density general lattices}  Let $(G,H,\cL)$ be a cut-and-project scheme with $\sigma$-compact $G$, and let $(A_n)_{n\in\mathbb N}$ be any van Hove sequence in $G$. Then for all $h \in PK(H)$ and for all $t\in G$ we have
\begin{displaymath}
\lim_{n\to\infty} \frac{ \omega_h(t+A_n)}{\theta_G(A_n)}=\mathrm{dens}(\mathcal \cL)\cdot \int_H h \, {\rm d}\theta_H
\end{displaymath}
The convergence is uniform in $t\in G$.
\end{prop}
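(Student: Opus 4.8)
The plan is to recognise the claimed limit as a Fourier--Bohr coefficient of $\omega_h$ at the trivial character $e\in\widehat G$, and then to evaluate this coefficient by means of the generalised PSF of Theorem~\ref{equiv PSF -diff-dens}. The whole argument is an assembly of results already established above.

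First I would note that the lattice PSF of Theorem~\ref{PSFL-I}, applied to the lattice $\cL\subset G\times H$, is exactly the assertion that condition (i) of Theorem~\ref{equiv PSF -diff-dens} holds. Since $PK(H)\subset KL(H)$, the implication ``(i)\,$\Rightarrow$\,(ii)'' of that theorem then shows that for every $h\in PK(H)$ the weighted model set $\omega_h$ is a uniformly translation bounded, transformable measure on $G$ with $\widehat{\omega_h}=\mathrm{dens}(\cL)\cdot\omega_{\widecheck h}$. In particular $\omega_h\in\mathcal M^\infty(G)$, so — as $G$ is $\sigma$-compact by hypothesis — Proposition~\ref{prop:Hofgen} applies with $\mu=\omega_h$ and the trivial character $\chi=e$. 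Because $\overline e\equiv 1$ on $G$, the integral $\int_{t+A_n}\overline e(x)\,{\rm d}\omega_h(x)$ is just $\omega_h(t+A_n)$, so the proposition yields
\[
\widehat{\omega_h}(\{e\})=\lim_{n\to\infty}\frac{\omega_h(t+A_n)}{\theta_G(A_n)},
\]
with the convergence uniform in $t\in G$.

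It remains to identify $\widehat{\omega_h}(\{e\})=\mathrm{dens}(\cL)\cdot\omega_{\widecheck h}(\{e\})$. By definition $\omega_{\widecheck h}=\sum_{(\chi,\eta)\in\cL_0}\widecheck h(\eta)\,\delta_\chi$, whence $\omega_{\widecheck h}(\{e\})=\sum\{\widecheck h(\eta):(e,\eta)\in\cL_0\}$. Now $(e,\eta)\in\cL_0$ means $\eta(y)=1$ for all $y\in\pi^H(\cL)$; since $\pi^H(\cL)$ is dense in $H$ and $\eta$ is continuous, this forces $\eta=e$. Hence the sum collapses to the single term $\widecheck h(e)=\int_H h\,{\rm d}\theta_H$, the integral being finite because $h\in PK(H)\subset C_c(H)$. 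Combining, we obtain
\[
\lim_{n\to\infty}\frac{\omega_h(t+A_n)}{\theta_G(A_n)}=\widehat{\omega_h}(\{e\})=\mathrm{dens}(\cL)\cdot\int_H h\,{\rm d}\theta_H
\]
uniformly in $t\in G$, which is the assertion.

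Since the machinery does all the work, there is no serious obstacle; the only step demanding a moment's care is the evaluation $\omega_{\widecheck h}(\{e\})=\widecheck h(e)$, which is the one place where density of $\pi^H(\cL)$ in $H$, i.e.\ condition (ii) of Definition~\ref{def:cp}, is genuinely used — consistently with the remark preceding the proposition, condition (i) is not. One should also verify the harmless points that $\omega_h$ is genuinely a complex Radon measure for $h\in PK(H)$ (immediate from Theorem~\ref{equiv PSF -diff-dens}, or already from Lemma~\ref{lem:tb}) and that the Fourier--Bohr integral with $\chi=e$ reduces to $\omega_h(t+A_n)$, both of which are clear.
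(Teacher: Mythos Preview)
Your argument is correct and follows essentially the same route as the paper: apply Theorem~\ref{equiv PSF -diff-dens} to obtain transformability of $\omega_h$ with $\widehat{\omega_h}=\mathrm{dens}(\cL)\cdot\omega_{\widecheck h}$, invoke Proposition~\ref{prop:Hofgen} at $\chi=e$ to express $\widehat{\omega_h}(\{e\})$ as the desired limit, and then use denseness of $\pi^H(\cL)$ (equivalently, injectivity of $\pi^{\widehat G}|_{\cL_0}$) to reduce $\omega_{\widecheck h}(\{e\})$ to $\widecheck h(e)=\int_H h\,{\rm d}\theta_H$. Your write-up is in fact slightly more explicit than the paper's in spelling out why $(e,\eta)\in\cL_0$ forces $\eta=e$.
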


\begin{proof}
This follows from the generalised PSF Theorem~\ref{equiv PSF -diff-dens}.
As $\omega_{\widecheck{h}}$ is a measure, we have $\omega_{\widecheck{h}}(\{ e \}) =  \widecheck{h} (e)$. Here we used that $\pi^{\widehat G}|_{\cL_0}$ is one-to-one, which follows from denseness of $\pi^H(\cL)$ in $H$ by Pontryagin duality.
Moreover, as $\omega_{h}$ is a translation bounded measure by Lemma~\ref{lem:tb} and transformable, we can apply Proposition~\ref{prop:Hofgen} to obtain
\begin{displaymath}
\widehat{\omega_h}(\{ e\}) = \lim_{n\to\infty} \frac{ \omega_h(t+A_n)}{\theta_G(A_n)}
\end{displaymath}
uniformly in $t\in G$. The claim follows now from Theorem~\ref{equiv PSF -diff-dens}.
\end{proof}

The range of the density formula can be extended to Riemann integrable weight functions $h:H\to\mathbb C$ by a standard approximation argument, see e.g.~ \cite{BM}. For the convenience of the reader, we repeat the short argument.

\begin{theorem}[Density formula for Riemann integrable weight functions]\label{th:denfin2}
Let $(G,H,\cL)$ be a cut-and-project scheme with $\sigma$-compact $G$.  If $h : H \to \mathbb C$ is Riemann integrable, then for every van Hove sequence $(A_n)_{n\in\mathbb N}$ in $G$ the density formula holds, i.e., for every $t\in G$ we have
\begin{displaymath}
\lim_{n\to\infty} \frac{\omega_h(t+A_n)}{\theta_G(A_n)}=\mathrm{dens}(\cL)\cdot \int_H h \,{\rm d}\theta_H.
\end{displaymath}
The convergence is uniform in $t\in G$.
\end{theorem}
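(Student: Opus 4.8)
The plan is a standard squeeze argument, reducing everything to Proposition~\ref{density general lattices} through two approximation steps. First I would exploit linearity of both sides of the asserted identity in the weight function, together with the fact that $\mathrm{Re}(h)$ and $\mathrm{Im}(h)$ are again Riemann integrable, to reduce to the case of a real-valued~$h$.

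The second step upgrades Proposition~\ref{density general lattices} from $PK(H)$ to all of $C_c(H)$. Fix a real $g\in C_c(H)$ and a compact neighbourhood $S'$ of $\mathrm{supp}(g)$. Given $\varepsilon>0$, convolving $g$ with a suitable Dirac net yields $p\in K_2(H)\subset PK(H)$ with $\mathrm{supp}(p)\subset S'$ and $\|g-p\|_\infty<\varepsilon$. Since $\omega_{g}-\omega_{p}=\omega_{g-p}$ and $|g-p|\le\varepsilon\cdot 1_{S'}$ pointwise, for every $t\in G$ and every $n$ we obtain
\[
\left|\omega_{g}(t+A_n)-\omega_{p}(t+A_n)\right|\;\le\;\varepsilon\cdot\omega_{1_{S'}}(t+A_n),
\]
and $\omega_{1_{S'}}$ is uniformly translation bounded by Lemma~\ref{lem:tb} (applied to the weight function $1_{S'}$), so $\omega_{1_{S'}}(t+A_n)\le C_{S'}\cdot\theta_G(A_n)$ for all large $n$, uniformly in~$t$ (the usual compact-covering estimate along a van Hove sequence, cf.\ the proof of Proposition~\ref{prop:Hofgen}). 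Combining this with $\bigl|\int_H(g-p)\,{\rm d}\theta_H\bigr|\le\varepsilon\,\theta_H(S')$ and Proposition~\ref{density general lattices} applied to $p$, I would get
\[
\limsup_{n\to\infty}\;\sup_{t\in G}\left|\frac{\omega_{g}(t+A_n)}{\theta_G(A_n)}-\mathrm{dens}(\cL)\int_H g\,{\rm d}\theta_H\right|\;\le\;\bigl(C_{S'}+\mathrm{dens}(\cL)\,\theta_H(S')\bigr)\varepsilon,
\]
and letting $\varepsilon\downarrow0$ establishes the density formula, uniformly in~$t$, for every real $g\in C_c(H)$, hence for every $g\in C_c(H)$ by linearity.

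For the last step, let $h$ be real and Riemann integrable. By Riemann integrability, for each $\varepsilon>0$ there exist $g_1,g_2\in C_c(H)$ with $g_1\le h\le g_2$ on $H$, all supported in a common compact set $S$, and $\int_H(g_2-g_1)\,{\rm d}\theta_H<\varepsilon$. For every $t\in G$ the set $\cL\cap\bigl((t+A_n)\times S\bigr)$ is finite by uniform discreteness of $\cL$, and summing $g_1\le h\le g_2$ over it gives $\omega_{g_1}(t+A_n)\le\omega_{h}(t+A_n)\le\omega_{g_2}(t+A_n)$ for all $n$ and all $t$. Dividing by $\theta_G(A_n)$, passing to the limit via the $C_c(H)$-version from the second step, and using $\int_H g_1\le\int_H h\le\int_H g_2$, one arrives at $\limsup_{n\to\infty}\sup_{t\in G}\bigl|\omega_h(t+A_n)/\theta_G(A_n)-\mathrm{dens}(\cL)\int_H h\,{\rm d}\theta_H\bigr|\le\mathrm{dens}(\cL)\,\varepsilon$; letting $\varepsilon\downarrow0$ finishes the proof.

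I do not expect a genuine obstacle here: the squeeze is routine, and the one place needing mild care is the uniform-in-$t$ bound $\omega_{1_{S'}}(t+A_n)=O(\theta_G(A_n))$ used in the second step, which is strictly weaker than (and therefore cannot be circularly deduced from) the density formula for $1_{S'}$. An alternative would be to sandwich $h$ directly between functions of $PK(H)$ and skip the second step, but on a general LCA group $H$ this requires producing a $PK(H)$ function dominating $1_{S'}$ pointwise, which is less transparent than the route via $C_c(H)$ above.
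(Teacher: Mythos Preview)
Your argument is correct. The squeeze is sound, and the uniform-in-$t$ bound $\omega_{1_{S'}}(t+A_n)=O(\theta_G(A_n))$ in Step~2 follows from translation boundedness together with the van Hove property exactly as you indicate (compare the estimate $|\mu|(t+\partial^K A_n)/\theta_G(A_n)\to0$ used in the proof of Proposition~\ref{prop:Hofgen}, or more directly \cite[Lemma~9.2]{LR}).

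The paper, however, takes the shorter route you describe at the end as an alternative and then dismiss: it sandwiches $h$ directly between functions $g_1\le h\le g_2$ with $g_1,g_2\in K_2(H)$ and $\int(g_2-g_1)\,{\rm d}\theta_H$ small, thereby bypassing your Step~2 entirely. Producing such a sandwich on a general LCA group is not as opaque as you suggest: starting from a $C_c(H)$-sandwich $f_1\le h\le f_2$ given by Riemann integrability, approximate each $f_i$ uniformly by some $p_i\in K_2(H)$ with support in a common compact $S'$ (density of $K_2(H)$ in $C_c(H)$, as in your Step~2), and then correct the inequalities by adding or subtracting a small multiple of a fixed $q\in K_2(H)$ with $q\ge 1_{S'}$; such $q$ exists by Lemma~\ref{lm1} (e.g.\ $q=\phi*1_{S''}$ for suitable $\phi$ and $S''\supset S'$). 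So the one-step approach is both available and slightly cleaner, though what you wrote is perfectly valid.
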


\begin{proof}
We assume without loss of generality that $h$ is real valued.
Let $\varepsilon >0$ and define $c=\mathrm{dens}(\cL)>0$. Since $h$ is Riemann integrable, by the density of $K_2(H)$ in $C_c(H)$, there exists two functions $g_1,g_2 \in K_2(G)$ such that $g_1 \leq h \leq g_2 $ and $\int (g_2 -g_1)\, {\rm d}\theta_H  \le \frac{\varepsilon}{2c}$.
By the density formula Proposition~\ref{density general lattices}
there exists an $N$ such that for all $n \ge N$, all $t \in G$ and $i\in\{1,2\}$ we have
\begin{displaymath}
\left| \frac{\omega_{g_i}(t+A_n)}{\theta_G(A_n)}-c \int g_i \,{\rm d}\theta_H \right| \le \frac{\varepsilon}{2}.
\end{displaymath}
Thus, as $\omega_{g_1} \leq \omega_h \leq \omega_{g_2}$, for all $n \geq N$ and all $t \in G$ we have
\begin{eqnarray*}
\begin{split}
&c \int h  \, {\rm d}\theta_H - \frac{\omega_{h}(t+A_n)}{\theta_G(A_n)}  \leq
\frac{\varepsilon}{2}+c \int g_2  \, {\rm d}\theta_H - \frac{\omega_{g_1}(t+A_n)}{\theta_G(A_n)} \le \varepsilon
\end{split}
\end{eqnarray*}
and similarly
\begin{eqnarray*}
\begin{split}
&c \int h  \, {\rm d}\theta_H - \frac{\omega_{h}(t+A_n)}{\theta_G(A_n)}  \ge
-\frac{\varepsilon}{2}+c \int g_1  \, {\rm d}\theta_H - \frac{\omega_{g_2}(t+A_n)}{\theta_G(A_n)} \ge -\varepsilon
\end{split}
\end{eqnarray*}
Hence the claim of the theorem follows.
\end{proof}

\section{Diffraction of weighted model sets}\label{sec:pcp}

\subsection{Autocorrelation of weighted model sets}\label{sec:ac}

The following result is well-known, see e.g. \cite{BM} and \cite[Sec.~9.4]{BG2}. For the convenience of the reader, we revisit its proof and note that condition (i) of Definition~\ref{def:cp} does not enter in the arguments.

\begin{prop}\label{thm:ac}
Let $(G,H,\cL)$ be a cut-and-project scheme with $\sigma$-compact $G$. Let $h:H\to\mathbb C$ be Riemann integrable. Then the weighted model set $\omega_h\in\mathcal M^\infty(G)$ has a unique autocorrelation measure $\gamma=\omega_h\circledast \widetilde{\omega_h}\in\mathcal{M}^\infty(G)$ which is given by
\begin{displaymath}
\gamma=\mathrm{dens}(\cL)\cdot\omega_{h*\widetilde h} \,.
\end{displaymath}
\end{prop}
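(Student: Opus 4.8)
The plan is to compute the Eberlein convolution $\gamma = \omega_h\circledast\widetilde{\omega_h} = \lim_{n\to\infty}\frac{1}{\theta_G(A_n)}(\omega_h)_{A_n}*\widetilde{(\omega_h)_{A_n}}$ directly along a fixed van Hove sequence, and recognise the limit as $\mathrm{dens}(\cL)\cdot\omega_{h*\widetilde h}$ via the density formula. First I would reduce to a single convolution. Writing $\omega_h = \sum_{(x,y)\in\cL} h(y)\delta_x$, one has $(\omega_h)_{A_n}*\widetilde{(\omega_h)_{A_n}} = \sum_{(x,y),(x',y')\in\cL} h(y)\overline{h(y')}\,1_{A_n}(x)1_{A_n}(x')\,\delta_{x-x'}$. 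Testing against $\varphi\in C_c(G)$ and using $\cL=\{(z,z_2):(z,z_2)\in\cL\}$ is a subgroup, I would substitute $(x-x', \cdot)$; for each lattice point $(z,w)\in\cL$ the contribution to $\langle\gamma_n,\varphi\rangle$ has the form $\varphi(z)\cdot\frac{1}{\theta_G(A_n)}\sum_{(x',y')\in\cL}(h*\widetilde h)_n(w;x')\,1_{A_n}(x'+z)1_{A_n}(x')$, where $(h*\widetilde h)$ appears after summing $h(y'+w)\overline{h(y')}$ over the second coordinate of $\cL$ above $z$ — this is exactly where the convolution $h*\widetilde h$ of the weight functions is produced, once one identifies the fibre of $\pi^G|_\cL$ over $z$ with (a translate of) $\pi^H(\cL)$ and invokes that $\pi^H(\cL)$ is dense, so that the fibre sum is itself an integral against $\theta_H$ in the limit.

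The cleaner route, which I would actually take, is to avoid re-deriving the fibre computation by hand and instead apply the density formula already proved. For fixed $\varphi\in C_c(G)$ and fixed compact support considerations, one checks that
\[
\frac{1}{\theta_G(A_n)}\big\langle (\omega_h)_{A_n}*\widetilde{(\omega_h)_{A_n}},\varphi\big\rangle
= \frac{1}{\theta_G(A_n)}\sum_{z\in G}\varphi(z)\cdot \omega_{h_z}\big(A_n\cap(A_n-z)\big) + (\text{boundary error}),
\]
where for each relevant $z$ the weight function $h_z$ on $H$ is the ``fibre convolution'' built from $h$ and the lattice, and where only finitely many $z$ (those in $\supp\varphi$ intersected with $L=\pi^G(\cL)$) contribute. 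The boundary error is controlled because replacing $A_n\cap(A_n-z)$ by $A_n$ costs at most $\theta_G(\partial^{\{-z\}}A_n)$ worth of lattice points, which is $o(\theta_G(A_n))$ by the van Hove property together with uniform translation boundedness of $\omega_{h_z}$ (Lemma~\ref{lem:tb}). Then Theorem~\ref{th:denfin2} (the density formula for Riemann integrable weight functions — note $h*\widetilde h$ is again Riemann integrable when $h$ is) gives $\omega_{h_z}(A_n)/\theta_G(A_n)\to \mathrm{dens}(\cL)\int_H h_z\,{\rm d}\theta_H$, and summing over $z$ recovers $\mathrm{dens}(\cL)\langle\omega_{h*\widetilde h},\varphi\rangle$. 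Translation boundedness of $\gamma$, hence $\gamma\in\mathcal M^\infty(G)$, is inherited since $h*\widetilde h$ is bounded with compact support, so Lemma~\ref{lem:tb} applies to $\omega_{h*\widetilde h}$.

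Uniqueness of the autocorrelation means the limit does not depend on the choice of van Hove sequence; this is immediate from the above since the limiting value $\mathrm{dens}(\cL)\langle\omega_{h*\widetilde h},\varphi\rangle$ was identified without reference to the particular sequence $(A_n)$, the density formula itself holding for every van Hove sequence.

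I expect the main obstacle to be the bookkeeping in the fibre identification: making precise that, for $z\in L$, summing $h(y+w)\overline{h(y)}$ over those $w$ with $(z-? , w)\in$ (a translate of $\cL$) and averaging against the $G$-window $A_n$ reproduces $(h*\widetilde h)$ against $\theta_H$ in the limit — equivalently, that the density formula applied fibrewise assembles correctly into $\omega_{h*\widetilde h}$. This is a standard but genuinely fiddly manipulation of the lattice structure and the star map; everything else (van Hove boundary estimates, Riemann integrability of $h*\widetilde h$, translation boundedness) is routine given the results already established.
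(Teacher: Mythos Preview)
Your ``cleaner route'' is essentially the paper's proof: write $\gamma_n=\sum_{(z,z')\in\cL}\eta_n'(z')\,\delta_z$, replace $A_n\cap(z+A_n)$ by $A_n$ at cost $o(\theta_G(A_n))$ via the van Hove property and translation boundedness, and then apply Theorem~\ref{th:denfin2} to the Riemann integrable weight function $y\mapsto h(y)\overline{h(y-z')}$ to obtain $\eta(z')=\mathrm{dens}(\cL)\cdot(h*\widetilde h)(z')$; vague convergence then follows since the support of $\omega_{h*\widetilde h}$ is uniformly discrete.

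Your anticipated ``main obstacle'' is a red herring, though. There is no fibre sum to perform and no appeal to density of $\pi^H(\cL)$ is needed: the substitution $(x,x')=(x,x')$, $(z,z')=(x-x'',x'-x'')$ for lattice points $(x,x'),(x'',x'')\in\cL$ already gives the coefficient of $\delta_z$ as an average of the \emph{pointwise product} $h(x')\overline{h(x'-z')}$ over $(x,x')\in\cL\cap(A_n\times H)$, which is exactly $\omega_{g_{z'}}(A_n)/\theta_G(A_n)$ for $g_{z'}(y)=h(y)\overline{h(y-z')}$. The convolution $h*\widetilde h$ appears only after integrating $g_{z'}$ against $\theta_H$ via the density formula, not from any lattice fibre computation. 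So index by lattice points $(z,z')\in\cL$ rather than by $z\in L$, and the bookkeeping becomes trivial.
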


\begin{proof}
Fix any van Hove sequence $(A_n)_{n\in\mathbb N}$ in $G$. According to Section~\ref{sec:ele}, the autocorrelation of $\omega_h\in\mathcal M^{\infty}(G)$ is defined as the vague limit of the finite autocorrelation measures $\gamma_n$ given by
 \begin{displaymath}
\gamma_n= \frac{1}{\theta_G(A_n)} \, \omega_h|_{A_n}*\widetilde{\omega_h|_{A_n}}
= \frac{1}{\theta_G(A_n)}\, \omega_h|_{A_n}*\omega_{\widetilde h}|_{-A_n}
=\sum_{(z, z')\in \cL} \eta_n'(z')\delta_z,
\end{displaymath}
where $|_{A_n}$ denotes restriction to $A_n$, and where $\eta_n'(z')$ is given by
 \begin{displaymath}
\eta_n'(z')=  \frac{1}{\theta_G(A_n)} \sum_{(x,x') \in \cL \cap (A_n\cap(z+A_n)\times H)} h(x')\overline{h(x'-z')}.
\end{displaymath}
For fixed $n$, the above sum is finite since $\omega_h$ is a measure and $h$ is bounded. Also note
 \begin{displaymath}
\left|\sum_{(x,x')\in \cL \cap (A_n\Delta(z+A_n))\times H} h(x')\overline{h(x'-z')}\right|\le \|h\|_\infty \cdot |\omega_h|(\partial^{\{z\}} A_n)=o(\theta_G(A_n))
\end{displaymath}
as $n\to\infty$ since $\omega_h$ is translation bounded and $A_n$ is a van Hove sequence, see \cite[Lemma~9.2~(b)]{LR}. This shows that vaguely $\gamma_n=\sum_{(z,z')\in \cL} \eta_n(z') \delta_z+o(1)$ as $n\to\infty$ where
\begin{displaymath}
\eta_n(z')= \frac{1}{\theta_G(A_n)} \sum_{(x,x')\in \cL\cap A_n\times H} h(x')\overline{h(x'-z')}.
\end{displaymath}
Since the function $y\mapsto h(y)\overline{h(y-z')}$ is Riemann integrable on $H$, we can apply the density formula Theorem~\ref{th:denfin2} and obtain
\begin{displaymath}
\begin{split}
\eta(z')&=\lim_{n\to\infty} \eta_n(z')=\mathrm{dens}(\cL)\cdot \int_H h(y)\overline{h(y-z')}\,{\rm d}\theta_H(y)\\
&=\mathrm{dens}(\cL)\cdot (h*\widetilde{h})(z').
\end{split}
\end{displaymath}
Since $\omega_{h*\widetilde h}$ is uniformly discrete, this implies that $\gamma_n$ converges vaguely to $\gamma$, and the claim follows.
\end{proof}

\subsection{Transformability of weighted model sets}

The previous results can be used to characterise transformability of a weighted model set with continuous compactly supported weight function. The following Theorem~\ref{thm:charKL} emphazises the role of the function space $KL(H)$.

We start with a simple Lemma, which is inspired by \cite[Sect.~6.1]{RicStr2}. Recall that $C_0(H)$ denotes the space of continuous functions on $H$ vanishing at infinity.

%When dealing with transformable weighted model sets, then unbounded weight functions naturally occur. Corresponding model sets have been systematically studied in \cite{LR}. This is complemented by the following result, compare \cite[Sect.~6.1]{RicStr2}.

\begin{lemma}\label{tb implies L1} Let $(G,H,\cL)$ be a cut-and-project scheme and let $h \in C_0(H)$. If $\omega_h \in {\mathcal M}^\infty(G)$ then $h \in L^1(H)$.
\end{lemma}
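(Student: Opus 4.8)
The statement to prove is: for a cut-and-project scheme $(G,H,\cL)$ and $h\in C_0(H)$, if $\omega_h\in\mathcal M^\infty(G)$ then $h\in L^1(H)$. The plan is to exploit that $\omega_h$ being translation bounded means the sums $\sum_{(x,y)\in\cL}|h(y)|\,1_{s+K}(x)$ are uniformly bounded over $s\in G$, and then to use denseness of $\pi^H(\cL)$ in $H$ (condition (ii) of Definition~\ref{def:cp}) to convert a sup over a lattice orbit into an integral over $H$ via a Riemann-sum / van Hove argument. Since $h\in C_0(H)$, $|h|$ is a nonnegative continuous function vanishing at infinity, so it is approximable from below by continuous compactly supported functions, and the question reduces to bounding $\int_H |h|\,{\rm d}\theta_H$.

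First I would fix a compact unit neighbourhood $V\subset G$ and use uniform translation boundedness to get a constant $C$ with $\sum_{(x,y)\in\cL}|h(y)|\,1_{s+V}(x)\le C$ for all $s\in G$. Averaging this inequality against $1_{A_n}(s)\,{\rm d}\theta_G(s)$ for a van Hove sequence $(A_n)$ in $G$ and using Fubini gives
\[
\frac{1}{\theta_G(A_n)}\sum_{(x,y)\in\cL} |h(y)|\cdot\theta_G\big((x-V)\cap A_n\big)\le C\cdot\frac{\theta_G(V)}{\theta_G(V)}=C,
\]
wait — more carefully, $\int_{A_n}1_{s+V}(x)\,{\rm d}\theta_G(s)=\theta_G((x-V)\cap A_n)$, which for $x$ well inside $A_n$ equals $\theta_G(V)$, and the van Hove property ensures the boundary contributions are negligible. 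So in the limit $\frac{1}{\theta_G(A_n)}\sum_{(x,y)\in\cL,\ x\in A_n}|h(y)|\,\theta_G(V)\lesssim C$, i.e. the point counting weighted by $|h|$ over a van Hove exhaustion is controlled. Combining this with the density formula (Theorem~\ref{th:denfin2}) applied to a compactly supported continuous minorant $0\le \varphi\le |h|$ — for which $\frac{1}{\theta_G(A_n)}\sum_{(x,y)\in\cL,\ x\in A_n}\varphi(y)\to\mathrm{dens}(\cL)\int_H\varphi\,{\rm d}\theta_H$ — yields $\mathrm{dens}(\cL)\int_H\varphi\,{\rm d}\theta_H\le C/\theta_G(V)$, a bound independent of $\varphi$. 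Taking the supremum over all such $\varphi$ and invoking monotone convergence gives $\int_H|h|\,{\rm d}\theta_H<\infty$, i.e. $h\in L^1(H)$.

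The main obstacle I anticipate is the bookkeeping that turns translation boundedness of $\omega_h$ on $G$ (a statement about point configurations in $G$, with no reference to the second coordinate) into a statement controlling $\sum \varphi(y)$ over lattice points with $x\in A_n$. This requires that $\omega_h$ is genuinely \emph{uniformly} translation bounded in the sense used for $\omega_\varphi$ (Lemma~\ref{lem:tb}), so that the density formula is applicable to $\varphi$; since $\varphi\in C_c(H)$ this is automatic from Lemma~\ref{lem:tb}, but one must make sure the averaging inequality above is applied with the \emph{same} van Hove sequence and compact $V$. One also has to handle the harmless subtlety that $h$ need not be real or nonnegative — but replacing $h$ by $|h|$ at the outset (noting $|\omega_h|=\omega_{|h|}$ as measures, since the supports $\{x:(x,y)\in\cL\}$ are distinct points by injectivity, or simply estimating $|\omega_h|\le\omega_{|h|}$ without needing injectivity) reduces everything to the nonnegative case. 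No genuinely hard analysis is involved; it is a density-formula-plus-monotone-convergence argument.
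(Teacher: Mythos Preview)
Your plan is correct and follows essentially the same strategy as the paper: both arguments reduce to showing that $\omega_{|h|}(A_n)/\theta_G(A_n)$ is bounded along a van Hove sequence, then apply the density formula (Theorem~\ref{th:denfin2}) to continuous compactly supported minorants $0\le\varphi\le|h|$ to conclude $\mathrm{dens}(\cL)\int_H\varphi\,{\rm d}\theta_H$ is bounded uniformly in $\varphi$, hence $|h|\in L^1(H)$. The only difference is cosmetic: the paper argues by contradiction and obtains the uniform bound via the convolution estimate $\omega_g(A_n)\le\|\omega_{|h|}*f\|_\infty\cdot\theta_G(A_n+K)$, whereas you argue directly by averaging the pointwise bound $\omega_{|h|}(s+V)\le C$ over $s\in A_n$ and invoking monotone convergence at the end; your route is marginally more streamlined but not conceptually different.
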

\begin{proof}
Since $\omega_h \in {\mathcal M}^\infty(G)$ we have $\omega_{|h|} = |\omega_h| \in  {\mathcal M}^\infty(G)$.
Fix some non-negative $f \in C_c(G)$ which satisfies $\int_G f(t) {\rm d}\theta_G(t) =1$. By translation boundedness we then have
\begin{displaymath}
\| \omega_{|h|}*f \|_\infty =: C < \infty \,.
\end{displaymath}
Now, assume by contradiction that $h \notin L^1(H)$. Then $\int_{H} |h(t)| \,{\rm d}\theta_H(t) =\infty$, which means that $\left| h \right| \cdot \theta_H$ is an infinite positive Radon measure. Therefore, by outer regularity
% of this measure we have
%
%\begin{displaymath}
%\sup_{W \subset H \\ \mbox{ compact } } \left(\left| h \right| \cdot \theta_H\right)(K) =\left(\left| h \right| \cdot \theta_H\right)(H) =\infty \,.
%\end{displaymath}
%
%This shows that
there exists a compact set $W \subset H$ such that
\begin{displaymath}
\left(\left| h \right| \cdot \theta_H\right)(W) > \frac{C+2}{\dens(\cL)} \,,
\end{displaymath}
and by choosing any non-negative $g \in C_c(H)$ such that $g \geq 1_W$ we have
\begin{displaymath}
 \dens(\cL)\cdot\int_H g(t) \, {\rm d}\theta_H(t)  \geq C+2 \,. \\
\end{displaymath}
Now $\omega_g$ possesses a density by Theorem~\ref{th:denfin2}. Fixing some van Hove sequence $(A_n)_{n\in \mathbb N}$ in $G$, we thus have
\begin{displaymath}
\lim_{n\to\infty} \frac{\omega_g(A_n)}{\theta_G(A_n)}= \dens(\cL) \cdot \int_H g(t) \, {\rm d}\theta_H (t) \geq C+2 \,.
\end{displaymath}
In particular, there exists some $N$ so that for all $n>N$ we have
\begin{displaymath}
\frac{\omega_g(A_n)}{\theta_G(A_n)} \ge  C+1 \,.
\end{displaymath}
Now, fix a compact set $K$ such that $\supp(f) \subset K$. Then $(\omega_g)|_{A_n}*f$ is zero outside $A_n+K$. Since $\omega_g$ and $f$ are non-negative, it is immediate to check that
\begin{displaymath}
(\omega_g)|_{A_n}*f \leq (\omega_g*f)|_{A_n+K} \,.
\end{displaymath}
Therefore, by Tonelli's theorem we have
\begin{eqnarray*}
\omega_g(A_n) &=& \int_G \left(\int_{G} f(t-s)\, {\rm d}\theta_G(t) \right) {\rm d} (\omega_g)|_{A_n} (s) \\
   &=& \int_G \left(\int_{G} f(t-s) \, {\rm d} (\omega_g)|_{A_n} (s) \right) {\rm d}\theta_G(t)  \\
   &=& \int_G \left((\omega_g)|_{A_n}*f\right) (t) \, {\rm d}\theta_G(t) \\
   &\leq& \int_G (\omega_g*f)|_{A_n+K}(t)\, {\rm d}\theta_G(t) =\int_{A_n+K} (\omega_g*f)(t) \, {\rm  d}\theta_G(t)\\
   &\leq& \int_{A_n+K} \|\omega_g*f\|_\infty \, {\rm d}\theta_G(t) =  \|\omega_g*f\|_\infty\cdot \theta_G(A_n+K) \, .\\
\end{eqnarray*}
It follows that for all $n >N$ we have
\begin{displaymath}
C+1 \leq \|\omega_g*f\|_\infty\cdot  \frac{\theta_G(A_n+K)}{\theta_G(A_n)} \,.
\end{displaymath}
Next, since $0 \leq \omega_{g} \leq \omega_{|h|}$ and $f \geq 0$ we have $0 \leq \omega_{g}*f \leq \omega_{|h|}*f$ and hence
\begin{displaymath}
C+1 \leq \|\omega_{|h|}*f\|_\infty\cdot \frac{\theta_G(A_n+K)}{\theta_G(A_n)} = C \cdot \frac{\theta_G(A_n+K)}{\theta_G(A_n)}
\end{displaymath}
for all $n >N$. But this is a contradiction, since by the van Hove property we have
\begin{displaymath}
\lim_{n \to \infty}  \frac{\theta_G(A_n+K)}{\theta_G(A_n)} =1 \,.
\end{displaymath}
\end{proof}

In conjunction with Theorem~\ref{equiv PSF -diff-dens}, the above result can be used to prove the following characterisation of transformability. In addition, we use almost periodicity and Bombieri-Taylor type results.

\begin{theorem}\label{thm:charKL} Let $(G,H, \cL)$ be a cut-and-project scheme with $\sigma$-compact $G$,  and let $h \in C_c(H)$. Then $\omega_h$ is Fourier transformable if and only if $\widecheck{h} \in L^1(\widehat{H})$. Moreover, in this case we have
\begin{displaymath}
\widehat{\omega_h} = \mathrm{dens}(\cL) \cdot \omega_{\widecheck{h}} \,.
\end{displaymath}
\end{theorem}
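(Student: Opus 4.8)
The plan is to prove the two implications separately; the displayed formula will then follow in both cases from Theorem~\ref{equiv PSF -diff-dens}. Note first that $\widecheck h(\eta)=\widehat h(\eta^{-1})$ and that $\eta\mapsto\eta^{-1}$ is a Haar-measure preserving homeomorphism of $\widehat H$, so the condition $\widecheck h\in L^1(\widehat H)$ is the same as $\widehat h\in L^1(\widehat H)$, i.e.\ as $h\in KL(H)$. For the easy direction, assume $h\in KL(H)$. By Theorem~\ref{PSFL-I} and Proposition~\ref{charFT} the lattice $\cL\subset G\times H$ satisfies $\widehat{\delta_\cL}=\mathrm{dens}(\cL)\cdot\delta_{\cL_0}$, so condition (i) of Theorem~\ref{equiv PSF -diff-dens} holds, and its implication (i)$\Rightarrow$(ii) gives that $\omega_h\in\mathcal M^\infty(G)$ is transformable with $\widehat{\omega_h}=\mathrm{dens}(\cL)\cdot\omega_{\widecheck h}$.

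For the converse, assume $\omega_h$ is transformable. By Lemma~\ref{lem:tb} we have $\omega_h\in\mathcal M^\infty(G)$, so by Proposition~\ref{thm:ac} it has autocorrelation $\gamma=\mathrm{dens}(\cL)\cdot\omega_{h*\widetilde h}$. Since $h\in C_c(H)\subset L^2(H)$ has compact support, $h*\widetilde h\in PK(H)\subset KL(H)$ by Lemma~\ref{lm1}, so Theorem~\ref{equiv PSF -diff-dens} applied to the weight function $h*\widetilde h$, together with $\widecheck{h*\widetilde h}=|\widecheck h|^2$, yields
\[
\widehat\gamma=\mathrm{dens}(\cL)^2\cdot\omega_{|\widecheck h|^2}.
\]
Because $\pi^{\widehat G}|_{\cL_0}$ is one-to-one (equivalently $\pi^H(\cL)$ is dense in $H$), we have $\cL_0=\{(\chi,\chi^\star):\chi\in L_0\}$, so $\widehat\gamma$ is a pure point measure with atoms contained in $L_0$. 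Next I examine the Fourier--Bohr coefficients $a_\chi:=\widehat{\omega_h}(\{\chi\})$, which exist for every $\chi\in\widehat G$ by Proposition~\ref{prop:Hofgen}. A Bombieri--Taylor type inequality (cf.\ \cite[Thm.~3.2]{Hof1}, \cite[Cor.~5]{L09}) gives $|a_\chi|^2\le\widehat\gamma(\{\chi\})$, so $a_\chi=0$ for all $\chi\notin L_0$. For $\chi\in L_0$ the annihilator identity $\overline{\chi}(x)=\chi^\star(x^\star)$ on $\cL$ turns $\int_{A_n}\overline{\chi}\,{\rm d}\omega_h$ into $\omega_{h\cdot\chi^\star}(A_n)$, where $h\cdot\chi^\star\in C_c(H)$ (the function $y\mapsto h(y)\chi^\star(y)$) is Riemann integrable, so the density formula Theorem~\ref{th:denfin2} gives $a_\chi=\mathrm{dens}(\cL)\cdot\widecheck h(\chi^\star)$. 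Hence the pure point part of $\widehat{\omega_h}$ equals $\mathrm{dens}(\cL)\cdot\omega_{\widecheck h}$, so in particular $\omega_{\widecheck h}$ is, up to the factor $\mathrm{dens}(\cL)$, the pure point part of the translation bounded measure $\widehat{\omega_h}$, and therefore $\omega_{\widecheck h}\in\mathcal M^\infty(\widehat G)$. Since $h\in L^1(H)$ forces $\widecheck h\in C_0(\widehat H)$, applying Lemma~\ref{tb implies L1} to the \emph{dual} cut-and-project scheme $(\widehat G,\widehat H,\cL_0)$ with weight function $\widecheck h$ yields $\widecheck h\in L^1(\widehat H)$. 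The formula $\widehat{\omega_h}=\mathrm{dens}(\cL)\cdot\omega_{\widecheck h}$ now follows from the easy direction, which in fact shows a posteriori that $\widehat{\omega_h}$ has no continuous part.

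I expect the main obstacle to be the vanishing $a_\chi=0$ for $\chi\notin L_0$: it cannot be extracted from the soft Fourier-analytic machinery of Section~\ref{prel} alone, since it amounts to excluding a continuous component of $\widehat{\omega_h}$, and this is exactly where a Bombieri--Taylor type input --- which is itself typically obtained via almost periodicity of the autocorrelation --- is essential. The second point worth isolating is the idea of passing to the dual cut-and-project scheme and invoking Lemma~\ref{tb implies L1} there, which converts the delicate summability statement $\widecheck h\in L^1(\widehat H)$ into a translation-boundedness statement already established.
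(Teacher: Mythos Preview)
Your argument is correct and runs parallel to the paper's proof: both directions use the same ingredients, both compute the Fourier--Bohr coefficients on $L_0$ via the density formula, both invoke a Bombieri--Taylor relation to kill the atoms off $L_0$, and both finish by applying Lemma~\ref{tb implies L1} to the dual scheme. The one genuine difference concerns how pure-pointness of $\widehat{\omega_h}$ enters. The paper first invokes strong almost periodicity of $\omega_h$ for $h\in C_c(H)$ (citing \cite{BM,LR,NS11}) to conclude that $\widehat{\omega_h}$ is a pure point measure; identifying all atoms then determines $\widehat{\omega_h}$ completely, and one feeds $\widehat{\omega_h}=\mathrm{dens}(\cL)\cdot\omega_{\widecheck h}$ directly into Lemma~\ref{tb implies L1}. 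You sidestep the almost-periodicity input by working only with the pure point part $(\widehat{\omega_h})_{pp}$, using $|(\widehat{\omega_h})_{pp}|\le|\widehat{\omega_h}|$ to transfer translation boundedness to $\omega_{\widecheck h}$, and then recovering the full identification a posteriori from the easy direction once $\widecheck h\in L^1(\widehat H)$ is secured. This is a legitimate and slightly more self-contained route, trading an external structural result for an elementary measure-theoretic observation. One minor point: for the Bombieri--Taylor step the paper cites \cite[Thm.~3.4]{Hof1}, \cite[Thm.~5(c)]{L09} and \cite[Thm.~4.3]{LS2} (and in fact states the equality $\widehat\gamma(\{\chi\})=|\widehat{\omega_h}(\{\chi\})|^2$, though only the inequality is needed, as in your version); your references \cite[Thm.~3.2]{Hof1} and \cite[Cor.~5]{L09} point rather to the Fourier--Bohr formula itself.
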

\begin{proof}
``$\Rightarrow$'': Since $h \in C_c(H)$, the weighted Dirac comb $\omega_h$ is a strongly almost periodic measure, see e.g.~\cite{BM,LR,NS11}. Therefore, as $\omega_h$ is assumed to be transformable, its Fourier transform $\widehat{\omega_h}$ is a pure point measure. Fix a van Hove sequence $(A_n)_{n\in\mathbb N}$ in $G$. Then by Proposition~\ref{prop:Hofgen} we have
\begin{displaymath}
\widehat{\omega_h}(\{ \chi \})
%=\lim_{n \to \infty} \frac{1}{\theta_G(A_n)} \int_{A_n} \overline{ \chi(t)} \, {\rm d} \omega_h (t)
=\lim_{n\to\infty} \frac{(\overline{\chi} \cdot \omega_h)(t+A_n)}{\theta_G(A_n)}\,.
\end{displaymath}
uniformly in $t\in G$.
If $(\chi, \chi^\star) \in \cL_0$ then it is easy to check that $\overline{ \chi}\cdot\omega_h=\omega_{\overline{\chi^\star} \cdot h}$,  and therefore by the density formula for Riemann integrable weight functions Theorem~\ref{th:denfin2} we have
\begin{equation}\label{eq 22}
\widehat{\omega_h}(\{ \chi \})=\dens(\cL) \cdot \widecheck{h}(\chi^\star) \, .
%\omega_{\widecheck{h}}(\{ \chi \}) \,.
\end{equation}
By Proposition~\ref{thm:ac}, the measure $\omega_h\in\mathcal M^\infty(G)$ has a unique autocorrelation measure
$\gamma=\mathrm{dens}(\cL)\cdot\omega_{h*\widetilde h}$,
and we have
\begin{equation}\label{eq 23}
\widehat{\gamma}(\{ \chi \})=\left| \widehat{\omega_h}(\{ \chi \}) \right|^2 \,,
\end{equation}
see e.g.~\cite[Thm.~3.4]{Hof1}, \cite[Thm.~5(c)]{L09}, or \cite[Thm.~4.3]{LS2}.
If $\chi \notin \pi^{\widehat G}(\cL_0)$, we thus have $\widehat{\gamma}(\{ \chi \})=0=\widehat{\omega_h}(\{ \chi \})$.
%Now, since by Proposition~\ref{prop:Hofgen} and Proposition~\ref{th:denfin2} we have
%{\color{red}
%
%\begin{displaymath}
%\widehat{\omega_h}(\{ \chi \})=\lim_{n\to\infty} \frac{(\overline{\chi} \cdot \omega_h)(t+A_n)}{\theta_G(A_n)}
%\end{displaymath}
%
%uniformly in $t\in G$, it follows from \cite{L09,LS2} {\bf cite Lemma/Theorem} that
%\begin{equation}\label{eq 23}
%0=\widehat{\gamma}(\{ \chi \})=\left| \widehat{\omega_h}(\{ \chi \}) \right|^2 \,.
%\end{equation}}
%
Thus by combining (\ref{eq 22}) with (\ref{eq 23}) we get
\begin{displaymath}
\widehat{\omega_h}=\dens(\cL) \cdot  \omega_{\widecheck{h}} \,.
\end{displaymath}
Since the measure $\omega_h$ is Fourier transformable, its Fourier transform $\widehat{\omega_h}$ is a translation bounded measure, compare Remark \ref{rem:FTprop}. Therefore, by Lemma~\ref{tb implies L1} we have $\widecheck{h} \in L^1(\widehat{H})$.

``$\Leftarrow$'': Since $h \in C_c(H)$ is assumed to satisfy  $\widecheck{h} \in L^1(\widehat{H})$, the claim is the statement in Theorem~\ref{equiv PSF -diff-dens} (ii).
\end{proof}

\subsection{Pure point diffraction in regular model sets}

The following theorem is our main result.

\begin{theorem}[Pure point diffraction of weighted model sets]\label{theo:main}

Let $(G, H, \cL)$ be a cut-and-project scheme with $\sigma$-compact $G$ and denote by $(\widehat G, \widehat H, \cL_0)$ its dual. Then the following are equivalent.
\begin{itemize}
\item[(i)] The lattice Dirac comb $\delta_{\cL}\in \mathcal M^\infty(G\times H)$ is transformable and satisfies the PSF
\begin{displaymath}
\widehat{\delta_{\cL}}=\mathrm{dens}(\cL) \cdot \delta_{\cL_0}\, .
\end{displaymath}
\item[(ii)] The lattice Dirac comb $\delta_{\cL}\in \mathcal M^\infty(G\times H)$  has autocorrelation $\gamma\in\mathcal M^\infty(G\times H)$  and diffraction $\widehat \gamma\in\mathcal M^\infty(\widehat G\times \widehat H)$ given by
\begin{displaymath}
\gamma=\mathrm{dens}(\cL)\cdot \delta_{\cL}, \qquad \widehat \gamma=\mathrm{dens}(\cL)^2\cdot \delta_{\cL_0}\, .
\end{displaymath}
\item[(iii)] For every Riemann integrable function $h:H\to\mathbb C$, the weighted model set $\omega_h\in\mathcal M^\infty(G)$ is uniformly translation bounded, with  autocorrelation $\gamma\in\mathcal M^\infty(G)$  and diffraction $\widehat \gamma\in\mathcal M^\infty(\widehat G)$ given by
\begin{displaymath}
\gamma=\mathrm{dens}(\cL)\cdot \omega_{h*\widetilde{h}}, \qquad \widehat \gamma=\mathrm{dens}(\cL)^2\cdot\omega_{|\widecheck{h}|^2}\, .
\end{displaymath}
\end{itemize}
In particular, $\omega_h$ has pure point diffraction for every Riemann integrable function $h:H\to\mathbb C$.
\end{theorem}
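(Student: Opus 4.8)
The plan is to prove the cyclic chain of implications (i) $\Rightarrow$ (ii) $\Rightarrow$ (iii) $\Rightarrow$ (i), which packages the hard analytic work into results already established above. For (i) $\Rightarrow$ (ii), I would apply Proposition~\ref{thm:ac} to the cut-and-project scheme with $H$ replaced by the trivial group (or, more directly, observe that $\delta_{\cL}$ is itself a lattice Dirac comb in $G\times H$ and apply the classical fact recorded in Section~\ref{sec:ele} that a lattice comb has autocorrelation $\gamma=\mathrm{dens}(\cL)\cdot\delta_{\cL}$). Then by the lattice PSF assumed in (i), applied now to the lattice $\cL$ in $G\times H$, we get $\widehat\gamma=\mathrm{dens}(\cL)\cdot\widehat{\delta_{\cL}}=\mathrm{dens}(\cL)^2\cdot\delta_{\cL_0}$, which is exactly (ii). The translation boundedness claims are immediate since lattice Dirac combs are translation bounded.

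For the implication (i) $\Rightarrow$ (iii), which is really the diffraction formula for regular (and Riemann-weighted) model sets, I would assemble three ingredients already available: first, uniform translation boundedness of $\omega_h$ for bounded compactly supported $h$ from Lemma~\ref{lem:tb}, extended to general Riemann integrable $h$ by the fact that such $h$ is bounded with relatively compact support; second, the autocorrelation formula $\gamma=\mathrm{dens}(\cL)\cdot\omega_{h*\widetilde h}$ from Proposition~\ref{thm:ac}; and third, the Fourier transform of this autocorrelation. The key point is that $h*\widetilde h\in PK(H)\subset KL(H)$ whenever $h$ is, say, in $L^2(H)$ with compact support (Lemma~\ref{lm1}, Remark~\ref{WKL}), so by Theorem~\ref{equiv PSF -diff-dens}(ii) — whose hypothesis (i) is exactly our assumption — we obtain $\widehat{\omega_{h*\widetilde h}}=\mathrm{dens}(\cL)\cdot\omega_{\widecheck{h*\widetilde h}}=\mathrm{dens}(\cL)\cdot\omega_{|\widecheck h|^2}$, using $\widecheck{h*\widetilde h}=|\widecheck h|^2$. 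Combining, $\widehat\gamma=\mathrm{dens}(\cL)^2\cdot\omega_{|\widecheck h|^2}$, a positive pure point measure, giving pure point diffraction. For Riemann integrable $h$ that are not themselves square-integrable with compact support, I would note $h$ is bounded with relatively compact support, hence in $L^2(H)$ with compact support, so the argument applies directly; the Riemann integrability is what Proposition~\ref{thm:ac} needs for the density formula step inside it.

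For (iii) $\Rightarrow$ (i) I would specialise (iii) to weight functions of the form $h=k*\widetilde k$ with $k\in C_c(H)$, so that $h\in K_2(H)$ and $h*\widetilde h\in K_2(H)$ as well. From the diffraction formula in (iii) one reads off, via $\widehat\gamma=\mathrm{dens}(\cL)^2\cdot\omega_{|\widecheck h|^2}$ together with $\gamma=\mathrm{dens}(\cL)\cdot\omega_{h*\widetilde h}$, that $\omega_{h*\widetilde h}$ is transformable with $\widehat{\omega_{h*\widetilde h}}=\mathrm{dens}(\cL)\cdot\omega_{|\widecheck h|^2}=\mathrm{dens}(\cL)\cdot\omega_{\widecheck{h*\widetilde h}}$; since $h*\widetilde h$ ranges over (a spanning set of) $K_2(H)$ as $h$ ranges over $K_2(H)$, and using polarisation, this yields condition (iii) of Theorem~\ref{equiv PSF -diff-dens}, which is equivalent to its condition (i), namely the lattice PSF $\widehat{\delta_{\cL}}=\mathrm{dens}(\cL)\cdot\delta_{\cL_0}$. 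The final sentence of the theorem then follows because the lattice PSF Theorem~\ref{PSFL-I} holds unconditionally, so (i) is always true, hence so is (iii).

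The main obstacle I anticipate is bookkeeping rather than depth: making sure the class of test weight functions used in (iii) $\Rightarrow$ (i) is rich enough to trigger Theorem~\ref{equiv PSF -diff-dens}(iii) — one must check that $\{k*\widetilde k*\widetilde{(k*\widetilde k)} : k\in C_c(H)\}$, or an appropriate polarised span, exhausts what is needed, and that the identification $\widehat\gamma(\{\chi\})=|\widehat{\omega_h}(\{\chi\})|^2$ is not secretly being used (we want to go the other direction). A secondary subtlety is the passage from compactly supported $L^2$ weights to general Riemann integrable weights in (iii): this is handled by the density/approximation argument of Theorem~\ref{th:denfin2} already in the paper, but one should state explicitly that Riemann integrable functions on $H$ are bounded with relatively compact support so that $\omega_h$ and $\omega_{h*\widetilde h}$ are genuinely translation bounded measures and the earlier propositions apply verbatim.
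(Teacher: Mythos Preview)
Your approach is essentially the paper's own proof: (i)$\Leftrightarrow$(ii) via Proposition~\ref{thm:ac} in the degenerate (trivial internal space) case together with the trivial converse, (i)$\Rightarrow$(iii) via Lemma~\ref{lem:tb}, Proposition~\ref{thm:ac} and Theorem~\ref{equiv PSF -diff-dens}, and (iii)$\Rightarrow$(i) via Theorem~\ref{equiv PSF -diff-dens}(iii)$\Rightarrow$(i) applied to the autocorrelation.

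Two small cleanups are worth making. First, your stated cycle (i)$\Rightarrow$(ii)$\Rightarrow$(iii)$\Rightarrow$(i) does not match what you actually argue --- you prove (i)$\Rightarrow$(ii) and (i)$\Rightarrow$(iii) separately --- so you should record that (ii)$\Rightarrow$(i) is immediate (divide the diffraction identity by $\mathrm{dens}(\cL)$), exactly as the paper does. Second, in (iii)$\Rightarrow$(i) there is no need to restrict to $h=k*\widetilde k$: simply apply (iii) to an arbitrary $h\in C_c(H)$, which is Riemann integrable, so that the autocorrelation is $\mathrm{dens}(\cL)\cdot\omega_{h*\widetilde h}$ with $h*\widetilde h$ already ranging over the generators of $K_2(H)$. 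This is what the paper does, and it makes your anticipated spanning obstacle (about fourfold convolutions $(k*\widetilde k)*\widetilde{(k*\widetilde k)}$) evaporate.
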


\begin{remark}
The implication ``(i) $\Rightarrow$ (iii)'' is the well-known diffraction formula as in \cite{Hof1, Martin2, BM}.
Part (iii) of the above theorem applies to regular model sets, as for any relatively compact measurable $W\subset H$ such that $\theta_H(\partial W)=0$, its characteristic function $h=1_W:H\to\mathbb R$ is Riemann integrable.
\end{remark}

\begin{proof}[Proof of Theorem~\ref{theo:main}]
 ``(i) $\Rightarrow$ (ii)'' follows if the autocorrelation formula has been established. But the expression for $\gamma$ is a special case of Proposition~\ref{thm:ac} with trivial internal space and $\omega_h$ a lattice Dirac comb. The reverse implication  ``(ii) $\Rightarrow$ (i)'' is trivial.

\noindent ``(i) $\Rightarrow$ (iii)'' Uniform translation boundedness of $\omega_h$ is Lemma~\ref{lem:tb}. The explicit form of the autocorrelation $\gamma$ is Proposition~\ref{thm:ac}, which ultimately relies on Theorem~\ref{equiv PSF -diff-dens} (i) $\Rightarrow$ (ii). The statement about $\widehat\gamma$ now follows from  Theorem~\ref{equiv PSF -diff-dens} (i) $\Rightarrow$ (ii).

The implication ``(iii) $\Rightarrow$ (i)'' is  Theorem~\ref{equiv PSF -diff-dens} (iii) $\Rightarrow$ (i) applied to the autocorrelation measure,  as any function in $K_2(H)$ is Riemann integrable.
\end{proof}

\begin{remark}[Modified Wiener diagram]\label{rem:mwd}

For any weighted model set $\omega_h$ with weight function $h\in KL(H)$ the modified Wiener diagram
\begin{displaymath}
\begin{CD}
\omega_h @>\circledast>>\mathrm{dens}(\cL) \cdot \omega_{h*\widetilde h}\\
@V{\mathcal F}VV @VV{\mathcal F}V\\
\mathrm{dens}(\cL)\cdot \omega_{\widecheck h} @>{|\cdot|^2}>> \mathrm{dens}(\cL)^2\cdot \omega_{|\widecheck h|^2}
\end{CD}
\end{displaymath}
commutes, as $\omega_{\widecheck h}$ is a measure in that case by Theorem~\ref{equiv PSF -diff-dens}. This includes lattice Dirac combs, as one may choose $H$ trivial in that case. The diagram may no longer commute for a general Riemann integrable weight function $h$, as $\omega_{\widecheck h}$ might not be a measure in that case, and as $\omega_h$ might not be a transformable measure. However the upper right path is still well defined in that case, such that the diffraction of $\omega_h$ is a pure point measure and may be computed by ``squaring the Fourier-Bohr coefficients''.
\end{remark}

\subsection{Double transformability for measures with Meyer set support.}
Next, let $\Lambda\subset G$ be a Meyer set  \cite[Def.~7.2]{NS11}, i.e., $\Lambda$ and $\Lambda-\Lambda-\Lambda$ are both uniformly discrete and relatively dense.
Let $\mu$ be a measure supported inside $\Lambda$. Note that $\mu$ does not need to be pure point diffractive. We prove that if $\mu$ is transformable, then it is automatically twice transformable. As any model set is a Meyer set, this extends Theorem~\ref{double-trans}.
\begin{theorem}\label{Twice FT} Let $G$ be $\sigma$-compact, let $\Lambda\subset G$ be a Meyer set and let $\mu\in\mathcal M^\infty(G)$ be supported inside $\Lambda$. If $\mu$ is transformable, then $\widehat{\mu}$ is also transformable, and
we have $\widehat{\widehat{\mu}} = \mu^\dagger$.
\end{theorem}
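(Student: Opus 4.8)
The plan is to deduce the result from Theorem~\ref{double}: since $\mu$ is assumed transformable, it suffices to verify condition~(ii) there, i.e.\ that $\widecheck g\in L^1(\mu^\dagger)$ for every $g\in K_2(\widehat G)$, after which transformability of $\widehat\mu$ and the inversion identity $\widehat{\widehat\mu}=\mu^\dagger$ follow automatically. Two ingredients will be combined: a geometric one, embedding the Meyer set into a model set so as to dominate $|\mu^\dagger|$ by a well-behaved weighted model set, and an analytic one, the double transformability of such weighted model sets already established in Theorem~\ref{double-trans}. The observation that makes the combination work is that condition~(ii) of Theorem~\ref{double} is \emph{monotone}: if $|\mu^\dagger|\le C\cdot\nu$ for a positive measure $\nu$ and $\widecheck g\in L^1(\nu)$, then $\widecheck g\in L^1(\mu^\dagger)$.

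First I would record that, since $\Lambda$ is uniformly discrete and $\mu\in\mathcal M^\infty(G)$, the measure $\mu$ is a weighted Dirac comb supported on $\Lambda$ with uniformly bounded weights, so that $|\mu|\le C\cdot\delta_\Lambda$ for some $C>0$ and hence $|\mu^\dagger|\le C\cdot\delta_{-\Lambda}$. The set $-\Lambda$ is again a Meyer set, so by the structure theorem for Meyer sets it is contained in a model set: there is a cut-and-project scheme $(G,H,\cL)$ and a relatively compact window $V\subset H$, which we may take to be compact, with $-\Lambda\subseteq\oplam(V)$; consequently $\delta_{-\Lambda}\le\delta_{\oplam(V)}\le\omega_{1_V}$. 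Next I would pick a nonnegative $h\in KL(H)$ with $h^\dagger\ge1_V$ — for instance a suitably normalised convolution $h=\theta_H(U)^{-1}\cdot 1_U*1_{-V-U}$ with $U$ a compact symmetric neighbourhood of $0$ in $H$, which lies in $C_c(H)$ and has integrable Fourier transform. Since then $\omega_{1_V}\le\omega_{h^\dagger}=(\omega_h)^\dagger$, we arrive at the domination $|\mu^\dagger|\le C\cdot(\omega_h)^\dagger$.

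It remains to show that $\widecheck g\in L^1\big((\omega_h)^\dagger\big)$ for every $g\in K_2(\widehat G)$. Here I would invoke that the lattice PSF of Theorem~\ref{PSFL-I} holds for $\cL$, so that by Theorem~\ref{equiv PSF -diff-dens} the weighted model set $\omega_h$ is transformable and by Theorem~\ref{double-trans} it is even twice transformable, with $\widehat{\widehat{\omega_h}}=(\omega_h)^\dagger=\omega_{h^\dagger}$. Applying Proposition~\ref{charFT} to the transformable measure $\widehat{\omega_h}\in\mathcal M(\widehat G)$ then yields $\widecheck f\in L^1(\omega_{h^\dagger})$ for every $f\in KL(\widehat G)$. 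Now, given $g\in K_2(\widehat G)$, write $g=\sum_j c_j\,g_j*\widetilde{g_j}$ with finitely many $g_j\in C_c(\widehat G)$; each $g_j*\widetilde{g_j}$ is positive definite and compactly supported, hence lies in $KL(\widehat G)$ by Lemma~\ref{lm0}, with $\widecheck{g_j*\widetilde{g_j}}=|\widecheck{g_j}|^2$. By the previous step $|\widecheck{g_j}|^2\in L^1(\omega_{h^\dagger})$, so by the domination of the second paragraph $|\widecheck{g_j}|^2\in L^1(\mu^\dagger)$, and therefore $\widecheck g\in L^1(\mu^\dagger)$. This is exactly condition~(ii) of Theorem~\ref{double}, which completes the argument.

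I expect the main obstacle to be conceptual rather than computational: for a general $g\in K_2(\widehat G)$ the function $\widecheck g$ is only a non-compactly supported element of $C_0(G)\cap L^1(G)$, and such a function need not be summable over a Delone set, so the Delone property of $\Lambda$ by itself is insufficient. What rescues the argument is precisely the Meyer hypothesis via the model-set embedding, which lets the dominating measure $\omega_h$ be chosen twice transformable; the double transformability of $\omega_h$, already packaged in Theorem~\ref{double-trans}, is exactly the statement that the functions $|\widecheck{g_j}|^2$ are $\omega_{h^\dagger}$-integrable. Once this reduction and the monotonicity of condition~(ii) in Theorem~\ref{double} are in place, the remaining verifications — bounded weights, the explicit choice of $h$, and membership in $KL(\widehat G)$ — are routine.
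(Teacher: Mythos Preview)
Your argument is correct and follows essentially the same strategy as the paper: verify condition~(ii) of Theorem~\ref{double} by embedding the Meyer set into a model set, dominating $|\mu^\dagger|$ by a weighted model comb $\omega_{h^\dagger}$, and invoking Theorem~\ref{double-trans}. The only differences are cosmetic --- the paper embeds $\Lambda$ rather than $-\Lambda$ and takes $h\in K_2(H)$ directly, which slightly streamlines things (in particular your final decomposition of $g\in K_2(\widehat G)$ is unnecessary since $K_2(\widehat G)\subset KL(\widehat G)$ already).
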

\begin{proof}
We check the integrability condition in Theorem~\ref{double}. Since $\Lambda$ is a Meyer set, there exists a cut-and-project scheme $(G, H, \mathcal{L})$ and a window $W$ such that $\Lambda \subset \oplam(W)$ by \cite[Thm.~1.8]{NS11}.
Also, as $\mu$ is translation bounded, there exists some finite positive constant $c$ such that
$\left| \mu(\{ x \} ) \right| \leq c$ for all $x \in \Lambda$.
Next, we pick some $h \in K_2(H)$ such that $h \geq c \cdot 1_W$. Then
$\left| \mu \right| \leq c\cdot\omega_h$. Now, by Theorem~\ref{double-trans}, the measure $\omega_h$ is twice transformable. Therefore, for all $g \in K_2(\widehat{G})$ we have $\widecheck{g} \in L^1(\omega_{h^\dagger})$.
Hence, as $\left| \mu^\dagger \right| \leq c\cdot \omega_{h^\dagger}$ we get $\widecheck{g} \in L^1(\mu^\dagger)$.
\end{proof}
\begin{remark}
Our validation of the integrability condition in Theorem~\ref{double} relies on the lattice PSF: we embed the Meyer set into a regular model set coming from a cut-and-project scheme $(G, H, \cL)$, and then the integrability condition follows from the PSF applied to the dual lattice $\cL_0$.
\end{remark}

\subsection{Pure point diffraction in weak model sets}

 We complete the paper by looking in the next two subsections to some recent results about weak model sets and Meyer sets, and their connection to the PSF. In this subsection we look at the pure point diffractivity of weak model sets satisfying a certain density condition, which has been proven in \cite{Nicu14,KR}.

Consider a weak model set $\oplam(W)$. By definition, the window $W\subset H$ is relatively compact and measurable. Hence $\omega_h$ where $h=1_W$ is a weighted model set in that case by  Lemma~\ref{lem:tb}. But $\omega_h$ may not be pure point diffractive. On the other hand $h*\widetilde{h} \in PK(G)$ by Lemma \ref{lm1}, which means that the measure $\omega_{h*\widetilde h}$ is transformable by Theorem~\ref{theo:main}. Thus the question arises which weighted model sets $\omega_h$ have an autocorrelation given by $\mathrm{dens}(\cL)\cdot \omega_{h*\widetilde h}$. As argued by Moody \cite{RVM1}, this property is typical when one considers the ensemble of weak model sets with all shifts of a given window together with the uniform measure on this ensemble \cite[Theorem~1]{RVM1}. For compact windows, it is related to maximal density of the weak model set \cite[Prop.~3.4]{HR15}.  The following result extends \cite[Cor.~1]{RVM1}.

\begin{theorem}\cite[Thm.~7]{Nicu14}
Let $(G,H,\cL)$ be a cut-and-project scheme with $\sigma$-compact $G$ and let $\omega_h$ be the Dirac comb of a weak model set, i.e., $h=1_W$ for some relatively compact measurable $W\subset H$. Assume that there exists a van Hove sequence $(A_n)_{n\in\mathbb N}$ in $G$ such that $\omega_h$ has maximal density with respect to $(A_n)_n$, i.e.,
\begin{displaymath}
\lim_{n\to\infty} \frac{1}{\theta_G(A_n)} \omega_h(A_n)=\mathrm{dens}(\cL) \cdot \theta_H(\overline{W}).
\end{displaymath}
Then, with respect to the given van Hove sequence $(A_n)_n$, the weak model set $\omega_h$ has autocorrelation $\gamma$ and diffraction $\widehat \gamma$ given by
\begin{displaymath}
\gamma=\mathrm{dens}(\cL)\cdot \omega_{g*\widetilde{g}}, \qquad \widehat \gamma=\mathrm{dens}(\cL)^2\cdot\omega_{|\widecheck{g}|^2},
\end{displaymath}
where $g=1_{\overline{W}}$. In particular, $\omega_h$ has pure point diffraction. \qed
\end{theorem}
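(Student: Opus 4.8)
The plan is to run the autocorrelation computation of Proposition~\ref{thm:ac}; the only obstruction is that $W$ is merely relatively compact and measurable, so that the pair-correlation functions are characteristic functions of windows that need not be Riemann integrable, and it is exactly here that the maximal density hypothesis enters. For a relatively compact measurable $V\subset H$ write $\nu_n(V):=\theta_G(A_n)^{-1}\,\sharp(\cL\cap(A_n\times V))=\theta_G(A_n)^{-1}\,\omega_{1_V}(A_n)$, and set $g:=1_{\overline W}$, so $g*\widetilde g\in C_c(H)\cap PK(H)$ by Lemma~\ref{lm1}, $\omega_{g*\widetilde g}$ is a uniformly discrete translation bounded measure, and $(g*\widetilde g)(z')=\theta_H(\overline W\cap(z'+\overline W))$. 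Sandwiching $1_V$ between continuous compactly supported functions and applying the density formula Theorem~\ref{th:denfin2} together with outer regularity of $\theta_H$ (as in the proof of Lemma~\ref{tb implies L1}) gives the general bound $\limsup_n\nu_n(V)\le\mathrm{dens}(\cL)\,\theta_H(\overline V)$.

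Next I would upgrade the single maximal density assumption. From $\nu_n(\overline W)=\nu_n(W)+\nu_n(\overline W\setminus W)$, the hypothesis $\nu_n(W)\to\mathrm{dens}(\cL)\theta_H(\overline W)$ and the bound $\limsup_n\nu_n(\overline W)\le\mathrm{dens}(\cL)\theta_H(\overline W)$ force $\nu_n(\overline W\setminus W)\to 0$ and $\nu_n(\overline W)\to\mathrm{dens}(\cL)\theta_H(\overline W)$. Next, for any $(z,z')\in\cL$ the substitution $(u,u')=(x-z,x'-z')$ gives $\sharp(\cL\cap(A_n\times(z'+V)))=\sharp(\cL\cap((A_n-z)\times V))$, and since $(A_n)_n$ is a van Hove sequence $\theta_G((A_n-z)\Delta A_n)=o(\theta_G(A_n))$, so $\nu_n(z'+V)=\nu_n(V)+o(1)$ for every fixed relatively compact measurable $V\subset H$; in particular $z'+\overline W$ again has maximal density $\mathrm{dens}(\cL)\theta_H(\overline W)$ and $\nu_n(z'+(\overline W\setminus W))\to 0$. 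Finally, maximal density is stable under intersection of compact windows: if $\nu_n(V_i)\to\mathrm{dens}(\cL)\theta_H(V_i)$ for compact $V_1,V_2$, then inclusion--exclusion $\nu_n(V_1\cap V_2)=\nu_n(V_1)+\nu_n(V_2)-\nu_n(V_1\cup V_2)$ together with $\limsup_n\nu_n(V_1\cup V_2)\le\mathrm{dens}(\cL)\theta_H(V_1\cup V_2)$ gives $\liminf_n\nu_n(V_1\cap V_2)\ge\mathrm{dens}(\cL)\theta_H(V_1\cap V_2)$, and the reverse inequality is the general bound for the compact set $V_1\cap V_2$.

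With this in hand, the derivation in the proof of Proposition~\ref{thm:ac} applies verbatim (it uses only boundedness of the weight, translation boundedness of $\omega_h$, and the van Hove property) and gives, vaguely, $\gamma_n=\sum_{(z,z')\in\cL}\eta_n(z')\,\delta_z+o(1)$ with $\eta_n(z')=\nu_n(W\cap(z'+W))$. For fixed $(z,z')\in\cL$ one has $(\overline W\cap(z'+\overline W))\setminus(W\cap(z'+W))\subset(\overline W\setminus W)\cup(z'+(\overline W\setminus W))$, so by the previous paragraph $\eta_n(z')=\nu_n(\overline W\cap(z'+\overline W))+o(1)$; intersection stability applied to $\overline W$ and $z'+\overline W$ then yields $\eta_n(z')\to\mathrm{dens}(\cL)\,\theta_H(\overline W\cap(z'+\overline W))=\mathrm{dens}(\cL)\,(g*\widetilde g)(z')$. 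Since $g*\widetilde g$ has relatively compact support and $\eta_n(z')$ vanishes whenever $W\cap(z'+W)=\emptyset$, the sum is locally finite and one obtains the vague limit $\gamma=\mathrm{dens}(\cL)\,\omega_{g*\widetilde g}$, the autocorrelation of $\omega_h$ with respect to $(A_n)_n$.

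For the diffraction, $g*\widetilde g\in KL(H)$ and the lattice PSF (Theorem~\ref{PSFL-I}) supplies hypothesis (i) of Theorem~\ref{equiv PSF -diff-dens}, so $\omega_{g*\widetilde g}$ is transformable with $\widehat{\omega_{g*\widetilde g}}=\mathrm{dens}(\cL)\,\omega_{\widecheck{g*\widetilde g}}=\mathrm{dens}(\cL)\,\omega_{|\widecheck g|^2}$, using $\widecheck{g*\widetilde g}=\widecheck g\cdot\overline{\widecheck g}$. By linearity of the Fourier transform $\widehat\gamma=\mathrm{dens}(\cL)^2\,\omega_{|\widecheck g|^2}=\mathrm{dens}(\cL)^2\sum_{(\chi,\chi')\in\cL_0}|\widecheck g(\chi')|^2\,\delta_\chi$, which is a pure point measure, so $\omega_h$ is pure point diffractive. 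The main obstacle is the second paragraph: promoting the single maximal density hypothesis for $W$ to maximal density of all the intersection windows $\overline W\cap(z'+\overline W)$ that govern the pair correlations. The remainder is the bookkeeping of Proposition~\ref{thm:ac} and an application of Theorem~\ref{equiv PSF -diff-dens}.
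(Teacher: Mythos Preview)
Your proposal is correct and matches the approach the paper describes: the paper does not give its own proof but refers to \cite{Nicu14}, noting that the autocorrelation identity $\gamma=\mathrm{dens}(\cL)\cdot\omega_{g*\widetilde g}$ is obtained by a computation ``similar in idea but more technical than the proof of'' Proposition~\ref{thm:ac}, after which the diffraction formula follows from the PSF machinery (the paper cites Theorem~\ref{double-trans}, you use the equivalent Theorem~\ref{equiv PSF -diff-dens}~(ii), both applying since $g*\widetilde g\in KL(H)$). Your second paragraph---promoting the single maximal-density hypothesis to the translated and intersected compact windows via the general upper bound, lattice-translation invariance, and inclusion--exclusion---is exactly the ``more technical'' ingredient needed to make the Proposition~\ref{thm:ac} computation go through for the non-Riemann-integrable weight $1_W$.
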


 The proof of the equality $\gamma=\mathrm{dens}(\cL)\cdot \omega_{g*\widetilde{g}}$ in \cite{Nicu14} is done by a computation which is similar in idea but more technical than the proof of Theorem~\ref{thm:ac}. The diffraction formula follows then from Theorem \ref{double-trans}, and hence can be seen as a consequence of PSF.

\begin{remark}
The above result reduces to the diffraction formula for regular model sets, since any regular model set has maximal density  \cite[Prop.~3.4]{HR15}. For a model set with window satisfying $\theta_H(\partial W)>0$, its diffraction spectrum may contain a non-trivial continuous component. In that case lack of maximal density may be interpreted as introducing some randomness into the system. Note however that the maximal density condition is not necessary for pure point diffraction. For example one may take a window with empty interior and consider a shift of the window which has empty intersection with the projected lattice. The existence of such a shift is seen by a Baire argument, compare e.g.~\cite{BMS} or \cite[Prop.~2.12]{HR15}. This will result in an empty weak model set, which is pure point diffractive.
\end{remark}

\subsection{Bragg peaks in Meyer sets}

Next, let $\Lambda\subset G$ be a Meyer set. Since any autocorrelation $\gamma$ of $\Lambda$ is positive definite, it is weakly almost periodic \cite[Sec.~11]{MoSt} and admits an Eberlein decomposition \cite[Eqn.~(8.28)]{ARMA}. Let $\gamma_S$ denote the strongly almost periodic part of $\gamma$. Then $(\widehat\gamma)_{pp}=\widehat{\gamma_S}$, see \cite[Sec.~10]{MoSt}. Since $\Lambda$ is a Meyer set, there exists a cut-and-project scheme $(G, H, \mathcal{L})$ and a positive and positive definite $h \in C_c(H)$ such that $\gamma_S=\omega_h$, see \cite[Prop.~12.1]{NS11}. Since $h$ is positive and positive definite,  $\widehat{h}$ is a finite measure and thus $\widehat{h} \in L^1(\widehat{H})$. Hence $h \in PK(H)$, and thus  Theorem~\ref{equiv PSF -diff-dens} yields an alternative proof of the following result.

\begin{theorem}\cite[Thm.~12.2]{NS11} \label{diff Meyer sets} Let $\Lambda\subset G$ be a Meyer set with autocorrelation $\gamma$, with $\sigma$-compact $G$. Then there exists a cut-and-project scheme  $(G , H, \cL)$ such that
\begin{displaymath}
\gamma_S=\omega_h, \qquad
(\widehat{\gamma})_{pp}= \omega_{\widecheck{h}}
\end{displaymath}
for some $h\in PK(H)$.
\qed
\end{theorem}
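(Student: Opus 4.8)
The statement to be proved is Theorem~\ref{diff Meyer sets}: given a Meyer set $\Lambda\subset G$ with autocorrelation $\gamma$ and $\sigma$-compact $G$, there exists a cut-and-project scheme $(G,H,\cL)$ such that $\gamma_S=\omega_h$ and $(\widehat\gamma)_{pp}=\omega_{\widecheck h}$ for some $h\in PK(H)$.
As the excerpt already indicates, the plan is to assemble known structural facts about autocorrelations of Meyer sets and then feed the resulting weighted model set into the generalised PSF established in Theorem~\ref{equiv PSF -diff-dens}. First I would recall that any autocorrelation $\gamma$ of $\Lambda$ is positive definite, hence transformable by Bochner's theorem (Remark~\ref{rem:ftex}), and moreover weakly almost periodic, so that it admits the Eberlein decomposition $\gamma=\gamma_S+\gamma_0$ into a strongly almost periodic part $\gamma_S$ and a null-weakly almost periodic part $\gamma_0$; on the Fourier side this gives $(\widehat\gamma)_{pp}=\widehat{\gamma_S}$. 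These facts are cited from \cite{MoSt} and \cite{ARMA} and may be used directly.

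\textbf{Key steps.}
The second step is the structural input for $\gamma_S$: since $\Lambda$ is a Meyer set, by \cite[Prop.~12.1]{NS11} there exist a cut-and-project scheme $(G,H,\cL)$ and a positive, positive definite function $h\in C_c(H)$ with $\gamma_S=\omega_h$. This is exactly the scheme claimed in the theorem.
The third step is to verify that $h$ lies in the test-function space $PK(H)$. Because $h$ is continuous, compactly supported and positive definite, Bochner's theorem supplies a \emph{finite} representing measure on $\widehat H$, and by the uniqueness of the Fourier transform (as in the proof of Lemma~\ref{lm0}) this measure equals $\widehat h\cdot\theta_{\widehat H}$; finiteness then forces $\widehat h\in L^1(\widehat H)$, i.e.\ $h\in KL(H)$, and together with positive definiteness $h\in PK(H)$. (Alternatively one just cites Lemma~\ref{lm0} directly.)
The fourth and final step is to identify $\widehat{\gamma_S}$. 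Since $h\in PK(H)\subset KL(H)$ and the lattice $\cL$ is Fourier transformable (every lattice Dirac comb is positive definite, hence satisfies its PSF, so condition (i) of Theorem~\ref{equiv PSF -diff-dens} holds), Theorem~\ref{equiv PSF -diff-dens}~(ii) applies to $\omega_h$ and yields $\widehat{\omega_h}=\mathrm{dens}(\cL)\cdot\omega_{\widecheck h}$. Absorbing the constant $\mathrm{dens}(\cL)$ into $h$ (replace $h$ by $\mathrm{dens}(\cL)\cdot h$, which stays in $PK(H)$ and still represents $\gamma_S$) gives $(\widehat\gamma)_{pp}=\widehat{\gamma_S}=\omega_{\widecheck h}$, completing the proof.

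\textbf{Main obstacle.}
There is no serious analytic difficulty here: the theorem is essentially a bookkeeping exercise that re-derives \cite[Thm.~12.2]{NS11} by routing the computation of $\widehat{\gamma_S}$ through the generalised PSF rather than through an ad hoc argument. The only points requiring a little care are (a) confirming that the cut-and-project scheme and the function $h$ produced by \cite[Prop.~12.1]{NS11} satisfy precisely the hypotheses of Theorem~\ref{equiv PSF -diff-dens}~(ii) — in particular that $h\in KL(H)$, which is Lemma~\ref{lm0} — and (b) tracking the density constant $\mathrm{dens}(\cL)$ so that the final identity is stated in the normalisation used in the theorem. Thus the ``hard part'' is really just aligning conventions; the substantive work has already been done in establishing Theorem~\ref{equiv PSF -diff-dens} and in the cited weak-almost-periodicity theory.
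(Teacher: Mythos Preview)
Your proposal is correct and follows essentially the same route as the paper: positive definiteness of $\gamma$ gives weak almost periodicity and the Eberlein decomposition with $(\widehat\gamma)_{pp}=\widehat{\gamma_S}$; then \cite[Prop.~12.1]{NS11} supplies the cut-and-project scheme and the positive, positive definite $h\in C_c(H)$ with $\gamma_S=\omega_h$; Lemma~\ref{lm0} (or Bochner directly) yields $h\in PK(H)$; and Theorem~\ref{equiv PSF -diff-dens} computes $\widehat{\omega_h}$. Your explicit handling of the constant $\mathrm{dens}(\cL)$ by absorbing it into $h$ is a detail the paper leaves implicit, but otherwise the arguments coincide.
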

\begin{remark}
Hence the formula for the pure point part of the diffraction of an arbitrary Meyer set $\Lambda$ is a consequence of the PSF for some cut-and-project scheme in which $\Lambda$ is a subset of a model set. We note that the above arguments can also be applied to the pure point part of the diffraction of an arbitrary weighted Dirac comb with Meyer set support, which reproves \cite[Prop.~12.1]{NS11} using the PSF.
\end{remark}

\section*{Acknowledgment}

CR would like to cordially thank the Department of Mathematics and Statistics of MacEwan University for hospitality during stays in 2013 and 2015. We also thank Volker Strehl for inspiring discussions on the PSF and Christoph Schumacher for a very careful reading of the manuscript. Part of this work was supported by a RASCAF grant from MacEwan University, and part of the work was supported by NSERC with the grant number 03762-2014, and the authors are grateful for the support.

\end{document}